\newcommand{\timestamp}{
{\protect\small\sl\today\ --
  \ifnum\timehh<10 0\fi\number\timehh\,:\,
  \ifnum\timemm<10 0\fi\number\timemm}}
\newif\ifcomment
\definecolor{darkbrown}{cmyk}{.3,.75,.75,.15}
\newcommand{\via}[1]{}
\newcommand{\SKIP}[1]{}
\newcommand{\lG}{\ensuremath{\lambda^{\mathsf{Gtz}}}}
\newcommand{\Rcl}{\Lambda_{\circledR}}
\newcommand{\rcl}{\lambda_{\circledR}}
\newcommand{\llxr}{\lambda \mathsf{lxr}}
\newcommand{\rclsub}{\lambda_{\circledR}^{`[/]}}
\newcommand{\Rclsub}{\Lambda_{\circledR}^{`[/]}}
\newcommand{\Fvsub}{Fv^{`[/]}}
\newcommand{\vdashsub}{\vdash^{`[/]}}
\newcommand{\rclsubred}{"-{^{`[/]}}!!>"}
\newcommand{\rclsubredc}{"-{^{`[/]}}!!>!!>"}
\newcommand{\rightarrowc}{\rightarrow\!\!\!\!\!\!\!\rightarrow}
\newcommand{\dblsl}{/\hspace*{-3.5pt}/\hspace*{-6.3pt}/} 
\newcommand{\ISUB}[2]{|\hspace*{-3pt}|\hspace*{-3pt}|\hspace*{-3pt}[#1{\dblsl}#2]\hspace*{-3pt}|\hspace*{-3pt}|\hspace*{-3pt}|} 
\newcommand{\ISUBM}[4]{|\hspace*{-3pt}|\hspace*{-3pt}|\hspace*{-3pt}[#1{\dblsl}#2, \ldots, #3{\dblsl}#4]\hspace*{-3pt}|\hspace*{-3pt}|\hspace*{-3pt}|} 
\newcommand{\subnf}{\downarrow^{`[/]}}
\newcommand{\multsbt}{\mathcal{M}\textsl{ul}}
\newcommand{\isub}[2]{[#1/#2]}   
\newcommand{\weak}[2]{#1 \odot #2}
\newcommand{\cont}[4]{#1 < ^{#2}_{#3}#4}
\newcommand{\bnorm}[1]{|\!|#1|\!|_{`[/]}}
\newcommand{\cntx}{\mathcal{C}}
\newcommand{\isubs}[2]{#1/#2}   
\newcommand{\tlam}{{Types}} 
\newcommand{\lefti}{[ \! [}      
\newcommand{\righti}{] \! ]}
\newcommand{\ti}[1]{\lefti #1 \righti}
\newcommand{\tei}{\ti} 
\newcommand{\NF}{\ensuremath{\mathcal{NF}}} 
\newcommand{\SN}{\ensuremath{\mathcal{SN}}} 
\newcommand{\VAR}{\textsf{VAR}}
\newcommand{\RED}{\textsf{RED}}
\newcommand{\SAT}{\textsf{EXP}}  
\newcommand{\WEAK}{\textsf{THIN}}
\newcommand{\CONT}{\textsf{CONT}}
\newcommand{\vX}{\mathcal{X}} 
\newcommand{\vM}{\mathcal{M}} 
\newcommand{\vN}{\mathcal{N}} 
\newcommand{\tA}{\alpha} 
\newcommand{\tB}{\beta} 
\newcommand{\tC}{\gamma} 
\newcommand{\tS}{\sigma} 
\newcommand{\tT}{\tau} 
\newcommand{\tR}{\rho} 
\newcommand{\tU}{\upsilon}
\newcommand{\fsto}{\xymatrix@C=15pt{\ar@{>}[r] &}}
\newcommand{\LR}{\Lambda_{\circledR}}
\newcommand{\dztop}{\Delta_{0}^{\top}}
\newcommand{\dztopp}{{\Delta^{'}_{0}}^{\top}}
\newcommand{\dztoppp}{{\Delta^{''}_{0}}^{\top}}
\newcommand{\gztopone}{\Gamma_{0}^{1\top}}
\newcommand{\gztopn}{\Gamma_{0}^{n\top}}
\newcommand{\gtop}{\Gamma^{\top}}
\newcommand{\rsc}[1]{[ #1 ]_{rc}}
\newcommand{\IntR}[1]{[ #1 ]_{\circledR}}
\theoremstyle{plain}
\newtheorem{theorem}{Theorem}
\newtheorem{proposition}[theorem]{Proposition}
\newtheorem{lemma}[theorem]{Lemma}
\newtheorem{example}[theorem]{Example}
\newtheorem{definition}[theorem]{Definition}
 \newif\ifcomment
\begin{document}


 \title{Resource control and intersection types:\\ an intrinsic connection}

 \author[1]{S. Ghilezan} 
 \author[1]{J.~Iveti\' c}  
 \author[2]{P. Lescanne} 
 \author[3]{S. Likavec} 

 \affil[1]{University of Novi Sad, Faculty of Technical Sciences,  Serbia}
 \affil[2]{University of Lyon, \' Ecole Normal Sup\' erieure de Lyon, France}
 \affil[3]{Dipartimento di Informatica, Universit\`a di Torino, Italy}

  \date{\today}

\maketitle

 \begin{abstract}
 In this paper we investigate the $\rcl$-calculus, a $\lambda$-calculus enriched with resource control.
 Explicit control of resources is enabled by the presence of erasure and duplication operators, which correspond to thinning and contraction rules in the type assignment system.
 We introduce directly the class of $\rcl$-terms and we provide
 a new treatment of substitution by its decomposition into atomic steps. 
 We propose an intersection type assignment system for $\rcl$-calculus which makes a clear  correspondence between three roles of variables and three kinds of intersection types. 
 Finally, we provide the characterisation of strong normalisation in $\rcl$-calculus by means of an intersection type assignment system. This process uses typeability of normal forms, redex subject expansion and reducibility method.
 \end{abstract}

 \noindent \textbf{Keywords:} lambda calculus \quad resource control \quad intersection types \quad strong normalisation \quad typeability


 \section*{Introduction}
 \label{sec:intro}
 
The notion of resource awareness and control has gained an important role both in theoretical and applicative domains: in logic and lambda calculus 
 as well as in programming langugages and compiler design.
 The idea to control the use of formulae is present in Gentzen's structural rules (\cite{gent35}), whereas the idea to control the use of variables can be traced  back to Church's $\lambda I$-calculus (e.g. ~\cite{bare84}).
The augmented ability to control the number and order of uses of operations and objects has a wide range of applications which enables, among others, compiler optimisations and  memory management that prevents memory leaking (e.g. ~\cite{walk05}).

In this paper, we investigate the control of resources in the \mbox{$\lambda$-calculus}. We propose the $\rcl$-calculus, a \mbox{$\lambda$-calculus} enriched with resource control operators.
The explicit control of resources is enabled by the presence of \emph{erasure} and \emph{duplication} operators, which correspond to thinning and contraction rules in the type assignment system. Erasure is the operation that indicates that a variable is not present in the term anymore, whereas duplication indicates that a variable will have two occurrences in the term which receive specific names to preserve the ``linearity'' of the term. Indeed,
 in order to control all resources, in the spirit of the $\lambda
I$-calculus (see e.g.~\cite{bare84}), void lambda abstractions are
not acceptable, so in order to have $\lambda x.M$ well-formed the variable~$x$
has to occur in~$M$. But if $x$ is not used in the term $M$, one must
perform an \emph{erasure} by using the
expression $\weak{x}{M}$. In this way, the term $M$ does not
contain the variable~$x$, but the term $\weak{x}{M}$ does.
Similarly, a variable should not occur twice. If nevertheless, we
want to have two positions for the same variable, we have to
duplicate it explicitly, using fresh names. This is done by using
the operator $\cont{x}{x_1}{x_2}{M}$, called \emph{duplication} which creates two fresh variables $x_1$ and
$x_2$.

\paragraph{Outline of the paper}
We first introduce the syntax and reduction rules of the $\rcl$-calculus (Section~\ref{sec:syntax}).
We then introduce intersection types into the $\rcl$-calculus (Section~\ref{sec:types}). Finally, by means of intersection types, we completely caracterise strong normalisation in $\rcl$ (Section~\ref{sec:typeSN}).

\paragraph{Section~\ref{sec:syntax}}
We first introduce the syntax and reduction rules of the $\rcl$-calculus. Explicit control of erasure and duplication leads to decomposition of
reduction steps into more atomic steps, thus revealing the
details of computation which are usually left implicit. Since
erasing and duplicating of (sub)terms essentially changes the
structure of a program, it is important to see how this mechanism
really works and to be able to control this part of computation.
We chose a direct approach to term calculi rather than taking a
more common path through linear logic \cite{abra93,bent93}.

Although the design of our calculus  has been 
motivated by theoretical considerations, 
it may have practical implications as well. Indeed, for instance in the description of compilers by rules
with binders~\cite{rose:LIPIcs:2011:3130,rose11:_implem_trick_that_make_crsx_tick},
the implementation of substitutions of linear variables by
inlining\footnote{\emph{Inlining} is the technique which
consists in copying at compile time the text of a function instead of implementing a call to that function.} is simple and efficient when substitution of duplicated
variables requires the cumbersome and time consuming mechanism of
pointers and it is therefore important to tightly control
duplication. On the other hand, a precise control of erasing does
not require a garbage collector and prevents memory leaking.

\paragraph{Section~\ref{sec:types}}
Intersection types were introduced in~\cite{coppdeza78,coppdeza80,pott80,sall78} to overcome the limitations of the simple type discipline in which the only forming operator is an
arrow~$\rightarrow$.
The newly obtained intersection type assignment systems enable a
complete characterisation of termination of term
calculi~\cite{bake92,gall98,ghil96}. 
Later on, intersection types became a powerful tool for characterising strong normalisation in different calculi~\cite{dougghillesc07,kikuchiRTA07,matthes2000,neer05}.

We propose an intersection type assignment system $\rcl\cap$
that integrates intersection into logical rules, thus preserving syntax-directedness of the system. We assign a restricted form of intersection types to terms, namely strict types, therefore minimizing the need for pre-order on types.

Intersection types in the presence of resource control operators were firstly introduced in~\cite{ghilivetlikalesc11}, where two systems with idempotent intersection were proposed. Later, non-idempotent intersection types for contraction and weakening are treated in \cite{bernleng13}. In this paper, we treat a general form of intersection without any assumptions about idempotence. As a consequence, our intersection type system can be considered both as idempotent or as non-idempotent, both options having their benefits depending on the motivation.

Intersection types fit naturally with resource control. Indeed, the control allows us to consider three roles of variables: variables as placeholders (the traditional view of $`l$-calculus), variables to be duplicated and variables to be erased because they are irrelevant. For each kind of a variable, there is a kind of type associated to it, namely a strict type for a \emph{placeholder}, an intersection type for a variable \emph{to-be-duplicated}, and a specific type $\top$ for an \emph{erased} variable.

\paragraph{Section~\ref{sec:typeSN}}
By the means of the introduced intersection type assignment system $\rcl\cap$, we manage to completely characterise strong normalisation in  $\rcl$, i.e.\
we prove that
terms in the  $\rcl$-calculus enjoy strong normalisation if and only if they are typeable in $\rcl\cap$. First, we prove that all strongly normalising terms are typeable in the $\rcl$-calculus by using typeability of normal forms and redex subject expansion. We then prove that terms typeable in $\rcl$-calculus are strongly normalising by adapting the reducibility method for explicit resource control operators.



\paragraph{Main contributions}

The main contributions of this paper are:
\begin{itemize}
\item[(i)] an improved presentation of resource control lambda calculus syntax  with a direct definition of the syntax of resource control terms.  Other presentations define first an unconstrainted syntax of terms with duplication and erasure which is later restricted to linear terms;
\item[(ii)] a new treatment of substitution and its decomposition into more atomic steps;
\item[(iii)] an intersection type assignment system for resource control lambda calculus which makes explicit the intrinsic correspondence between three kinds of variables and three kinds of intersection types;
\item[(iv)] a characterisation of strong normalisation in $\rcl$-calculus by means of an intersection type assignment system, by using typeability of normal forms, redex subject expansion and reducibility.
\end{itemize}

\tableofcontents

\section{Resource control lambda calculus $\rcl$}
\label{sec:syntax}


The \emph{resource control} lambda calculus, $\rcl$, is an extension of the
$\lambda$-calculus with explicit erasure and duplication.

\subsection{Syntax}

Terms and lists, respectively sets, of free variables in $\rcl$ are mutually recursively defined.

\begin{definition}\label{def:rcl-terms}
\rule{0in}{0in}
\begin{enumerate}
\item[(i)] The set of $\rcl$-terms, denoted by $\LR$, is defined by inference rules given in Figure~\ref{fig:wf}.
\item[(ii)] The list of free variables of a term $M$, denoted by $Fv[M]$, is defined by inference rules given in Figure~\ref{fig:freevar}.
\item[(iii)] The set of free variables of a term $M$, denoted by $Fv(M)$, is obtained from the list $Fv[M]$  by unordering.
\end{enumerate}
\end{definition}

\begin{figure}[htpb]
\centerline{ \framebox{ $
    \begin{array}{c}
    \\
    \infer[(var)]{x \in \LR}{}
    \\\\
      \begin{array}{c@{\quad\quad}c}
        \infer[(abs)]{\lambda x.M \in \LR}
        {M \in \LR \;\; x \in Fv(M)} &
        \infer[(app)]{MN \in \LR}
        {M\in \LR\;\; N \in \LR \;\; Fv(M) \cap Fv(N) = \emptyset}
      \end{array}
      \\\\
      \begin{array}{c} 
      \infer[(era)]{\weak{x}{M} \in \LR}
      {M \in \LR \;\; x \notin Fv(M)} 
      \end{array}
      \\\\
      \begin{array}{c} 
      \infer[(dup)]{\cont{x}{x_1}{x_2}{M} \in \LR}
      {M \in \LR \;\;\;  x_1,x_2 \in Fv(M) \;\;\; x_{1} \neq x_{2}\;\;\; x \notin Fv(M) \setminus \{x_{1}, x_{2}\}}
    \end{array}\\ \\
  \end{array}
$ }} \caption{$\LR$: the set of $\rcl$-terms} \label{fig:wf}
\end{figure}

\begin{figure}[htpb]
\centerline{ \framebox{ $
    \begin{array}{c}
    \\
    \begin{array}{c@{\qquad\qquad}c}
        \infer[]{Fv[x]=[x]}{} &
        \infer[]{Fv[\lambda x_i.M] = [x_1,x_2,...x_{i-1},x_{i+1},...,x_m] }
        {Fv[M]=[x_1,x_2,...,x_m]}
    \end{array}
    \\ \\
    \begin{array}{c@{\quad\quad}c}
        \infer[]{Fv[MN] = [x_1,...,x_m,y_1,...,y_n]}
        {Fv[M] = [x_1,...,x_m] \;\;Fv[N] = [y_1,...,y_n]} &
        \infer[]{Fv[\weak{x}{M}] = [x, x_1,...,x_m] }
      {Fv[M] = [x_1,...,x_m]}
    \end{array}
    \\ \\
    \begin{array}{c}
    \infer[]{Fv[\cont{x}{x_i}{x_j}{M}] = [x,x_1,...x_{i-1},x_{i+1},......x_{j-1},x_{j+1},...,x_m]}
      {Fv[M] = [x_1,...,x_m]}
    \end{array}\\ \\
  \end{array}
$ }} \caption{List of free variables of a $\rcl$-term} \label{fig:freevar}
\end{figure}

A $\rcl$-term, ranged over by $M,N,P,...,M_1,...$, can be a variable from an enumerable set $\LR$ (ranged over by $x,y,z,x_1, \ldots$), 
an abstraction, an application, an erasure or a duplication.
The duplication $\cont{x}{x_1}{x_2}{M}$ binds the variables
$x_1$ and $x_2$ in $M$ and introduces a free variable $x$. The
erasure $\weak {x}{M}$ introduces also a free variable $x$.
In order to avoid parentheses, we let the scope of all binders extend
to the right as much as possible.

Informally, we say that a term is an expression in which every free variable occurs
exactly once, and every binder binds (exactly one occurrence of) a free variable. Our
notion of terms corresponds to the notion of linear terms in \cite{kesnleng07}. In
that sense, only linear expressions are in the focus of our investigation. In other
words, a term is well-formed in $\rcl$ if and only if bound variables appear actually
in the term and variables occur at most once. This assumption is not a restriction, since every 
pure $\lambda$-term has a
corresponding $\rcl$-term and vice versa, due to the embeddings  
given in Definition~\ref{def:adding-res} and ~\ref{def:removing-res}
and illustrated by Example~\ref{exa:adding-res}.

\begin{definition}\label{def:adding-res}
The mapping $\rsc{\;\;}:\Lambda\;\to\;\LR$ is
defined in the following way:
$$
\begin{array}{rcl}
\rsc{x} & = & x\\
\rsc{\lambda x.t} & = & \left\{
                        \begin{array}{ll}
                        \lambda x. \rsc{t}, & x \in Fv(t)\\
                        \lambda x. \weak{x}{\rsc{t}}, & x \notin Fv(t)
                        \end{array}
                        \right.\\
\rsc{MN} & = & \left\{
               \begin{array}{ll}
               \rsc{t}\rsc{s}, & Fv(t) \cap Fv(s) = \emptyset\\
               \cont{x}{x_1}{x_2}{\rsc{t\isub{x_1}{x}s\isub{x_2}{x}}}, & x \in Fv(t) \cap Fv(s)
               \end{array}
               \right.
\end{array}
$$
\end{definition}

Reciprocally, a $\rcl$-term has a corresponding $`l$-term.
  \begin{definition}\label{def:removing-res}
    The mapping $\IntR{\;\;}:\LR\;\to\;\Lambda$ is defined in the following way:
    \begin{eqnarray*}
      \IntR{x} &=& x\\
      \IntR{`l x.M}&=& `l x.\IntR{M}\\
      \IntR{M\,N}&=& \IntR{M}\,\IntR{N}\\
      \IntR{\cont{x}{x_1}{x_2}{M}} &=& \IntR{M}\isub{x}{x_1}\isub{x}{x_2}\\
      \IntR{\weak{x}{M}} &=&  \IntR{M}
    \end{eqnarray*}
  \end{definition}
  
\begin{proposition}
\rule{0in}{0in}
\begin{itemize}
\item[(i)] For each pure lambda term $t \in \Lambda$ there is a term $M \in \LR$ such that $\rsc{t} = M$.
\item[(ii)] For each resource lambda term $M \in \LR$ there is a term $t \in \Lambda$ such that $\IntR{M} = t$.
\end{itemize}
\end{proposition}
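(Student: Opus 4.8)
The plan is to read each clause of the proposition as the assertion that the corresponding translation is a total function landing in the stated codomain: part~(ii) says that $\IntR{\cdot}$ is defined on every $M\in\LR$ and returns a genuine pure $`l$-term, and part~(i) says that $\rsc{\cdot}$ is defined on every $t\in\Lambda$ and returns a genuine element of $\LR$, i.e.\ a term satisfying all the side conditions of the rules in Figure~\ref{fig:wf}. Both are proved by induction, and throughout I would carry the auxiliary invariant that the translations preserve free variables, namely $Fv(\rsc{t})=Fv(t)$ (and $Fv(\IntR{M})=Fv(M)$ read as a set). This invariant is exactly what is needed to match the hypotheses of $(abs)$, $(app)$, $(era)$ and $(dup)$ against the free-variable bookkeeping of Figure~\ref{fig:freevar}, so I would prove it simultaneously with the main claim rather than separately.

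I would dispatch part~(ii) first, as it is the routine direction. Proceed by structural induction on the derivation of $M\in\LR$. Each defining equation of Definition~\ref{def:removing-res} sends a constructor of $\LR$ to a pure $`l$-term: the variable, abstraction and application clauses are immediate from the induction hypothesis; the erasure clause merely forgets the operator, $\IntR{\weak{x}{M}}=\IntR{M}$, which is a $`l$-term by hypothesis; and the duplication clause applies two variable-for-variable substitutions, $\IntR{M}\isub{x}{x_1}\isub{x}{x_2}$, and such renamings keep us inside $\Lambda$. Hence $\IntR{M}$ is always defined and lies in $\Lambda$.

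For part~(i) I would argue by induction on $t$, but with a subsidiary inner induction in the application case, since the clause for $\rsc{MN}$ recurses on a term of the same size when the operands share a variable. The variable case is immediate by $(var)$. For $\lambda x.t$: if $x\in Fv(t)$ then the invariant gives $x\in Fv(\rsc{t})$ and $(abs)$ applies directly; if $x\notin Fv(t)$, then $(era)$ first makes $\weak{x}{\rsc{t}}\in\LR$ (legal because $x\notin Fv(\rsc{t})=Fv(t)$) and reintroduces $x$ as a free variable, so $(abs)$ applies to $\lambda x.\weak{x}{\rsc{t}}$. For $ts$ with $Fv(t)\cap Fv(s)=\emptyset$, the invariant yields disjoint free-variable sets for $\rsc{t}$ and $\rsc{s}$, so $(app)$ applies. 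If instead $x\in Fv(t)\cap Fv(s)$, pick fresh $x_1,x_2$, rename to obtain $(t\isub{x_1}{x})(s\isub{x_2}{x})$, observe that this application has strictly fewer shared free variables than $ts$, and invoke the inner induction hypothesis; the result is an element of $\LR$ whose free-variable set contains $x_1,x_2$ but not $x$, so $(dup)$ applies to build $\cont{x}{x_1}{x_2}{\rsc{(t\isub{x_1}{x})(s\isub{x_2}{x})}}$ and the invariant is restored.

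The crux is precisely this application case: I must justify that the recursion terminates and that the side conditions of $(dup)$ hold. Termination follows from a lexicographic measure $(|t|,\,|Fv(t)\cap Fv(s)|)$ on the pair consisting of term size and the number of shared free variables: renaming one shared variable leaves the size unchanged but strictly decreases the second component, the disjoint subcase strictly decreases the first, and every other clause recurses on proper subterms. The freshness choice of $x_1,x_2$ together with the preserved invariant $Fv(\rsc{u})=Fv(u)$ is what guarantees $x_1,x_2\in Fv(\rsc{(t\isub{x_1}{x})(s\isub{x_2}{x})})$, $x_1\neq x_2$, and $x\notin Fv(\cdots)\setminus\{x_1,x_2\}$, i.e.\ all hypotheses of $(dup)$. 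Since the proposition asserts only the existence of an image term, I do not need to check independence of the choice of shared variable processed at each step, which sidesteps any confluence-style obligation.
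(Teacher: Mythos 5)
Your argument is correct, but there is nothing in the paper to compare it against: the proposition is stated immediately after Definitions~\ref{def:adding-res} and~\ref{def:removing-res} with no proof at all (the authors evidently regard totality of the two translations as immediate, and pass directly to Example~\ref{exa:adding-res}). Your proposal supplies exactly the details the paper leaves implicit, and it does so soundly. Part~(ii) is indeed routine structural induction. For part~(i) you correctly isolate the only genuinely non-obvious point, namely that the clause $\rsc{ts}=\cont{x}{x_1}{x_2}{\rsc{t\isub{x_1}{x}s\isub{x_2}{x}}}$ recurses on a term of the \emph{same} size, so that plain structural induction does not apply and a lexicographic measure $(|t|,\,|Fv(t)\cap Fv(s)|)$ is needed; and you correctly identify the invariant $Fv(\rsc{t})=Fv(t)$ as the fact that discharges the side conditions of $(abs)$, $(era)$, $(app)$ and $(dup)$ in Figure~\ref{fig:wf} (in particular $x_1,x_2\in Fv(\cdot)$, $x_1\neq x_2$ and $x\notin Fv(\cdot)\setminus\{x_1,x_2\}$ for $(dup)$). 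One could quibble that your measure's second component is only meaningful at application nodes, so strictly one should phrase it as a global measure (e.g.\ size paired with the number of root-level shared variables, with the convention that the latter is $0$ for non-applications), but that is cosmetic. In short: the proof is right, and it is more than the paper offers.
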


\begin{example}\label{exa:adding-res}
  Pure $\lambda$-terms $\lambda x.y$ and $\lambda x. xx$ are not $\rcl$-terms, whereas $\rsc{\lambda x.y}=\lambda x.(\weak{x}{y})$ and $\rsc{\lambda x. xx}=\lambda
  x.\cont{x}{x_1}{x_2}(x_1 x_2)$ are both $\rcl$-terms.

\medskip

  \begin{small}
    \begin{displaymath}
      \prooftree 
      \prooftree 
      \prooftree 
      \justifies y`:\Rcl \quad
      \using (var)
      \endprooftree 
      x \notin Fv(y)
      \justifies \weak{x}{y}`:\Rcl
      \using (era)
      \endprooftree 
      \quad
      x \in Fv(\weak{x}{y})
      \justifies `l x. \weak{x}{y}`:\Rcl
      \using (abs)
      \endprooftree 
    \end{displaymath}
    \begin{displaymath}
      \prooftree 
      \prooftree 
      \prooftree 
      \vdots
      \justifies x_1 x_2 `:\Rcl
      \endprooftree 
      \quad
      x \notin Fv(x_1 x_2)\setminus \{x_1,x_2\}
      \quad
      x_1,x_2 \in Fv(x_1 x_2)
      \justifies \cont{x}{x_1}{x_2}(x_1 x_2)`:\Rcl
      \using (dup)
      \endprooftree 
      \quad
      x \in Fv(\cont{x}{x_1}{x_2}(x_1 x_2))
      \justifies `l x.\cont{x}{x_1}{x_2}(x_1 x_2)`:\Rcl
      \using (abs)
      \endprooftree 
    \end{displaymath}
  \end{small}

\medskip

\end{example}
In the sequel, we use the following abbreviations:
\begin{enumerate}
\item[$\bullet$] $\weak{x_1}{...\;\weak{x_n}{M}}$ is abbreviated to $\weak{X}{M}$,
  when  $X$ is the list $[x_1,x_2,...,x_n]$;
\item[$\bullet$] $\cont{x_1}{y_1}{z_1}{...\;\cont{x_n}{y_n}{z_n}{M}}$ is abbreviated
  to $\cont{X}{Y}{Z}{M}$ if $X$ is the list $[x_1,x_2,...,x_n]$,\\
 $Y$ is the list $[y_1,y_2,...,y_n]$ and $Z$ is the list $[z_1,z_2,...,z_n]$.
\end{enumerate}

Notice that $X$, $Y$ and $Z$ are lists of equal length.
If $n=0$, i.e.\ if $X$, $Y$ and $Z$ are the empty lists, then $\weak{X}{M} = \cont{X}{Y}{Z}{M} = M$.
Note that later on due to the equivalence relation defined in Figure~\ref{fig:equiv-rcl}, in $\weak{X}{M}$
we can take $X$ to be the set $\{x_1,x_2,...,x_n\}$.

In what follows we use Barendregt's convention~\cite{bare84} for variables: in the
same context a variable cannot be both free and bound. This applies to binders like
$`l x.M$ which binds $x$ in $M$ and $\cont{x}{x_1}{x_2}M$ which binds $x_1$ and $x_2$ in
$M$.
\subsection{Substitution}
At this point, we chose to introduce a \emph{substitution operator} to define
  \emph{substitution} in $\Rcl$. Due to its interference with the linearity of
  terms and its slight difference with the standard substitution of the
  $`l$-calculus, the concept of substitution has to be carefully defined in
  the $\rcl$-calculus. For that reason, in Definition~\ref{def:rclsub-terms} we first make precise
  the syntax of $\rclsub$, i.e.\ the language $\rcl$ extended with a substitution operator, by
  providing mutually recursive definitions of $\rclsub$-terms and lists (respectively sets) of free variables (see Figures~\ref{fig:wfsub} and \ref{fig:freevarsub}).

\begin{definition}\label{def:rclsub-terms}
\rule{0in}{0in}
\begin{enumerate}
\item[(i)] The set of $\rclsub$-terms, denoted by $\Rclsub$, is defined by inference rules given in Figure~\ref{fig:wfsub}.
\item[(ii)] The list of free variables of a $\rclsub$-term $M$, denoted by $\Fvsub[M]$, is defined by inference rules given in Figure~\ref{fig:freevarsub}.
\item[(iii)] The set of free variables of a $\rclsub$-term $M$, denoted by $\Fvsub(M)$, is obtained from the list $\Fvsub[M]$  by unordering.
\end{enumerate}
\end{definition}

\begin{figure}[htpb]
\centerline{ \framebox{ $
    \begin{array}{c}
    \\
    \infer[(var)]{x \in \Rclsub}{}
    \\\\
      \begin{array}{c@{\quad\quad}c}
        \infer[(abs)]{\lambda x.M \in \Rclsub}
        {M \in \Rclsub \;\; x \in \Fvsub(M)} &
        \infer[(app)]{MN \in \Rclsub}
        {M\in \Rclsub\;\; N \in \Rclsub \;\; \Fvsub(M) \cap \Fvsub(N) = \emptyset}
      \end{array}
      \\\\
      \begin{array}{c} 
      \infer[(era)]{\weak{x}{M} \in \Rclsub}
      {M \in \Rclsub \;\; x \notin \Fvsub(M)} 
      \end{array}
      \\\\
      \begin{array}{c} 
      \infer[(dup)]{\cont{x}{x_1}{x_2}{M} \in \Rclsub}
      {M \in \Rclsub \;\;\; x \notin \Fvsub(M) \setminus \{x_{1}, x_{2}\} \;\;\; x_1,x_2 \in \Fvsub(M) \;\;\; x_{1} \neq x_{2}}
    \end{array}
    \\\\
    \begin{array}{c}
      \infer[(sub)]{M\isub{N}{x} \in \Rclsub}
        {M \in \Rclsub \quad x \in \Fvsub(M) \quad N \in \Rcl \quad \Fvsub(M)\setminus
          \{x\}\ \cap Fv(N) =\emptyset}
    \end{array}\\\\
  \end{array}
$ }} %
\caption{$\Rclsub$: the set of $\rclsub$-terms} \label{fig:wfsub}
\end{figure}

\begin{figure}[htpb]
\centerline{ \framebox{ $
    \begin{array}{c}
    \\
    \begin{array}{c@{\quad\quad}c}
        \infer[]{\Fvsub[x]=[x]}{} &
        \infer[]{\Fvsub[\lambda x_i.M] = [x_1,x_2,...x_{i-1},x_{i+1},...,x_m] }
        {\Fvsub[M]=[x_1,x_2,...,x_m]}
    \end{array}
    \\ \\
    \begin{array}{c@{\quad\quad}c}
        \infer[]{\Fvsub[MN] = [x_1,...,x_m,y_1,...,y_n]}
        {\Fvsub[M] = [x_1,...,x_m] \;\;\Fvsub[N] = [y_1,...,y_n]} &
        \infer[]{\Fvsub[\weak{x}{M}] = [x, x_1,...,x_m] }
      {\Fvsub[M] = [x_1,...,x_m]}
    \end{array}
    \\ \\
    \begin{array}{c}
    \infer[]{\Fvsub[\cont{x}{x_i}{x_j}{M}] = [x,x_1,...x_{i-1},x_{i+1},......x_{j-1},x_{j+1},...,x_m]}
      {\Fvsub[M] = [x_1,...,x_m]}
    \end{array}\\ \\
    \begin{array}{c}
      \infer[]{\Fvsub[M\isub{N}{x_i}] = [x_1,x_2,...x_{i-1},x_{i+1},...,x_m,y_1,...,y_n]}
      {\Fvsub[M] = [x_1,...,x_m] & Fv[N] = [y_1,...,y_n]}
    \end{array}\\\\
  \end{array}
$ }} %
\caption{List of free variables of a $\rclsub$-term} \label{fig:freevarsub}
\end{figure}

 Notice that the set $\Rcl$ is a strict subset of the set $\Rclsub$, $\Rcl \subset \Rclsub$, and that $N$ in
 $M \isub{N}{x}$ is substitution free, therefore we can write both $\Fvsub(N)$ and $Fv(N)$ for $N$ in $M \isub{N}{x}$. Also, notice that if a term $M$ is substitution free, then $\Fvsub(M)=Fv(M)$. Barendregt's convention applies to the substitution operator as well, where $M[N/x]$ can be seen as a binder for $x$ in $M$. \\ \\

\begin{definition}\label{def:rclsub-eval}\rule{0in}{0in}
\begin{itemize}
\item[(i)]
The evaluation of the substitution operator in the $\rclsub$-term $M \isub{N}{x}$, denoted by $\rclsubred$, is defined by the rules given in Figure~\ref{fig:sub-rcl}. As usual, it is closed under $\alpha$-equivalence and regular contexts.
In the last row in Figure~\ref{fig:sub-rcl}, terms $N_1$ and $N_2$ are obtained from the term $N$ by renaming of its free variables, i.e.\ by substitution of all free variables of $N$ by fresh variables, respectively.
\item[(ii)]
$\rclsubredc$ is the reflexive, transitive closure of $\rclsubred$.
\end{itemize}
\end{definition}




\begin{figure}[ht]
\centerline{ \framebox{ $
\begin{array}{rcl}
x\isub{N}{x} & \rclsubred & N \\
(\lambda y.M)\isub{N}{x} & \rclsubred  & \lambda y.M\isub{N}{x},\;\;x \neq y \\
(MP)\isub{N}{x} & \rclsubred  & M\isub{N}{x} P, \;\;x \in \Fvsub(M) \\
(MP)\isub{N}{x} & \rclsubred  & M P\isub{N}{x}, \;\;x \in \Fvsub(P) \\
(\weak{y}{M})\isub{N}{x} &\rclsubred  & \weak{y}{M\isub{N}{x}}, \;\;x \neq y\\
(\weak{x}{M})\isub{N}{x} & \rclsubred  & \weak{Fv(N)}{M}\\
(\cont{y}{y_1}{y_2}{M})\isub{N}{x} & \rclsubred  &
\cont{y}{y_1}{y_2}{M\isub{N}{x}}, \;\;x \neq y\\
(\cont{x}{x_1}{x_2}{M})\isub{N}{x} & \rclsubred  &
\cont{Fv[N]}{Fv[N_1]}{Fv[N_2]}{M\isub{N_1}{x_1}\isub{N_2}{x_2}}\\
\end{array}
$ }}  %
\caption{Evaluation of the substitution operator in the $\rclsub$-calculus}
\label{fig:sub-rcl}
\end{figure}

For a full understanding of the role of $\rclsub$, we would like to stress 
two facts:
  \begin{itemize}
  \item $\rclsubred$ is the operational definition of the substitution in $\Rcl$.
  \item  $\rclsubred$ is used with a higher priority than the reductions of $\rcl$
    given in Figure~\ref{fig:red-rcl} (because it is used to define substitution in $\Rcl$).
  \end{itemize}

To summarise, we have added a new operator to the syntax of $\rcl$ called
\emph{substitution operator} and denoted by $\isub{\;}{\;}$, and defined the evaluation of the substitution operator, which brings us to
$\rclsub$-calculus.

We prove the following safety property. 
\begin{proposition}\label{prop:subsubject}
\rule{0in}{0in}
  \begin{itemize}
  \item[(i)] If $Q \rclsubredc R$ and $Q\in\Rclsub$,  then  $R\in\Rclsub$.
  \item[(ii)] If $Q \rclsubredc R$ then  $\Fvsub(Q)=\Fvsub(R)$.
  \end{itemize}
\end{proposition}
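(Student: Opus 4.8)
The plan is to first establish both claims for a single step of $\rclsubred$ and then lift them to the reflexive--transitive closure $\rclsubredc$ by a routine induction on the length of the reduction sequence: the reflexive case is immediate, and in the inductive case one chains the single-step facts with the induction hypothesis, using transitivity of set equality for (ii) and of $\Rclsub$-membership for (i). Since $\rclsubred$ is the closure under regular contexts and $\alpha$-equivalence of the eight root rules of Figure~\ref{fig:sub-rcl}, I would prove the single-step statement by induction on the position of the contracted redex. The closure under $\alpha$-equivalence is harmless, as neither $\Fvsub(\cdot)$ nor the formation rules of Figure~\ref{fig:wfsub} distinguish $\alpha$-equivalent terms. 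For the contextual (inductive) cases I would use that the free-variable rules of Figure~\ref{fig:freevarsub} are compositional: the free-variable set of a compound term is a function of the free-variable sets of its immediate subterms, so replacing a subterm by one with the same free variables leaves the whole free-variable set unchanged, which gives (ii); and that the side conditions of the formation rules depend on a subterm only through its free-variable set and its membership in $\Rclsub$ or $\Rcl$, so that (ii) together with (i) at the redex transfers every side condition of the surrounding formation rule verbatim from $Q$ to $R$, which gives (i). Here I would also observe that no redex can occur inside the argument $N$ of a substitution operator, since $N\in\Rcl$ is substitution free; hence the constraint $N\in\Rcl$ of rule $(sub)$ is never disturbed.

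It then remains to treat the eight root rules. For (ii) this is a direct computation from the substitution clause of Figure~\ref{fig:freevarsub}, namely $\Fvsub(M\isub{N}{x})=(\Fvsub(M)\setminus\{x\})\cup Fv(N)$: in every rule the left-hand side deletes $x$ and adds $Fv(N)$, and one checks the right-hand side has the same effect. For instance $x\isub{N}{x}\rclsubred N$ gives $(\{x\}\setminus\{x\})\cup Fv(N)=\Fvsub(N)$, and $(\weak{x}{M})\isub{N}{x}\rclsubred\weak{Fv(N)}{M}$ gives $Fv(M)\cup Fv(N)$ on both sides. For (i) one re-applies the appropriate formation rule, each time extracting the needed disjointness from the richer disjointness hypotheses available because $Q$ is well formed. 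A representative case is $(MP)\isub{N}{x}\rclsubred M\isub{N}{x}\,P$ with $x\in\Fvsub(M)$: from $\Fvsub(M)\cap\Fvsub(P)=\emptyset$ one gets $x\notin\Fvsub(P)$, whence the hypothesis $(\Fvsub(M)\cup\Fvsub(P))\setminus\{x\}\cap Fv(N)=\emptyset$ yields both $(\Fvsub(M)\setminus\{x\})\cap Fv(N)=\emptyset$ (so $M\isub{N}{x}$ is well formed) and $\Fvsub(P)\cap Fv(N)=\emptyset$, which combined with $\Fvsub(M)\cap\Fvsub(P)=\emptyset$ gives $\Fvsub(M\isub{N}{x})\cap\Fvsub(P)=\emptyset$, the side condition of $(app)$.

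The delicate case, and the one I expect to be the main obstacle, is the duplication rule $(\cont{x}{x_1}{x_2}{M})\isub{N}{x}\rclsubred\cont{Fv[N]}{Fv[N_1]}{Fv[N_2]}{M\isub{N_1}{x_1}\isub{N_2}{x_2}}$, where the substitution distributes over a contraction and triggers two renamings $N_1,N_2$ of $N$ with fresh names and a nested contraction over the equal-length lists $Fv[N]$, $Fv[N_1]$, $Fv[N_2]$. The whole argument rests on the freshness of $Fv(N_1)$ and $Fv(N_2)$, which makes them mutually disjoint and disjoint from $\Fvsub(M)$ and from $Fv(N)$. Using $x_1,x_2\in\Fvsub(M)$ and $x_1\neq x_2$, I would check in turn that $M\isub{N_1}{x_1}$ and then $M\isub{N_1}{x_1}\isub{N_2}{x_2}$ satisfy the side conditions of $(sub)$, the disjointness requirements holding precisely because the fresh variables avoid $\Fvsub(M)$ and each other; its free-variable set is $(\Fvsub(M)\setminus\{x_1,x_2\})\cup Fv(N_1)\cup Fv(N_2)$. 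One then peels off the contractions from the inside out: for the layer $\cont{y}{y^1}{y^2}{\cdot}$ coming from $y\in Fv(N)$ one has $y^1\neq y^2$ with $y^1,y^2$ free in the body by construction, while $y\notin Fv(N_1)\cup Fv(N_2)$ by freshness and $y\notin\Fvsub(M)\setminus\{x_1,x_2\}$ by the disjointness $(\Fvsub(M)\setminus\{x_1,x_2\})\cap Fv(N)=\emptyset$ available from the well-formedness of $Q$; hence the $(dup)$ side conditions hold at every layer. The resulting free-variable set is $(\Fvsub(M)\setminus\{x_1,x_2\})\cup Fv(N)$, matching the left-hand side and thereby confirming (ii) as well. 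The erasure rule $(\weak{x}{M})\isub{N}{x}\rclsubred\weak{Fv(N)}{M}$ is the same phenomenon in a degenerate, contraction-free form, relying on $Fv(M)\cap Fv(N)=\emptyset$.
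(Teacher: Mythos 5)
Your proposal is correct and follows essentially the same route as the paper: reduce to a single root-rule step by closure under contexts, then verify membership in $\Rclsub$ and equality of free-variable sets rule by rule, with the same disjointness bookkeeping in the application and erasure cases. You are in fact more thorough than the paper, which treats only those two ``paradigmatic'' rules and omits the duplication case that you work out in full.
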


\begin{proof}
  These properties are preserved by context. Therefore we can restrict our proof to the case
  where $Q$ is the instance of the left-hand side of a rule in
  Figure~\ref{fig:sub-rcl} and consider only one-step reduction $\rclsubred$.
  We consider only two paradigmatic rules.
  \begin{sloppypar}
    \begin{itemize}
    \item $(M\;P)\isub{N}{x} \ \rclsubred M\isub{N}{x}\;P$ with $x`:\Fvsub(M)$.
      \begin{itemize}
      \item We know that $x`:\Fvsub(M)$. Then $(M\;P)\isub{N}{x}`:\Rclsub$ means that
        $M`:\Rclsub$, $P`:\Rclsub$, $\Fvsub(M)\cap\Fvsub(P) {= \emptyset}$,
        $N`:\Rcl$ and ${(\Fvsub(M) \cup\Fvsub(P)) \setminus \{x\}} \cap Fv(N) =
        \emptyset$.  On the other hand, $M\isub{N}{x}\;P`:\Rclsub$ means
        $M`:\Rclsub$, $N`:\Rcl$, $P`:\Rclsub$ and
        $\Fvsub(M\isub{N}{x})\cap\Fvsub(P) = \emptyset$. Since
        $\Fvsub(M)\cap\Fvsub(P) {= \emptyset}$ and $((\Fvsub(M) \cup\Fvsub(P))
        \setminus \{x\}) \cap Fv(N) = \emptyset$, this implies
        $\Fvsub(M\isub{N}{x})\cap\Fvsub(P) = \emptyset$, hence the condition on free
        variables for $M\isub{N}{x}\;P$ is fulfilled.
      \item $\Fvsub((M\;P)\isub{N}{x}) = {\Fvsub(M\;P) \setminus \{x\} \  \cup Fv(N)} =\\
        {(\Fvsub(M) \cup \Fvsub(P)) \setminus \{x\} \ \cup Fv(N)} =\\ {(\Fvsub(M) \cup
          Fv(N)) \setminus \{x\} \ \cup \Fvsub(P)} = {\Fvsub(M\isub{N}{x}\;P)}$.
      \end{itemize}
    \item $(\weak{x}{M})\isub{N}{x} \rclsubred \weak{Fv(N)}{M}$.
      \begin{itemize}
      \item $(\weak{x}{M})\isub{N}{x}`:\Rclsub$ means $M`:\Rclsub$, $x`;\Fvsub(M)$,
        $N`:\Rcl$ and $\Fvsub(M) \cap Fv(N)= \emptyset$. On the other hand,
        $\weak{Fv(N)}{M}`:\Rclsub$ means $M`:\Rclsub$ and $Fv(N)\cap
        \Fvsub(M)=\emptyset$.
      \item $\Fvsub((\weak{x}{M})\isub{N}{x}) = \Fvsub(M)\cup Fv(N) =
        \bigcup_{y`:Fv(N)}\{y\} \cup \Fvsub(M) = \Fvsub(\weak{Fv(N)}{M})$.
      \end{itemize}
    \end{itemize}
  \end{sloppypar}

\end{proof}

Figure~\ref{fig:sub-rcl} defines the evaluation of substitution in $\Rcl$. Indeed,
the reduction $\rclsubredc$ terminates (Proposition~\ref{prop:box-term}) and when it
terminates it yields actually a term in $\Rcl$, i.e.\ there is no more substitution operator in the resulting term (Proposition~\ref{prop:nf}). Therefore, there is no need for defining evaluation of $M\isub{N}{x}$ in case of $M \equiv Q\isub{P}{y}$, because Propositions~\ref{prop:box-term} and \ref{prop:nf} guarantee that $Q\isub{P}{y}$ will be evaluated to some $Q' \in \Rcl$, thus $Q\isub{P}{y}\isub{N}{x} \rclsubredc Q' \isub{N}{x}\rclsubredc Q''$, for some $Q'' \in \Rcl$.

In order to prove normalisation in Proposition~\ref{prop:box-term}, we introduce the following measure.
\begin{definition}
\label{def-measure}
The measure $\bnorm{\cdot}$ on $\rclsub$-terms is defined as follows:
  \begin{eqnarray*}
    \bnorm{x} &=& 1\\
    \bnorm{`l x.M} &=& \bnorm{M} + 1\\
    \bnorm{M\;N} &=& \bnorm{M} + \bnorm{N} + 1 \\
    \bnorm{\weak{x}{M}} &=& \bnorm{M} + 1\\
    \bnorm{\cont{x}{y}{z}{M}}&=& \bnorm{M} +1\\
    \bnorm{M\isub{N}{x}} &=& \bnorm{M}.
  \end{eqnarray*}
\end{definition}

\begin{proposition}  \label{prop:box-term} The reduction $\rclsubredc$ terminates.
\end{proposition}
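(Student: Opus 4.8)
The plan is to show that the one–step relation $\rclsubred$ is strongly normalising; since $\rclsubredc$ is its reflexive–transitive closure, this is exactly what the statement asks. Because $\rclsubred$ is closed under contexts and every redex has a substitution operator at its head, it is enough to prove the closure property \textbf{(*)}: \emph{if $M\in\Rclsub$ is strongly normalising and $N\in\Rcl$ then $M\isub{N}{x}$ is strongly normalising}. From \textbf{(*)} strong normalisation of every $\Rclsub$-term follows by a routine induction on term structure: the constructors for abstraction, application, erasure and duplication put no substitution operator at their head, hence create no redex and preserve strong normalisation of their immediate subterms, while the case $M\isub{N}{x}$ is covered by \textbf{(*)} (note that $N\in\Rcl$ is strongly normalising, being substitution-free).

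The measure $\bnorm{\cdot}$ is the tool for \textbf{(*)}. Its defining feature is that the substitution operator is transparent, $\bnorm{M\isub{N}{x}}=\bnorm{M}$, so it records the size of the body on which a pending substitution acts. With this reading, each propagation clause of Figure~\ref{fig:sub-rcl} (the cases for abstraction, application, $\weak{y}{M}$ and $\cont{y}{y_1}{y_2}{M}$ with $x\neq y$) moves $\isub{N}{x}$ onto a body of strictly smaller $\bnorm{\cdot}$; the clauses $x\isub{N}{x}\rclsubred N$ and $(\weak{x}{M})\isub{N}{x}\rclsubred\weak{Fv(N)}{M}$ simply consume the substitution; and the duplication clause replaces $\isub{N}{x}$ by two substitutions again sitting on bodies of strictly smaller $\bnorm{\cdot}$, since $\bnorm{M\isub{N_1}{x_1}\isub{N_2}{x_2}}=\bnorm{M}<\bnorm{\cont{x}{x_1}{x_2}{M}}$. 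I would therefore prove \textbf{(*)} by induction on $\bnorm{M}$, checking that every one-step reduct of $M\isub{N}{x}$ is strongly normalising: top-level steps either consume the substitution or land on bodies of strictly smaller $\bnorm{\cdot}$ and so are covered by the induction hypothesis, while reductions taking place inside $M$ are controlled by a secondary well-founded argument on the length of a longest reduction out of $M$.

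The main obstacle is the duplication clause $(\cont{x}{x_1}{x_2}{M})\isub{N}{x}\rclsubred\cont{Fv[N]}{Fv[N_1]}{Fv[N_2]}{M\isub{N_1}{x_1}\isub{N_2}{x_2}}$, for two intertwined reasons. First, it spawns two \emph{nested} substitutions, and applying \textbf{(*)} twice forces the outer one, $\isub{N_1}{x_1}$, to act on the enlarged body $M\isub{N_2}{x_2}$ rather than on $M$; here one must use that $N_1,N_2$ are renamings of $N$ with \emph{fresh} free variables, so that the target $x_1$ never lies inside the inserted copy of $N_2$ and the outer substitution never has to traverse it. Second, and relatedly, the clause $x\isub{N}{x}\rclsubred N$ can \emph{increase} $\bnorm{\cdot}$ by exposing $N$, so a reduction inside a body does not by itself decrease $\bnorm{\cdot}$, which is why the secondary induction is needed and must be insensitive to that exposure.

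I expect these two points to be the heart of the argument, and the cleanest way to reconcile them — exploiting precisely the transparency of $\bnorm{\cdot}$ and the freshness of duplicated variables — is to strengthen \textbf{(*)} to a statement about a family $\isub{N_1}{x_1}\cdots\isub{N_k}{x_k}$ of substitutions whose targets are pairwise distinct free variables of one common substitution-free body and whose arguments have pairwise disjoint fresh free variables, proved by induction on the $\bnorm{\cdot}$ of that body. In that generalised form the duplication step no longer enlarges the relevant body but merely increases $k$ while keeping the governing $\bnorm{\cdot}$ strictly below $\bnorm{\cont{x}{x_1}{x_2}{M}}$, which is what makes the induction close.
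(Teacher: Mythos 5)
Your reduction of the problem to the closure property \textbf{(*)} is reasonable, and you have correctly isolated both the right measure ($\bnorm{\cdot}$ of the body under a pending substitution, with the substitution operator itself transparent) and the two genuinely delicate points: the duplication clause spawning two nested substitutions, and the clause $x\isub{N}{x}\rclsubred N$ exposing a term of arbitrary $\bnorm{\cdot}$. However, the induction you set up does not close. You propose a primary induction on $\bnorm{M}$ with a secondary well-founded argument on the length of a longest reduction out of $M$, but no ordering of these two components works for all cases. If $\bnorm{\cdot}$ is primary, the internal-reduction case $M\isub{N}{x}\rclsubred M'\isub{N}{x}$ with $M\rclsubred M'$ breaks: $\bnorm{M'}$ can strictly exceed $\bnorm{M}$ (take $M=x\isub{N}{x}$ with $\bnorm{N}$ large, a legitimate body of an outer substitution), so the primary component increases and the induction hypothesis is unavailable no matter what the secondary component does. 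If instead the reduction length of the body is primary, the duplication case breaks: the second application of \textbf{(*)} must be invoked at the body $M\isub{N_1}{x_1}$, which admits strictly more reduction steps than the original body $\cont{x}{x_1}{x_2}{M}$ (the latter is already a $\rclsubred$-normal form when $M$ is substitution-free). Saying the secondary induction ``must be insensitive to that exposure'' names the difficulty rather than resolving it. Nor does your strengthened statement repair this: the shape $M\isub{N_1}{x_1}\cdots\isub{N_k}{x_k}$ with one common substitution-free body is not preserved by reduction, since a single propagation step on the innermost substitution (say $M=\lambda y.Q$) yields $(\lambda y.(Q\isub{N_1}{x_1}))\isub{N_2}{x_2}\cdots\isub{N_k}{x_k}$, in which the remaining substitutions no longer sit over a common substitution-free body, so the generalised induction hypothesis does not apply to the reducts.

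The paper resolves exactly this tension by aggregating the measure instead of localising it: to each term it assigns the multiset $\multsbt(\cdot)$ collecting $\bnorm{B}$ for every pending substitution occurrence $B\isub{N}{x}$ in the term, and checks that each rule of Figure~\ref{fig:sub-rcl} decreases this multiset in the well-founded multiset order. This absorbs both of your obstacles at once: duplication replaces the single element $\bnorm{\cont{x}{x_1}{x_2}{M}}=\bnorm{M}+1$ by the two strictly smaller elements $\bnorm{M},\bnorm{M}$, which is a legal multiset decrease, and the clause $x\isub{N}{x}\rclsubred N$ is harmless because the exposed $N$ is substitution-free and contributes the empty multiset. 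The multiset of the $\bnorm{\cdot}$'s of all pending bodies is precisely the well-founded quantity your ``family'' generalisation is groping for; inducting on it, rather than on the $\bnorm{\cdot}$ of a single distinguished body, is the missing idea.
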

\begin{proof}
  The proof of the termination of the relation $\rclsubredc$ is based on the measure $\bnorm{\cdot}$ defined in Definition~\ref{def-measure}.
  We associate with each term $M$ a multiset $\multsbt(M)$ of natural numbers as
  follows:
  \begin{eqnarray*}
    \multsbt(x) &=&  \{\!\!\{~\}\!\!\}\\
    \multsbt(`l y.M) &=&  \multsbt(M)\\
    \multsbt(M\;P) &=& \multsbt(M)\cup \multsbt(P)\\
    \multsbt(\weak{x}{M}) &=&  \multsbt(M)\\
    \multsbt(\cont{x}{y}{z}{M}) &=&  \multsbt(M)\\
    \multsbt(M\isub{N}{x}) &=&  \{\!\!\{\bnorm{M}\}\!\!\} \cup \multsbt(M)\\
  \end{eqnarray*}
  Notice that if a term $P$ does not contain any substitution, then $\multsbt(P)
  =\{\!\!\{~\}\!\!\}$.  The multiset order is defined for instance
  in~\cite{baadnipk98} and is denoted by $\gg$. The rules in Figure~\ref{fig:sub-rcl}
  yield the following inequalities.
  \begin{footnotesize}
    \begin{eqnarray*}
      \{\!\!\{\bnorm{x}\}\!\!\} &\gg& \multsbt(N)\\
      \{\!\!\{\bnorm{M}+1\}\!\!\}  \cup \multsbt(M) &\gg& \{\!\!\{\bnorm{M}\}\!\!\}  \cup
      \multsbt(M)\\
      \{\!\!\{\bnorm{M}+\bnorm{P}+1\}\!\!\}  \cup \multsbt(M) \cup \multsbt(P)&\gg& \{\!\!\{\bnorm{M}\}\!\!\}  \cup \multsbt(M)  \cup \multsbt(P)\\
      \{\!\!\{\bnorm{M}+\bnorm{P}+1\}\!\!\}  \cup \multsbt(M) \cup \multsbt(P)&\gg& \{\!\!\{\bnorm{P}\}\!\!\} \cup \multsbt(M) \cup \multsbt(P) \\
      \{\!\!\{\bnorm{M}+1\}\!\!\}  \cup \multsbt(M) &\gg& \{\!\!\{\bnorm{M}\}\!\!\}  \cup \multsbt(M) \\
      \{\!\!\{\bnorm{M}+1\}\!\!\}  \cup \multsbt(M) &\gg& \multsbt(M)\\
      \{\!\!\{\bnorm{M}+1\}\!\!\}  \cup \multsbt(M) &\gg& \{\!\!\{\bnorm{M}\}\!\!\}  \cup \multsbt(M) \\
      \{\!\!\{\bnorm{M}+1\}\!\!\}  \cup \multsbt(M) &\gg& \{\!\!\{\bnorm{M}, \bnorm{M}\}\!\!\}  \cup
      \multsbt(M) \cup  \multsbt(M)
    \end{eqnarray*}
  \end{footnotesize}

  Two inequalities require discussion. The first comes from $x\isub{N}{x}
  \rclsubred N$ and is satisfied because $N$ is substitution free, therefore
  $\multsbt(N) =\{\!\!\{~\}\!\!\}$.  The second comes from
  $(\cont{x}{x_1}{x_2}{M})\isub{N}{x} \rclsubred
  \cont{Fv(N)}{Fv(N_1)}{Fv(N_2)}{M\isub{N_1}{x_1}\isub{N_2}{x_2}}$ and is
  satisfied because $\bnorm{\cont{x}{x_1}{x_2}{M}} = \bnorm{M}+1$ is larger than
  $\bnorm{M}$ and than any $\bnorm{P}$ for $P$ subterm of $M$.

  This shows that $\rclsubredc$ is well-founded, hence that $\rclsubredc$ terminates.
\end{proof}

\begin{proposition}\label{prop:rclsubred-conf}
  The reduction $\rclsubredc$ is confluent.
\end{proposition}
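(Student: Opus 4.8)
The plan is to combine termination with local confluence via Newman's Lemma. Since Proposition~\ref{prop:box-term} already establishes that $\rclsubredc$ terminates, it suffices to prove that the one-step relation $\rclsubred$ is locally confluent; confluence of $\rclsubredc$ then follows immediately. All the reasoning is carried out modulo $\alpha$-equivalence, so that the particular choice of fresh names in the erasure and duplication rules of Figure~\ref{fig:sub-rcl} is immaterial.

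To establish local confluence, suppose $Q \rclsubred R_1$ and $Q \rclsubred R_2$ via two redex occurrences. Because $\rclsubred$ is closed under contexts, I would first dispatch the case of two disjoint redexes, which commute trivially and join in one step on each side. The heart of the argument is the case of nested redexes, which I would analyse by examining the possible overlaps of the left-hand sides in Figure~\ref{fig:sub-rcl}.

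The key structural observation is that there are no genuine critical pairs. At any given position, at most one rule of Figure~\ref{fig:sub-rcl} applies: the rules are distinguished by the head constructor of the argument of the substitution operator (variable, abstraction, application, erasure, duplication), and the two apparently competing pairs are in fact mutually exclusive. For an application $(MP)\isub{N}{x}$, the well-formedness condition $\Fvsub(M)\cap\Fvsub(P)=\emptyset$ of the \emph{(app)} rule forces $x$ to occur free in at most one of $M$, $P$, so the two application rules never both fire; for erasures and duplications the two relevant rules are separated by the test $x=y$ versus $x\neq y$. Moreover, the substituted term $N$ is itself substitution-free ($N\in\Rcl$), hence contains no redex, so when $N$ is erased (rule for $\weak{x}{M}$) or copied into $N_1,N_2$ (rule for $\cont{x}{x_1}{x_2}{M}$) no redex is ever duplicated or destroyed.

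Consequently, in the nested case the inner redex always lies inside a subterm matched by a metavariable ($M$ or $P$) of the outer pattern, and reducing the two redexes in either order yields the same term up to $\alpha$-equivalence. Here I would invoke Proposition~\ref{prop:subsubject}(ii): since an inner reduction $M\rclsubred M'$ preserves free variables, $\Fvsub(M')=\Fvsub(M)$, so all side conditions of the outer rule (such as $x\in\Fvsub(M)$) remain valid and the outer step can still be performed after the inner one. This closes the local diamond in each case. The main point requiring care --- and the only place where the argument is not completely routine --- is precisely this bookkeeping of the free-variable side conditions together with the freshness of the renamed copies $N_1,N_2$; once these are checked, local confluence holds, and Newman's Lemma yields confluence of $\rclsubredc$.
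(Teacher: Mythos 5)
Your proof is correct and follows essentially the same route as the paper: termination (Proposition~\ref{prop:box-term}) plus local confluence via the absence of critical pairs, combined by Newman's Lemma. The paper states the ``no superposition between left-hand sides'' claim without elaboration, whereas you spell out why the apparently overlapping rules (the two application rules, and the $x=y$ versus $x\neq y$ cases for erasure and duplication) are mutually exclusive --- a useful clarification, but not a different argument.
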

\begin{proof}
  There is no superposition between the left-hand sides of the rules of
  Figure~\ref{fig:sub-rcl}, therefore there is no critical pair.  Hence, the rewrite
  system is locally confluent. According to Proposition~\ref{prop:box-term}  it terminates, hence it is confluent by Newman's Lemma~\cite{baadnipk98}.
\end{proof}

\begin{definition}[$\rclsubred$ Normal forms]\label{def:nf}
  Starting from $M$ and reducing by $\rclsubredc$, the irreducible term we obtain is
  called the $\rclsubred$-\emph{normal form} of $M$ and denoted by $M\subnf$.
\end{definition}

Every $\rclsub$-term has a unique normal form, the existence is guaranteed by Proposition~\ref{prop:box-term}, whereas the uniqueness  is a consequence of confluence (Proposition~\ref{prop:rclsubred-conf}).


\begin{proposition}\label{prop:exh}
If $Q \in \Rcl$ then $Q\isub{N}{x}\subnf \in \Rcl$.
\end{proposition}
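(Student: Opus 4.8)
The plan is to reduce the statement to a single combinatorial fact about the rules of Figure~\ref{fig:sub-rcl}, which I will call \emph{exhaustiveness}, and thereby avoid computing the normal form directly. Throughout I assume, as is implicit in the statement, that $Q\isub{N}{x}$ is a well-formed $\rclsub$-term, so that $Q\isub{N}{x}\subnf$ exists (Proposition~\ref{prop:box-term}) and is unique (Proposition~\ref{prop:rclsubred-conf}). The whole argument rests on the observation that it suffices to prove that $R := Q\isub{N}{x}\subnf$ contains no occurrence of the substitution operator: by Proposition~\ref{prop:subsubject}(i) we already know $R \in \Rclsub$, and a $\rclsub$-term formed without the rule $(sub)$ is, by inspection of Figures~\ref{fig:wfsub} and~\ref{fig:wf} (using that $\Fvsub$ and $Fv$ coincide on substitution-free terms), exactly a $\rcl$-term.

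The key lemma is: for every $M' \in \Rcl$, every $N' \in \Rcl$, and every variable $y$ with $y \in Fv(M')$ and $M'\isub{N'}{y} \in \Rclsub$, the term $M'\isub{N'}{y}$ is an instance of the left-hand side of some rule in Figure~\ref{fig:sub-rcl}, i.e.\ it is $\rclsubred$-reducible. I would prove this by a case analysis on the last rule of Figure~\ref{fig:wf} used to form $M'$, in each case using $y \in Fv(M')$ to select the matching rule. If $M' = z$, then $Fv(M') = \{z\}$ forces $y = z$ and the rule $y\isub{N'}{y} \rclsubred N'$ applies. If $M' = \lambda z.P$, then $z \neq y$ (Barendregt's convention, together with $y \in Fv(P)\setminus\{z\}$) and the abstraction rule applies. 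If $M' = PR$, then since $Fv(P) \cap Fv(R) = \emptyset$ and $y \in Fv(P) \cup Fv(R)$, the variable $y$ lies in exactly one of $Fv(P), Fv(R)$, selecting the third or the fourth rule. If $M' = \weak{z}{P}$, then either $y = z$, triggering $(\weak{y}{P})\isub{N'}{y} \rclsubred \weak{Fv(N')}{P}$, or $y \neq z$ and $y \in Fv(P)$, triggering the rule for $\weak{z}{P}$ with $z \neq y$. Finally, if $M' = \cont{z}{z_1}{z_2}{P}$, then either $y = z$, triggering the duplication rule, or $y \neq z$ and $y \in Fv(P)\setminus\{z_1,z_2\}$, triggering the rule for $\cont{z}{z_1}{z_2}{P}$ with $z \neq y$. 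In every case a rule fires, which proves the lemma.

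With the lemma in hand the conclusion is immediate. Suppose, for contradiction, that $R$ contained at least one substitution operator. Since the first argument of a substitution node is a strictly smaller term, I may choose an occurrence $M'\isub{N'}{y}$ in $R$ whose first argument $M'$ contains no substitution operator. As $R \in \Rclsub$, every subterm occurrence respects the rules of Figure~\ref{fig:wfsub}; hence $M' \in \Rclsub$ is substitution-free and therefore $M' \in \Rcl$, the side term satisfies $N' \in \Rcl$, and the $(sub)$ condition gives $y \in \Fvsub(M') = Fv(M')$. The lemma then makes $M'\isub{N'}{y}$, and hence $R$ itself, $\rclsubred$-reducible, contradicting that $R$ is a $\rclsubred$-normal form. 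Therefore $R$ is substitution-free, so $R \in \Rcl$.

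The step I expect to require the most care is the exhaustiveness lemma, specifically checking that the side conditions attached to the rules of Figure~\ref{fig:sub-rcl} (such as $x \neq y$, $x \in \Fvsub(M)$, and the split between $x \in \Fvsub(M)$ and $x \in \Fvsub(P)$ for applications) are jointly complete under the well-formedness constraints of Figure~\ref{fig:wf}; this is where Barendregt's convention and the disjointness of free-variable sets in the $(app)$, $(era)$ and $(dup)$ rules are used. A pleasant feature of this route is that the nested substitutions produced by the duplication rule never have to be evaluated by hand: the innermost-occurrence argument applies uniformly to any surviving substitution, so confluence and termination (Propositions~\ref{prop:rclsubred-conf} and~\ref{prop:box-term}) are the only facts needed about the order of evaluation.
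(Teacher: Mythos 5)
Your proof is correct and its core --- the case analysis on the shape of $Q$ showing that the rules of Figure~\ref{fig:sub-rcl} are exhaustive on a well-formed substitution whose body lies in $\Rcl$ --- is exactly the paper's own proof of this proposition. The only difference is that you make explicit, via the innermost-substitution-occurrence argument, why one-step exhaustiveness forces the normal form to be substitution-free (covering in particular the nested substitutions created by the duplication rule), a step the paper leaves implicit and in effect defers to the induction in Proposition~\ref{prop:nf}.
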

\begin{proof}
  Let us look at all the terms of the form $Q\isub{N}{x}$ and their evaluation by the rules in Figure~\ref{fig:sub-rcl}.
  \begin{itemize}
  \item $Q$ is a \emph{variable}. Due to rule $(sub)$ in Figure~\ref{fig:wfsub}, $x\in
    \Fvsub(Q)$, hence $Q$ must be $x$. Therefore, all the cases when $Q$ is a variable are
    exhausted.
  \item $Q$ is an \emph{abstraction}, then one rule is enough.
  \item $Q$ is an \emph{application} $MP$, then either $x\in\Fvsub(M)$ or $x\in\Fvsub(P)$, hence the two rules exhaust this case.
  \item $Q$ is an \emph{erasure} $\weak{y}{M}$, then either $y=x$ or $y\neq x$ and the two
    cases are considered.
  \item $Q$ is a \emph{duplication} $\cont{x}{x_1}{x_2}{M}$, then again either $y=x$ or
    $y\neq x$ and the two cases are considered.
  \end{itemize}
\end{proof}

\begin{proposition}\label{prop:nf}
  If $M`:\Rclsub$ then $M\subnf`:\Rcl$.
  \begin{proof}
  By induction on the number of substitutions in $M$, Proposition~\ref{prop:exh} being the base case.
  \end{proof}
\end{proposition}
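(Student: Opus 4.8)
The plan is to argue by induction on the number $n(M)$ of occurrences of the substitution operator in $M$, using Proposition~\ref{prop:exh} to peel off one substitution at a time. The key structural observation that makes $n(M)$ a good induction measure is that, by the side condition of rule $(sub)$, the argument $N$ of every substitution $M'\isub{N'}{y}$ already lies in $\Rcl$ and is thus substitution-free; so substitutions can be nested only through the left component of a substitution, and $n$ is additive with respect to the term constructors. In the base case $n(M)=0$ the term $M$ contains no substitution operator, so $M\in\Rcl$ (a substitution-free $\rclsub$-term is exactly a $\rcl$-term) and $M$ is already $\rclsubred$-irreducible, whence $M\subnf = M \in \Rcl$.

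For the inductive step, suppose $n(M)=k+1$. I would first locate an \emph{innermost} substitution subterm, that is, an occurrence $Q\isub{N}{x}$ whose body $Q$ contains no further substitution operator; such an occurrence exists because $M$ is finite and the argument side of every substitution is substitution-free, and consequently $Q\in\Rcl$ and $N\in\Rcl$. Proposition~\ref{prop:exh} then applies directly and gives a substitution-free term $Q' := Q\isub{N}{x}\subnf \in \Rcl$, so $Q\isub{N}{x} \rclsubredc Q'$. Since $\rclsubred$ is closed under contexts, replacing this occurrence inside $M$ yields $M \rclsubredc M'$, where $M'$ is $M$ with $Q\isub{N}{x}$ replaced by $Q'$. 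Now $n(M')=k$, and $M'\in\Rclsub$ by subject reduction (Proposition~\ref{prop:subsubject}(i)), so the induction hypothesis gives $M'\subnf\in\Rcl$. Because $\rclsubredc$ terminates and is confluent (Propositions~\ref{prop:box-term} and~\ref{prop:rclsubred-conf}), normal forms are unique, so from $M \rclsubredc M' \rclsubredc M'\subnf$ I conclude $M\subnf = M'\subnf \in \Rcl$, closing the induction.

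The hard part is really just the choice of which substitution to eliminate: Proposition~\ref{prop:exh} demands that the body of the targeted substitution already be in $\Rcl$, so one cannot attack an arbitrary substitution and must always select an innermost one. Everything else is routine bookkeeping handled uniformly by the earlier results — that eliminating an innermost substitution strictly decreases $n$, that well-formedness in $\Rclsub$ is preserved along the reduction, and that performing this local reduction does not alter the global normal form — the last two following from subject reduction and from uniqueness of normal forms, which in turn rests on confluence and termination.
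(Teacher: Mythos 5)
Your proposal is correct and follows essentially the same route as the paper: induction on the number of substitution operators, with Proposition~\ref{prop:exh} doing the real work of eliminating one (innermost) substitution, and Propositions~\ref{prop:subsubject}, \ref{prop:box-term} and~\ref{prop:rclsubred-conf} supplying the bookkeeping. You have merely spelled out the inductive step that the paper leaves implicit, including the correct observation that arguments of substitutions are substitution-free so an innermost occurrence has a body in $\Rcl$.
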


The substitution of $n$ different variables in the same term is denoted by \[M\isub{N_1}{x_1}...\isub{N_n}{x_n}\subnf.\]
These substitutions are actually performed in ``parallel'' since we prove that they commute in the following proposition.

\begin{proposition}\label{prop:sub-com}
 If $M \in \Rcl$ and $x_i \in Fv(M)$ for $i \in \{1,...,n\},\; n \geq 1$ with $x_i\neq x_j$ for $i \neq j$, then
 $$M\isub{N_1}{x_1}...\isub{N_n}{x_n}\subnf \ =\ M\isub{N_{p(1)}}{x_{p(1)}}...\isub{N_{p(n)}}{x_{p(n)}}\subnf,$$
 where $(p(1),...,p(n))$ is a permutation of $(1,...,n)$.
\end{proposition}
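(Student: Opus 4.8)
The plan is to prove the statement by induction on $n$, the number of substitutions. The idea is to reduce the general permutation case to the single adjacent transposition case, since any permutation $(p(1),\ldots,p(n))$ is a composition of adjacent transpositions, and then to establish the core commutation property only for two substitutions. So the heart of the argument is the base-step claim: for $M \in \Rcl$ with distinct $x_1, x_2 \in Fv(M)$ and appropriate disjointness of the free variables (guaranteed by well-formedness of the nested substitution term in $\Rclsub$, via Figure~\ref{fig:wfsub}), one has
$$M\isub{N_1}{x_1}\isub{N_2}{x_2}\subnf \ =\ M\isub{N_2}{x_2}\isub{N_1}{x_1}\subnf.$$
First I would reduce to this two-variable case: given that swapping adjacent substitutions commutes, an arbitrary permutation is reached by finitely many such swaps (bubble sort), so the full result follows once the adjacent case is in hand. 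Here I would use Proposition~\ref{prop:exh} and Proposition~\ref{prop:nf} to know that intermediate normal forms stay inside $\Rcl$, so each successive substitution is again a legitimate $(sub)$-redex, and I would use Proposition~\ref{prop:rclsubred-conf} to guarantee that each $\subnf$ is well-defined and unique.

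To prove the two-variable commutation, I would argue by induction on the structure of $M$, following the cases of Figure~\ref{fig:wf} and evaluating both sides according to the rules of Figure~\ref{fig:sub-rcl}. Because $x_1 \neq x_2$ and, by the $(sub)$-rule side conditions, $x_2 \notin Fv(N_1)$ and $x_1 \notin Fv(N_2)$, the two substitutions never interfere: in the variable case $M$ must be either $x_1$ or $x_2$ (it cannot be a variable distinct from both, since both must occur), and the single-variable rule applied on one side passes the other substitution untouched through to agreement; in the application, abstraction, erasure, and duplication cases the substitution rules propagate each substitution into exactly the subterm(s) containing its target variable, and Barendregt's convention keeps the bound names separate from the free variables of $N_1, N_2$, so the two evaluations commute and the inductive hypothesis on the immediate subterm closes the case. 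The duplication rule, which renames $N_1$ into fresh copies $N_{11}, N_{12}$ and likewise for $N_2$, is the delicate spot, but since the fresh variables introduced are disjoint from everything relevant, the two duplications can be carried out in either order up to the $\alpha$-renaming that the reduction is already closed under (Definition~\ref{def:rclsub-eval}).

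The main obstacle I anticipate is not the structural recursion itself but the careful bookkeeping of free-variable side conditions and of the fresh names generated by the duplication rule $(\cont{x}{x_1}{x_2}{M})\isub{N}{x} \rclsubred \cont{Fv[N]}{Fv[N_1]}{Fv[N_2]}{M\isub{N_1}{x_1}\isub{N_2}{x_2}}$. One must verify that performing the $x_1$-substitution first and then the $x_2$-substitution produces a term $\alpha$-equivalent to the reverse order, which requires knowing that the renamings chosen for the two substitutions can be taken independently and that the disjointness hypotheses propagate correctly to the subterms on which the induction hypothesis is applied. Proposition~\ref{prop:subsubject}, which guarantees that $\Fvsub$ is preserved by $\rclsubredc$, is exactly the tool needed to keep these side conditions satisfied along the reduction, and confluence (Proposition~\ref{prop:rclsubred-conf}) guarantees that it suffices to exhibit \emph{one} pair of converging reductions rather than tracking all reduction paths.
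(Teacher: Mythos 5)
Your top-level reduction of an arbitrary permutation to adjacent transpositions is unobjectionable, and most structural cases of the two-variable commutation go through as you describe. The genuine gap is in the duplication case, and it is not the freshness bookkeeping you single out as the delicate spot. Take $M = \cont{x_1}{u}{v}{Q}$, where the duplicated variable $x_1$ is one of the two substituted variables (then necessarily $x_2\in Fv(Q)$). Evaluating $\isub{N_1}{x_1}$ turns this \emph{one} substitution into \emph{two} substitutions $\isub{N_{11}}{u}\isub{N_{12}}{v}$ on $Q$, so comparing the two evaluation orders reduces to showing
$$Q\isub{N_{11}}{u}\isub{N_{12}}{v}\isub{N_2}{x_2}\subnf \ =\ Q\isub{N_2}{x_2}\isub{N_{11}}{u}\isub{N_{12}}{v}\subnf,$$
a permutation of \emph{three} substitutions on $Q$. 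Your induction hypothesis only licenses swapping two substitutions on structural subterms, and every decomposition of this three-element permutation into adjacent transpositions must at some stage swap two substitutions applied to a term of the form $Q\isub{\cdot}{\cdot}\subnf$, which is not a structural subterm of $M$ and may be strictly larger than $M$ (also in the measure $\bnorm{\cdot}$, since $\bnorm{M\isub{N}{x}}=\bnorm{M}$ but the normal form splices $N$ into $M$). So neither the structural IH nor any obvious size induction closes this case; the two-variable lemma is not self-sufficient.

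The repair is precisely the paper's formulation: prove the full $n$-ary statement, for arbitrary permutations of arbitrarily many simultaneous substitutions, directly by structural induction on $M$. Then in the duplication case the induction hypothesis applied to the structurally smaller $Q$ is available for $n+1$ substitutions and yields the required identity at once. A minor further point: your variable case is slightly off --- a single variable cannot have two distinct free variables, so for $n\geq 2$ that case is vacuous rather than split into $M=x_1$ or $M=x_2$; this is harmless but indicates the hypotheses were not tracked precisely there.
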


\begin{proof}
We prove the proposition by induction on the structure of $M$,
  \begin{itemize}
  \item

For $M = x_1$ the statement holds since the only permutation is the identity, namely, $p(1)=1$, therefore  $x_1\isub{N_1}{x_1} \subnf = N_1 =x_1\isub{N_{p(1)}}{x_{p(1)}} \subnf $.

  \item If $M = `ly.Q$ then this works by induction. Notice that $y \neq x_i$, for $i \in \{1,...,n\}$.

  \item If $M = Q R$ then we distinguish two cases:
  \begin{itemize}
  \item
  some of $\{x_1,...,x_n\}$ belong to $Fv(Q)$, whereas the others belong to $Fv(R)$. Without  loss of generality we can assume that for some $k$ such that $1 \leq k < n$,  $\{x_1,...,x_k\}\in Fv(Q)$ and $\{x_{k+1},...,x_n\}\in Fv(R)$. Then
    $(Q R)\isub{N_1}{x_1}...\isub{N_n}{x_n}$ reduces to\\ $Q\isub{N_1}{x_1}...\isub{N_k}{x_k}\subnf R\isub{N_{k+1}}{x_{k+1}}...\isub{N_n}{x_n}\subnf$, and the result follows by two applications of induction hypothesis.
  \item If $M = Q R$ and $\{x_1,...,x_n\}$ all belong to either $Fv(Q)$ or to $Fv(R)$, the result follows by induction.
  \end{itemize}

  \item If $M = \weak{y}{Q}$ with $y\neq x_i$ for $i \in \{1,...,n\}$, then the result follows by
    induction.
  \item If $M = \weak{x_j}{Q}$ then $(\weak{x_j}{Q})\isub{N_1}{x_1}...\isub{N_j}{x_j}...\isub{N_n}{x_n}$ reduces to\\ $\weak{Fv(N_j)}{Q\isub{N_1}{x_1}...\isub{N_{j-1}}{x_{j-1}}\isub{N_{j+1}}{x_{j+1}}...\isub{N_n}{x_n}}\subnf$.\\
      On the other hand, given an arbitrary permutation $p$, let us call $k$ the index such that $p(k)=j$. Then,
   $(\weak{x_j}{Q})\isub{N_{p(1)}}{x_{p(1)}}...\isub{N_{p(k)}}{x_{p(k)}}...\isub{N_{p(n)}}{x_{p(n)}}$
      reduces to \\
$\weak{Fv(N_{p(k)})}{Q\isub{N_{p(1)}}{x_{p(1)}}...\isub{N_{p(k)-1}}{x_{p(k)-1}}\isub{N_{p(k)+1}}{x_{p(k)+1}}...\isub{N_{p(n)}}{x_{p(n)}}}\subnf$.
    Since $N_j=N_{p(k)}$ then $Fv(N_j)=Fv(N_{p(k)})$ and  the result follows by induction.
  \item If $M = \cont{y}{y_1}{y_2}{Q}$ where $y\neq x_i$ for $i \in \{1,...,n\}$, then the result follows by induction.
  \item
    If $M = \cont{x_j}{x'_j}{x''_j}{Q}$ then $(\cont{x_j}{x'_j}{x''_j}{Q})\isub{N_1}{x_1}...\isub{N_j}{x_j}...\isub{N_n}{x_n}$ reduces to\\
    $\cont{Fv(N_j)}{Fv(N'_j)}{Fv(N''_j)}{Q\isub{N_1}{x_1}...\isub{N'_j}{x'_j}\isub{N''_j}{x''_j}...\isub{N_n}{x_n}} \equiv M_1$.\\
    On the other hand, given an arbitrary permutation $p$, let us call $k$
      the index such that $p(k)=j$. We have that\\ $(\cont{x_{p(k)}}{x'_{p(k)}}{x''_{p(k)}}{Q})\isub{N_{p(1)}}{x_{p(1)}}...\isub{N_{p(k)}}{x_{p(k)}}...\isub{N_{p(n)}}{x_{p(n)}}$ reduces to\\
      $\cont{Fv(N_{k})}{Fv(N'_{k})}{Fv(N''_{k})}{Q\isub{N_{p(1)}}{x_{p(1)}}...\isub{N'_{p(k)}}{x'_{p(k)}}\isub{N''_{p(k)}}{x''_{p(k)}}...
        \isub{N_{p(n)}}{x_{p(n)}}} \equiv M_2$.
      By induction hypothesis (recall that $j=p(k)$),\\
      \hspace*{30pt}  $Q\isub{N_1}{x_1}...\isub{N'_{j}}{x'_{j}}\isub{N''_{j}}{x''_{j}}...\isub{N_n}{x_n}$
      and \\
      \hspace*{30pt} $Q\isub{N_{p(1)}}{x_{p(1)}}...\isub{N'_{p(k)}}{x'_{p(k)}}\isub{N''_{p(k)}}{x''_{p(k)}}...
      \isub{N_{p(n)}}{x_{p(n)}}$ \\
      have the same normal forms, therefore $M_1\subnf =
      M_2\subnf$.
  \end{itemize}
\end{proof}

Finally, we can formally define \emph{substitution} in $\Rcl$ and \emph{simultaneous
  substitution}  in $\Rcl$  via $\rclsub$-normal forms.

\begin{definition}[Substitution in $\Rcl$ ]
  If $M`:\Rcl$ and $N`:\Rcl$ then $$M\ISUB{N}{x} \triangleq M\isub{N}{x}\subnf.$$
\end{definition}

Notice that $M\ISUB{N}{x}$ is well-defined, since $M\ISUB{N}{x} \in \Rcl$, due to Proposition~\ref{prop:nf}.
Moreover, Proposition~\ref{prop:sub-com} allows us to give simply a meaning to simultaneous
  substitution.
\begin{definition}[Simultaneous substitution  in $\Rcl$ ]
Simultaneous  substitution  in $\Rcl$  is defined as follows:
\begin{displaymath}
  M\ISUBM{N_1}{x_1}{N_p}{x_p} = M\ISUB{N_1}{x_1}...\ISUB{N_p}{x_p}.
\end{displaymath}
provided that $Fv(N_i)\cap Fv(N_j)=\emptyset$ for $i \neq j$.
\end{definition}

\subsection{Operational semantics}

The operational semantics of $\rcl$ is defined by a \emph{reduction relation} $\rightarrow$, given by the set of reduction rules in Figure~\ref{fig:red-rcl}. In the $\rcl$-calculus, one works modulo the {\em structural equivalence} $\equiv_{\rcl}$, defined as the smallest equivalence that satisfies the axioms given in Figure~\ref{fig:equiv-rcl} and is closed under $\alpha$-conversion. The reduction relation $\rightarrow$ is closed under $\equiv_{\rcl}$ and contexts. Its reflexive, transitive closure will be denoted by $\rightarrowc$. As usual, a term is a {\em redex} if it has the form of a term on the left-hand side of a rule in Figure~\ref{fig:red-rcl}, whereas its {\em contractum} is the term on the right-hand side of the same rule.


\begin{figure}[htbp]
\centerline{ \framebox{ $
\begin{array}{rrcl}
(\beta)             & (\lambda x.M)N & \rightarrow & M\ISUB{N}{x} \\[1mm]
(\gamma_1)          & \cont{x}{x_1}{x_2}{(\lambda y.M)} & \rightarrow & \lambda y.\cont{x}{x_1}{x_2}{M} \\
(\gamma_2)          & \cont{x}{x_1}{x_2}{(MN)} & \rightarrow & (\cont{x}{x_1}{x_2}{M})N, \;\mbox{if} \; x_1,x_2 \not\in Fv(N)\\
(\gamma_3)          & \cont{x}{x_1}{x_2}{(MN)} & \rightarrow & M(\cont{x}{x_1}{x_2}{N}), \;\mbox{if} \; x_1,x_2 \not\in Fv(M)\\[1mm]
(\omega_1)          & \lambda x.(\weak{y}{M}) & \rightarrow & \weak{y}{(\lambda x.M)},\;x \neq y\\
(\omega_2)          & (\weak{x}{M})N & \rightarrow & \weak{x}{(MN)}\\
(\omega_3)          & M(\weak{x}{N}) & \rightarrow & \weak{x}{(MN)}\\[1mm]
(\gamma \omega_1)   & \cont{x}{x_1}{x_2}{(\weak{y}{M})} & \rightarrow & \weak{y}{(\cont{x}{x_1}{x_2}{M})},\;y \neq x_1,x_2\\
(\gamma \omega_2)   & \cont{x}{x_1}{x_2}{(\weak{x_1}{M})} & \rightarrow & M \ISUB{x}{x_2}
\end{array}
$ }} \caption{Reduction rules} \label{fig:red-rcl}
\end{figure}

The reduction rules are divided into four groups.
The main computational step is $\beta$-reduction.
The group of $(\gamma)$
reductions perform propagation of duplications into the expression.
Similarly, $(\omega)$ reductions extract erasures out of
expressions. This discipline allows us to optimise the computation
by delaying duplication of terms on the one hand, and by
performing erasure of terms as soon as possible on the other.
Finally, the rules in the $(\gamma\omega)$ group explain the
interaction between the explicit resource operators that are of
different nature. Notice that in the rule $(\gamma\omega_{2})$ the substitution  in $\Rcl$ is actually a syntactic variable replacement, i.e.,\ renaming.\footnote{We decided to use the same notation in order to introduce less different notations.}



\begin{figure}
\centerline{ \framebox{ $
\begin{array}{lrcl}
(`e_1) & \weak{x}{(\weak{y}{M})} & \equiv_{\rcl} &
\weak{y}{(\weak{x}{M})}\\
(`e_2) & \cont{x}{x_1}{x_2}{M} & \equiv_{\rcl} & \cont{x}{x_2}{x_1}{M}\\
(`e_3) & \cont{x}{y}{z}{(\cont{y}{u}{v}{M})} & \equiv_{\rcl} &
\cont{x}{y}{u}{(\cont{y}{z}{v}{M})}\\
(`e_4) & \cont{x}{x_1}{x_2}{(\cont{y}{y_1}{y_2}{M})} &
\equiv_{\rcl}  & \cont{y}{y_1}{y_2}{(\cont{x}{x_1}{x_2}{M})},\;\;
x \neq y_1,y_2, \; y \neq x_1,x_2
\end{array}
$ }} \caption{Structural equivalence} 
\label{fig:equiv-rcl}
\end{figure}


\begin{proposition}[Soundness of $\rightarrowc$]\label{prop:soundR}~
  \begin{itemize}
\item  For all terms $M$ and $N$ such that $M \rightarrow N$, if $M`:\Rcl$, then $N`:\Rcl$.
  \item For all terms $M$ and $N$ such that $M \rightarrowc N$, if $M`:\Rcl$, then $N`:\Rcl$.
  \end{itemize}
\end{proposition}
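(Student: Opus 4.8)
The plan is to prove the Soundness of $\rightarrowc$ by establishing the single-step case first and then lifting it to the transitive closure by a straightforward induction on the length of the reduction sequence. The second bullet follows immediately from the first once the base case is settled, since $M \rightarrowc N$ means $M = M_0 \rightarrow M_1 \rightarrow \cdots \rightarrow M_k = N$, and each step preserves membership in $\Rcl$ by the first bullet applied $k$ times. So the real content is entirely in the single-step statement: if $M \rightarrow N$ and $M \in \Rcl$, then $N \in \Rcl$.

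For the single-step case, since $\rightarrow$ is closed under contexts and under the structural equivalence $\equiv_{\rcl}$, I would reduce to the case where $M$ is precisely an instance of the left-hand side of one of the rules in Figure~\ref{fig:red-rcl} and $N$ the corresponding right-hand side. The closure under contexts reduces the general case to the redex case by structural induction on the context: the well-formedness conditions in Figure~\ref{fig:wf} (the side conditions on free-variable sets for $(abs)$, $(app)$, $(era)$, $(dup)$) are all phrased in terms of $Fv$, and I would need the fact that reduction preserves free variables so that the context conditions that held for $M$ still hold after replacing the redex by its contractum. For this I would invoke (or separately note) that $M \rightarrow N$ implies $Fv(M) = Fv(N)$; this is the analogue of Proposition~\ref{prop:subsubject}(ii) for the $\rightarrow$ relation rather than $\rclsubred$, and it must be checked rule by rule. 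I would also need that the structural equivalences in Figure~\ref{fig:equiv-rcl} preserve both well-formedness and free-variable sets, which is routine from their definitions.

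The core of the argument is then a case analysis over the ten reduction rules. For each rule I would assume the left-hand side is well-formed, unfold what that means via the inference rules of Figure~\ref{fig:wf}, and verify that the right-hand side satisfies the corresponding well-formedness conditions. The $(\gamma)$, $(\omega)$, and $(\gamma\omega_1)$ rules merely rearrange binders and operators, so the free-variable sets of the various subterms are unchanged and the side conditions transfer directly (the explicit side conditions like $x_1,x_2 \notin Fv(N)$ in $(\gamma_2)$ are exactly what is needed to re-establish the $(dup)$ premises after propagation). The interesting cases are $(\beta)$ and $(\gamma\omega_2)$, because these introduce the substitution $M\ISUB{N}{x}$, i.e.\ $M\isub{N}{x}\subnf$. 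Here I would appeal to Proposition~\ref{prop:nf} (together with Proposition~\ref{prop:box-term}) to conclude that $M\isub{N}{x}\subnf \in \Rcl$: from $(\lambda x.M)N \in \Rcl$ we extract that $\lambda x.M \in \Rcl$, $N \in \Rcl$, and $Fv(\lambda x.M) \cap Fv(N) = \emptyset$, hence $M\isub{N}{x} \in \Rclsub$ (the premises of the $(sub)$ rule in Figure~\ref{fig:wfsub} are met, using $x \in Fv(M)$ which holds since $\lambda x.M$ is well-formed), and then Proposition~\ref{prop:nf} gives that its normal form lies in $\Rcl$. The $(\gamma\omega_2)$ case is analogous but simpler, since the substitution $\ISUB{x}{x_2}$ is a mere renaming.

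I expect the main obstacle to be the bookkeeping in the $(\beta)$ and $(\gamma\omega_2)$ cases: one must carefully check that the well-formedness premises of the $(sub)$ rule are indeed satisfied by the redex's components before Proposition~\ref{prop:nf} can be invoked, and in particular that the free-variable disjointness condition $\Fvsub(M)\setminus\{x\}\cap Fv(N)=\emptyset$ follows from the disjointness built into the well-formedness of the application $(\lambda x.M)N$. The other rules are purely structural and amount to matching premises, so they are routine; the only subtlety there is ensuring that the preservation of $Fv$ under reduction (needed for the context closure) is itself verified, which I would establish as a small auxiliary observation by the same rule-by-rule inspection.
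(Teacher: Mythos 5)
Your proposal is correct and follows essentially the same route as the paper, which in fact omits a detailed proof of Proposition~\ref{prop:soundR} and only records the one non-routine observation — that in the $(\beta)$ case $M\ISUB{N}{x}=M\isub{N}{x}\subnf\in\Rcl$ by Proposition~\ref{prop:nf} — which is precisely the key step you identify. The remaining details you supply (induction on the length of the reduction, closure under contexts via preservation of free variables as in Proposition~\ref{prop:pres-of-FV}, and the rule-by-rule check of the well-formedness side conditions) are the routine bookkeeping the paper leaves implicit, and they check out.
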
 %

In particular, in the case of $(\beta)$-reduction 
if $(`l x.M)N`:\Rcl$, then     \[M\ISUB{N}{x}=M\isub{N}{x}\subnf`:\Rcl\] by Proposition~\ref{prop:nf}.

No variable is lost during the computation, which is stated by the following
proposition.

\begin{proposition}[Preservation of free variables by $\rightarrowc$]~\\
\label{prop:pres-of-FV}
If $M \rightarrowc N$ then $Fv(M) = Fv(N).$
\end{proposition}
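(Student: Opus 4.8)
The plan is to reduce the claim to single reduction steps and then verify free-variable preservation rule by rule. Since $\rightarrowc$ is the reflexive, transitive closure of $\rightarrow$ and set equality is reflexive and transitive, it suffices to show that $M \rightarrow N$ implies $Fv(M) = Fv(N)$. The one-step relation $\rightarrow$ is generated by the rules of Figure~\ref{fig:red-rcl}, closed under the structural equivalence $\equiv_{\rcl}$ and under arbitrary contexts, so I would split the argument into three parts: (a) the structural equivalence axioms preserve $Fv$; (b) each reduction rule preserves $Fv$ when fired at the root of a term; and (c) this preservation propagates through contexts.

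Part (c) is the easy backbone. Inspecting Figure~\ref{fig:freevar}, the list $Fv[\cdot]$ of every compound term is computed solely from the $Fv[\cdot]$ of its immediate subterms (by concatenation, and by deletion of the bound or introduced variables in the $(abs)$ and $(dup)$ cases). Hence if a subterm is rewritten without altering its set of free variables, the enclosing term's free-variable set is unchanged as well; formally this is a routine induction on the structure of the surrounding context. For part (a), I would check the four axioms of Figure~\ref{fig:equiv-rcl} directly: each merely permutes two weakenings or two duplications, or swaps the two fresh names of a single duplication, so both sides evidently denote the same set, of the form $\{x\} \cup (Fv(M) \setminus \{x_1,x_2\})$ or $\{x, y\} \cup Fv(M)$.

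For part (b), the non-substitution rules $(\gamma_1)$--$(\gamma_3)$, $(\omega_1)$--$(\omega_3)$ and $(\gamma\omega_1)$ are settled by direct computation of $Fv$ on both sides using Figure~\ref{fig:freevar}, where the side conditions of each rule together with Barendregt's convention guarantee that the relevant variables are distinct; for instance, in $(\omega_1)$ the hypothesis $x \neq y$ makes $(\{y\} \cup Fv(M)) \setminus \{x\}$ equal to $\{y\} \cup (Fv(M) \setminus \{x\})$. The two genuinely interesting cases are $(\beta)$ and $(\gamma\omega_2)$, whose contracta are built with the meta-substitution $\ISUB{\;}{\;}$. Here I would invoke Proposition~\ref{prop:subsubject}(ii): since $M\ISUB{N}{x}$ is by definition the normal form $M\isub{N}{x}\subnf$ and $M\isub{N}{x} \rclsubredc M\isub{N}{x}\subnf$, that proposition gives $\Fvsub(M\ISUB{N}{x}) = \Fvsub(M\isub{N}{x})$, while the last free-variable rule of Figure~\ref{fig:freevarsub} computes the right-hand side as $(Fv(M) \setminus \{x\}) \cup Fv(N)$. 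For $(\beta)$ this is exactly $Fv((\lambda x.M)N)$, and for $(\gamma\omega_2)$, where the substitution is the renaming $\ISUB{x}{x_2}$, the same reasoning yields $(Fv(M) \setminus \{x_2\}) \cup \{x\}$, which matches $Fv(\cont{x}{x_1}{x_2}{(\weak{x_1}{M})})$ once one uses $x_1 \notin Fv(M)$ and the well-formedness requirement $x_2 \in Fv(M)$.

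The main obstacle is precisely the bookkeeping for these substitution cases: one must know that evaluating the substitution operator to normal form neither drops nor creates free variables, which is why the argument leans on Proposition~\ref{prop:subsubject}(ii) (invariance of $\Fvsub$ along $\rclsubredc$) rather than on a naive syntactic inspection of the contractum. Everything else reduces to elementary set manipulation driven by the inductive definition of $Fv$.
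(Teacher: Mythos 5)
Your proof is correct and takes essentially the same approach as the paper's, which is stated in one line as a case analysis on the reduction rules using Proposition~\ref{prop:subsubject}(ii); your version simply spells out the details (context closure, the structural-equivalence axioms, the direct $Fv$ computations, and the appeal to Proposition~\ref{prop:subsubject}(ii) for the $(\beta)$ and $(\gamma\omega_2)$ contracta built with $\ISUB{\;}{\;}$). No gaps.
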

\begin{proof}
The proof is by case analysis on the reduction rules and uses
Proposition~\ref{prop:subsubject}~(ii).
\end{proof}


First, let us observe the structure of the
$\rcl$-normal forms, given by the following abstract syntax. As usually, a term is a normal for if it does not have any redex as subterm.

\begin{definition}[Set of Normal Forms]\label{def:setOfNF}
  The set $\NF$ of normals forms is generated by the following abstract syntax:  
\begin{eqnarray*}
 M_{nf} & ::= & \lambda
x.M_{nf}\,|\,\lambda x. \weak{x}{M_{nf}}\,|\, x M_{nf}^{1} \ldots
M_{nf}^{n}\, |\, \cont{x}{x_1}{x_2}{M_{nf}}\\
&&\qquad \mbox{in the last case~} M_{nf}\equiv P_{nf}Q_{nf},\;x_1 \in Fv(P_{nf}),\;x_2 \in Fv(Q_{nf})\\
E_{nf} & ::= & \weak{x}{M_{nf}}\,|\,\weak{x}{E_{nf}}
\end{eqnarray*}
\end{definition}
where $n \geq 0$. It is necessary to distinguish normal forms $E_{nf}$ separately
  because the term $\lambda x. \weak{y}{M_{nf}}$ is not a normal form,
  since $\lambda x. \weak{y}{M_{nf}} \to_{\omega_1} {\weak{y}\lambda
    x.M_{nf}}$. Also, in the last case the term  $\cont{x}{x_1}{x_2}{P_{nf}Q_{nf}},
  \mbox{where~}\;{x_1 \in Fv(P_{nf}),} {\;x_2 \in Fv(Q_{nf})}$ is not necessarily a normal form since $P_{nf}Q_{nf}$ can be a redex, in turn  $M_{nf} \equiv P_{nf}Q_{nf}$ guarantees that the application is a normal form.  

Next we define the set of strongly normalising terms $\SN$.

\begin{definition}[Strongly normalising terms]
The set of strongly normalising terms $\SN$ is defined as follows:
\begin{displaymath}
  \prooftree
  M`:\NF
  \justifies M`:\SN
  \endprooftree
  \qquad
  \prooftree
  `A N`:\Rcl\ . \ M \rightarrowc N \ "=>" \ N`:\SN\\
  \justifies  M`:\SN
  \endprooftree
\end{displaymath}
\end{definition}

\newcommand{\abs}{\textsf{Abs}}
\newcommand{\era}{\textsf{Era}}
\newcommand{\var}{\textsf{Var}}
\newcommand{\hd}{\textsf{App}}
\newcommand{\dup}{\textsf{Dup}}

\begin{lemma}
\label{lem:hft}
Every term has one of the following forms, where $n \geq 0 $:
$$
\begin{array}{rlrl}
(\abs) & \lambda x.N, \;  \; 
& (\abs\hd) & (\lambda x.N)PT_{1} \ldots T_{n}\\
(\var) & xT_{1} \ldots T_{n}  &  (\dup\hd) & (\cont{x}{x_{1}}{x_{2}}{N})T_{1} \ldots T_{n}\\
(\era) & \weak{x}{N}& (\era\hd) & (\weak{x}{N})PT_{1} \ldots T_{n} \\
 \end{array}\\
$$
\end{lemma}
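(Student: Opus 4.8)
The plan is to prove the lemma by structural induction on the $\rcl$-term $M$, following the inductive definition of $\LR$ in Figure~\ref{fig:wf}. The governing idea is to read off the classification from the \emph{application spine} of $M$: every term is built from one of the four non-application constructors (a variable, an abstraction, an erasure, or a duplication) as its head, possibly applied to a sequence of arguments $T_1 \ldots T_n$. The six listed forms are precisely what one obtains by letting the head range over these four constructors and letting the trailing spine be empty or non-empty, so the task reduces to showing that these cases are exhaustive and stable under the term-forming rules.

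First I would dispatch the base case and the ``headless'' cases, i.e.\ those terms whose outermost constructor is not an application. If $M$ is a variable $x$, it is of the form $(\var)$ with $n=0$. If $M=\lambda x.N$, it is of the form $(\abs)$. If $M=\weak{x}{N}$, it is of the form $(\era)$. If $M=\cont{x}{x_1}{x_2}{N}$, it is of the form $(\dup\hd)$ with $n=0$, the spine being empty. Note the mild asymmetry here: the bare duplication and the bare variable are absorbed into the ``applied'' schemes $(\dup\hd)$ and $(\var)$ at spine length $0$, whereas the bare abstraction and bare erasure get their own labels $(\abs)$ and $(\era)$, to be contrasted with $(\abs\hd)$ and $(\era\hd)$.

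The only remaining case is $M=PQ$, an application. Here I would apply the induction hypothesis to $P$ and observe that each of the six forms is stable under postfixing one more argument $Q$: the head constructor is unchanged and the spine simply grows by one. Concretely, if $P$ is of one of the forms $(\var)$, $(\abs\hd)$, $(\dup\hd)$ or $(\era\hd)$, say $P = H\,S_1 \ldots S_k$ with $H$ its head, then $M = H\,S_1 \ldots S_k\,Q$ is of the same form with spine length $k+1$; if $P$ is of the form $(\abs)$, then $M=(\lambda x.N)Q$ is of the form $(\abs\hd)$ with a single argument and empty trailing spine; and if $P$ is of the form $(\era)$, then $M=(\weak{x}{N})Q$ is of the form $(\era\hd)$, again with empty trailing spine. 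In every sub-case $M$ lands back in one of the six forms, which closes the induction.

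I do not expect a genuine obstacle, since the statement is a purely syntactic case analysis. The only thing requiring care is the bookkeeping of the spine, in particular the transition just noted: applying an abstraction or an erasure pushes it from the bare form $(\abs)$, $(\era)$ into the applied form $(\abs\hd)$, $(\era\hd)$, whereas applying a variable or a duplication merely lengthens the spine already present in $(\var)$, $(\dup\hd)$. Finally, I would remark that the well-formedness side conditions of Figure~\ref{fig:wf} (such as $Fv(P)\cap Fv(Q)=\emptyset$ for an application) are irrelevant to the argument, because the lemma constrains only the \emph{shape} of the term and not its free-variable discipline.
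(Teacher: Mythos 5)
Your proof is correct and follows essentially the same route as the paper: a structural induction on the term with the classification determined by the head of the application spine. The only cosmetic difference is in the application case, where the paper decomposes $M\equiv QP_1\ldots P_n$ with $Q$ not an application and cases on $Q$ directly, while you apply the induction hypothesis to the left operand and show each of the six forms is stable under appending one more argument; both arguments are sound and the bookkeeping of which bare forms get absorbed into the applied ones ($\var$, $\dup\hd$ at $n=0$) matches the paper.
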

\begin{proof}
These terms are well-formed according to Definition~\ref{def:rcl-terms} (we did not explicitly write the conditions, since we work with linear terms). The proof is by induction on the structure of the term $M \in \LR$.
\begin{itemize}
\item If $M$ is a variable, this case is covered by $\var$ for $n=0$.
\item If $M$ is an abstraction $\lambda x.Q$, then by induction $Q$ has one of the given forms, hence  $\lambda x.Q$ is covered by $\abs$.

\item If $M$ is an application then $M$ is of the form $M \equiv QP_{1} \ldots P_{n}$, for $n \geq 1$ and $Q$ is not an application. We proceed by subinduction on the structure of $Q$. Accordingly, $Q$ is one of the following:
	\begin{itemize}
	\item $Q$ is a variable, then we have the case $\var$, with $n \geq 1$;
	\item $Q$ is an abstraction, then we have the case $\abs\hd$;
	\item $Q$ is an erasure, then we have the case $\era\hd$;
	\item $Q$ is a duplication, then we have the case $\dup\hd$, with $n \geq 1$.
	\end{itemize}
\item If $M$ is an erasure $\weak{x}{Q}$, then by induction $Q$ has one of the given forms, hence $\weak{x}{Q}$  is covered by $\era$.
\item If $M$ is a duplication $\cont{x}{x_1}{x_2}{Q}$, then by induction $Q$ has one of the given forms, hence $\cont{x}{x_1}{x_2}{Q}$ is covered by $\dup\hd$ for $n=0$.
\end{itemize}
\end{proof}

\section{Intersection types for $\rcl$}
\label{sec:types}

In this section we introduce an intersection type assignment $\rcl \cap$ system which assigns \emph{strict types} to $\rcl$-terms. Strict
types were proposed in~\cite{bake92} and used
in~\cite{espiivetlika12} for characterisation of strong
normalisation in $\lG$-calculus.

The syntax of types is defined as follows:
$$
\begin{array}{lccl}
\textrm{Strict types}  & \tS & ::= & p \mid \tA \to \tS\\
\textrm{Types}      & \tA & ::= & \cap^n_i \tS_i
\end{array}
$$
\noindent where $p$ ranges over a denumerable set of type atoms and

$$\cap_i^n \tS_i = \left\{ \begin{array}{rr}
\tS_1 \cap \ldots \cap \tS_n & \mbox {for } n > 0\\
\top & \mbox{ for } n=0
\end{array}
\right.$$
\noindent $\top$ being the \emph{neutral element} for the intersection operator,
i.e.\ $`s\cap \top=`s$.

We denote types by $\tA,\tB,\tC...$, strict types by 
$\tS,\tT,\tU...$ and the set of all types by $\mathsf{Types}$. We
assume that the intersection operator is
commutative and associative.  We also assume that
intersection has priority over arrow. Hence, we will
omit parenthesis in expressions like $(\cap^n_i \tT_i) \to \tS$.

\subsection{The type assignment system}


\begin{definition}
  \begin{itemize}
  \item[(i)]  A \emph{basic type assignment (declaration)} is an expression of the form
  $x:\tA$, where $x$ is a term variable and $\tA$ is a type.
\item[(ii)] Consider a finite set $Dom(`G)$ of variables. A \emph{basis} is a
  function
  \begin{displaymath}
    `G: Dom(`G) "->" \mathsf{Types}.
  \end{displaymath}
  A \emph{basis extension} of $`G$ is a function $`G, x:`a : Dom(`G) \cup \{x\} "->"
  \mathsf{Types}$:
  \begin{displaymath}
    y "|->" \left\{
      \begin{array}{ll}
        `G(y) & \mbox{if~} y`:Dom(`G)\\
        `a & \mbox{if~} y=x
      \end{array}
    \right.
  \end{displaymath}
\item [(iii)] Given $`G$ and $`D$ such that $Dom(`G) = Dom(`D)$, the \emph{bases
    intersection} of $`G$ and $`D$ is the function $`G\sqcap `D: Dom(`G) "->"
  \mathsf{Types}$, such that:
  \begin{displaymath}
    `G\sqcap `D(x) = `G(x) \cap `D(x).
  \end{displaymath}
\item [(iv)] $\gtop$ is the constant function $\gtop: Dom(`G) "->" \{\top\}$.
  \end{itemize}
\end{definition}

In what follows we assume that the bases intersection has priority
over the basis extension, hence the parenthesis in
$\Gamma, (\Delta_1\sqcap \ldots \sqcap \Delta_n)$ will be omitted.
It is easy to show that $\gtop \sqcap \Delta = \Delta$ for
arbitrary bases $\Gamma$ and $\Delta$ that can be intersected,
hence $\gtop$ is the neutral element for the
intersection of bases of domain $Dom(\Gamma)$.

\begin{figure}[ht]
\centerline{ \framebox{ $
\begin{array}{c}
\\
\infer[(Ax)]{x:\tS \vdash x:\tS}{}\\ \\
\infer[(\to_I)]{\Gamma \vdash \lambda x.M:\tA \to \tS}
                        {\Gamma,x:\tA \vdash M:\tS} \quad\quad
\infer[(\to_E)]{\Gamma, \dztop \sqcap \Delta_1 \sqcap ...
\sqcap \Delta_n \vdash MN:\tS}
                    {\Gamma \vdash M:\cap^n_i \tT_i \to \tS & \Delta_0 \vdash N:\tT_0\; \ldots\; \Delta_n \vdash N:\tT_n}\\ \\
\infer[(Cont)]{\Gamma, z:\tA \cap \tB \vdash
\cont{z}{x}{y}{M}:\tS}
                    {\Gamma, x:\tA, y:\tB \vdash M:\tS} \quad\quad
\infer[(Thin)]{\Gamma, x:\top \vdash \weak{x}{M}:\tS}
                    {\Gamma \vdash M:\tS}\\
\end{array}
$ }} \caption{$\rcl \cap$: $\rcl$-calculus with intersection
types} \label{fig:typ-rcl-int}
\end{figure}

The type assignment system $\rcl \cap$ is given in Figure~\ref{fig:typ-rcl-int}.
It is syntax directed and the rules are context-splitting. The axiom $(Ax)$ ensures that void $\lambda$-abstraction  cannot be typed, i.e.\ in a typeable term each free variable appears at least once. The context-splitting rule $(\to_E)$ ensures that in a typeable term each free variable appears not more than once.

Assume that we implement these properties in the 
type system with $(Ax)$, $(\to_E)$ and
$(\to_I)$, 
then the combinators  $K=\lambda xy.x$ and $W^{-1} = \lambda
xy.xyy$ would not be typeable.  This motivates and justifies the introduction of the operators of erasure and duplication  and the corresponding typing rules $(Thin)$ and $(Cont)$, which further maintain the explicit control of resources and enable the typing of  $K$ and $W^{-1}$,  namely of their  corresponding $\rcl$-terms $\lambda xy.\weak{y}{x}$ and $ \lambda xy.\cont{y}{y_1}{y_2}xy_1y_2$, respectively. Let us mention that on the logical side, structural rules of {\em thinning} and {\em contraction}  are present in Gentzen's original formulation of $LJ$, Intuitionistic Sequent Calculus, but not in $NJ$, Intuitionistic Natural Deduction~\cite{gent34,gent35}. Here instead, the presence of the typing rules $(Thin)$ and $(Cont)$ completely maintains the explicit control of resources in $\rcl$.

In the proposed system, intersection
types occur only in two inference rules. In the rule $(Cont)$ the
intersection type is created, this being \emph{the only} place where
this happens. This is justified because it corresponds to the
duplication of a variable. In other words, the control of the
duplication of variables entails the control of the introduction of
intersections in building the type of the term in question. In the
rule $(\to_E)$, intersection appears on the right hand side of the turnstyle $"|-"$
which corresponds to the usage of the intersection type after it
has been created by the rule $(Cont)$ or by the rule $(Thin)$
if $n=0$.

The role of $`D_0$ in the rule $(\to_E)$ should be emphasized.  It is needed only when $n=0$
to ensure that $N$ has a type, i.e.\ that $N$
is strongly normalising as would be seen below. Then, in the 
conclusion of the rule, the types
of the free variables of $N$ can be forgotten, hence all the free
variables of $N$ receive the type $\top$.  All the free
variables of the term must occur in the environment $\Gamma$ (see Lemma~\ref{lem:dom-corr}),
therefore useless
variables occur with the type $\top$.  When $n>0$,  $`D_0$
can be any of the other environments and the type of $N$ the
associated type.  Since $`D^{\top}$ is a neutral element for $\sqcap$,
when $n>0$, $`D^{\top}$~ disappears in the 
conclusion of the rule.  The case $n=0$
resembles the rules $(drop)$ and/or \textit{(K-cup)}
in~\cite{lenglescdougdezabake04} and was used to present the two
cases, $n=0$ and $n\neq 0$ in a uniform way.  In the rule $(Thin)$ the
choice of the type of $x$ is~$\top$, since this corresponds to a
variable which does not occur anywhere in~$M$. The remaining rules,
namely $(Ax)$ and $(\to_I)$ are traditional, i.e.\ they are the same as in
the simply typed $\lambda$-calculus. Notice however that the type of
the variable in $(Ax)$ is a strict type.

\subsubsection*{Roles of the variables}
In the syntax of $\rcl$, there are three
kinds of variables according to the way they are introduced, namely as
a placeholder (associated with the typing rule \emph{(Ax)}), as the result of a duplication (associated with the typing rule
\emph{(Cont)}) or as the result of an erasure (associated with the typing rule \emph{(Thin)}). Each kind of variable receives a specific
type:
\begin{itemize}
\item variables as placeholders have a strict type,
\item variables resulting from a duplication have an intersection type,
\item variables resulting from an erasure have the type $\top$.
\end{itemize}

In order to emphasize the sensitivity of the system $\rcl \cap$ w.r.t.\
the role of a variable in a term, we provide the following examples in which
variables change their role during the computation process. Our goal is to show
that when the role of a variable changes, its type in the type derivation changes as well, so that the correspondence between particular roles and types is preserved.

\begin{example}
A variable as a ``placeholder'' becomes an ``erased'' variable:
this is the case with the variable $z$ in $(\lambda x.\weak{x}{y})z$, because
\begin{center}
$(\lambda x.\weak{x}{y})z \, \to_{\beta} \, (\weak{x}{y})\ISUB{z}{x}\, \triangleq \, (\weak{x}{y})\isub{z}{x}\subnf\, =\, \weak{z}{y}.$
\end{center}
Since $z:\top,y:\tS \vdash \weak{z}{y}:\tS$, we want to show that $z:\top,y:\tS \vdash (\lambda x.\weak{x}{y})z:\tS$.\\
Indeed:
\[\prooftree
    \prooftree
      \prooftree
           \prooftree
                \justifies
                y:\tS \vdash y:\tS
                \using{(Ax)}
           \endprooftree
           \justifies
           x:\top, y:\tS \vdash \weak{x}{y}:\tS
           \using{(Weak)}
        \endprooftree
    \justifies
    y:\tS \vdash \lambda x. \weak{x}{y}:\top \to \tS
    \using{(\to_I)}
    \endprooftree
    \prooftree
        \justifies
        z:\tT \vdash z:\tT
        \using{(Ax)}
    \endprooftree
    \justifies
     z:\top, y:\tS  \vdash (\lambda x. \weak{x}{y})z: \tS
    \using{(\to_E).}
\endprooftree
\]
In the rule $(\to_E)$, we have $n=0$,  $\Delta_0 = z:\tT$ and $\dztop =
z:\top$. Thus, in the previous derivation, the variable $z$ changed its type from a
strict type to $\top$, in accordance with the change of its role in the bigger term.
\end{example}

\begin{example}
A variable as a ``placeholder'' becomes a ``duplicated'' variable:
this is the case with the variable $v$ in $(\lambda x.\cont{x}{y}{z}{yz})v$, because
\begin{center}
$(\lambda x.\cont{x}{y}{z}{yz})v \,\to_{\beta}\,(\cont{x}{y}{z}{yz})\ISUB{v}{x} \,\triangleq (\cont{x}{y}{z}{yz})\isub{v}{x}\subnf \,=$\\
$= \,\cont{Fv[v]}{Fv[v_1]}{Fv[v_2]}{(yz)\isub{v_1}{y}\isub{v_2}{z}}\subnf\, =\, \cont{v}{v_1}{v_2}{v_1v_2}.$
\end{center}
Since $v:(\tT \to \tS)\cap \tT \vdash \cont{v}{v_1}{v_2}{v_1v_2}:\tS$, we want to show that \\
$v:(\tT \to \tS)\cap \tT  \vdash (\lambda x.\cont{x}{y}{z}{yz})v:\tS$.\\
Indeed:
\[\prooftree
    \prooftree
    \vdots
    \justifies
    \vdash \lambda x.\cont{x}{y}{z}{yz}:((\tT \to \tS)\cap \tT) \to \tS
    \using{(\to_I)}
    \endprooftree
    \prooftree
        \justifies
        v:\tT \vdash v:\tT
        \using{(Ax)}
    \endprooftree
    \prooftree
        \justifies
        v:\tT \to \tS \vdash v:\tT \to \tS
        \using{(Ax)}
    \endprooftree
    \justifies
     v:(\tT \to \tS) \cap \tT   \vdash (\lambda x.\cont{x}{y}{z}{yz})v: \tS
    \using{(\to_E).}
\endprooftree
\]
In the rule $(\to_E)$, we have $n=2$, therefore $\Delta_0 \vdash N: \tT_0$ can be one of the two existing typing judgements, for instance $v:\tT \vdash v:\tT$. In this case $\dztop$ disappears in the conclusion, because \\
$\dztop \sqcap \Delta_1 \sqcap \Delta_2 = v:\top \sqcap v:\tT \to \tS \sqcap v:\tT = v:\top\cap(\tT \to \tS)\cap\tT = v:(\tT \to \tS) \cap \tT$. Again, we see that the type of the variable $v$ changed from strict type to (intersection) type.
\end{example}

\begin{example}
A ``duplicated'' variable becomes an ``erased'' variable:
this is the case with the variable $z$ in $(\lambda x.\weak{x}{y})(\cont{z}{u}{v}{uv})$, because
\begin{center}
$(\lambda x.\weak{x}{y})(\cont{z}{u}{v}{uv})\, \to_{\beta}\,(\weak{x}{y})\ISUB{\cont{z}{u}{v}{uv}}{x}\, \,\triangleq \,(\weak{x}{y})\isub{\cont{z}{u}{v}{uv}}{x}\subnf \,= $\\
$= \,\weak{Fv(\cont{z}{u}{v}{uv})}{y}\, =\, \weak{z}{y}.$
\end{center}
Like in the previous examples,  both $z:\top,y:\tS \vdash \weak{z}{y}:\tS$ and $z:\top,y:\tS \vdash (\lambda x.\weak{x}{y})(\cont{z}{u}{v}{uv}):\tS$ can be shown.
\end{example}

\begin{example}
An ``erased'' variable becomes a ``duplicated'' variable:
this is the case with the variable $u$ in $(\lambda x.\cont{x}{y}{z}{yz})(\weak{u}{v})$, because
\begin{eqnarray*}
(\lambda x.\cont{x}{y}{z}{yz})(\weak{u}{v}) &\to_{\beta}& (\cont{x}{y}{z}{yz})\ISUB{\weak{u}{v}}{x}\\
& \triangleq & (\cont{x}{y}{z}{yz})\isub{\weak{u}{v}}{x}\subnf\\
&=& \cont{Fv[\weak{u}{v}]}{Fv[\weak{u_1}{v_1}]}{Fv[\weak{u_2}{v_2}]}{(yz)\isub{\weak{u_1}{v_1}}{y}\isub{\weak{u_2}{v_2}}{z}}\subnf\\
 &=& \cont{u}{u_1}{u_2}{\cont{v}{v_1}{v_2}{(\weak{u_1}{v_1})(\weak{u_2}{v_2})}}.
\end{eqnarray*}
The situation here is slightly different. Fresh variables $u_1$ and $u_2$ are
obtained from $u$ using the substitution  in $\Rcl$ . The variable $u$ is
introduced by thinning, so its type is $\top$. Substitution  in $\Rcl$ does
not change the types, therefore both $u_1$ and $u_2$ have the type ~$\top$. Finally,
$u$ in the resulting term is obtained by contracting $u_1$ and $u_2$, therefore its
type is $\top \cap \top = \top$. Thus we have an interesting situation - the role of the variable $u$ changes from ``to be erased'' to ``to be duplicated'', but its type remains $\top$.

However, this paradox (if any) is only apparent, as well as the change of
the role. Unlike the previous three examples, in which we obtained normal forms, in this case the computation can continue:
\begin{eqnarray*}
\cont{u}{u_1}{u_2}{\cont{v}{v_1}{v_2}{(\weak{u_1}{v_1})(\weak{u_2}{v_2})}} &\to_{(\omega_2 + `e_4)}& \cont{v}{v_1}{v_2}{\cont{u}{u_1}{u_2}{\weak{u_1}{v_1(\weak{u_2}{v_2})}}}\\
&\to_{\gamma\omega_2}& \cont{v}{v_1}{v_2}{v_1((\weak{u_2}{v_2}))\ISUB{u}{u_2}}\\
& =& \cont{v}{v_1}{v_2}{v_1(\weak{u}{v_2})}.
\end{eqnarray*}
So, we see that the actual role of the variable $u$  in the obtained normal form, is ``to be erased'', as indicated by its type $\top$.

\end{example}

To conclude the analysis, we point out the following key points:
\begin{itemize}
\item The type assignment system $\rcl\cap$ is constructed in such way that the type
  of a variable always indicates its actual role in the term. Due to this, we claim that the system $\rcl\cap$ fits naturally to the
  resource control calculus $\rcl$.
\item Switching between roles is not reversible: once a variable is meant to be
  erased, it cannot be turned back to some other role. Moreover, the information about its former role cannot be reconstructed from the type.
\end{itemize}

\subsubsection*{A note about idempotence and identity rule} Recall that the typing tree of a term is dictated by the syntax: $\to$ is
  introduced by $(\to_I)$, $\cap$ is introduced by $(Cont)$ and $\top$ is
  introduced by $(Thin)$. In this context it would not pertain to remove an
  intersection by idempotence or identity rule.  This is why they are not considered here.

\subsection{Structural properties}

\begin{lemma}[Domain correspondence for $\rcl\cap$]
\label{lem:dom-corr}
Let $`G \vdash M:\tS$ be a typing judgment. Then ${x`: Dom(`G)}$
if and only if $x`: Fv(M)$.
\end{lemma}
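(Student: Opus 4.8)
The plan is to prove the biconditional by structural induction on the typing derivation of $`G \vdash M:\tS$, following the five inference rules of the system $\rcl\cap$ given in Figure~\ref{fig:typ-rcl-int}. For each rule, I would show that the variables declared in the domain of the basis in the conclusion are exactly the free variables of the term in the conclusion. Since the system is syntax-directed, the shape of $M$ determines uniquely which rule was applied last, so the case analysis is clean.

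First I would treat the base case, the axiom $(Ax)$: here $M \equiv x$, $`G = x:\tS$, and $Fv(x) = \{x\} = Dom(`G)$, so the correspondence holds directly. For the inductive cases I would invoke the induction hypothesis on each premise and then track how both the domain of the basis and the free-variable set are modified when passing to the conclusion. For $(\to_I)$ with $M \equiv \lambda x.M'$, the premise gives $`G,x:\tA \vdash M':\tS$, so by IH $Dom(`G)\cup\{x\} = Fv(M')$; since $Fv(\lambda x.M') = Fv(M')\setminus\{x\}$ and $x \notin Dom(`G)$ (as $x$ is freshly bound), we recover $Dom(`G) = Fv(\lambda x.M')$. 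The rules $(Cont)$ and $(Thin)$ are handled analogously: in $(Cont)$ the two premise declarations $x:\tA$ and $y:\tB$ are replaced by the single declaration $z:\tA\cap\tB$, matching the replacement of the two bound occurrences $x,y$ by the single free variable $z$ in the passage from $M'$ to $\cont{z}{x}{y}{M'}$; in $(Thin)$ the fresh declaration $x:\top$ exactly mirrors the freshly introduced free variable $x$ in $\weak{x}{M'}$.

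The main obstacle, as usual, is the application rule $(\to_E)$, because it is context-splitting and involves the bases intersection $\sqcap$ together with the $\dztop$ device for the case $n=0$. Here I would use the side condition $Fv(M)\cap Fv(N)=\emptyset$ from the well-formedness rule $(app)$ in Figure~\ref{fig:wf}, which guarantees that the free variables of $M$ and of $N$ are disjoint, so that $Fv(MN) = Fv(M)\cup Fv(N)$. The key point is that the domains of $`G$ and of each $\Delta_i$ coincide where they are intersected: by IH, $Dom(`G) = Fv(M)$ and $Dom(\Delta_i) = Fv(N)$ for every $i$, so $Dom(\dztop\sqcap\Delta_1\sqcap\cdots\sqcap\Delta_n) = Fv(N)$ (recall that $\gtop$ preserves domains and is neutral for $\sqcap$). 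Since the conclusion basis is $`G,\dztop\sqcap\Delta_1\sqcap\cdots\sqcap\Delta_n$, its domain is $Fv(M)\cup Fv(N) = Fv(MN)$, as required. I would note explicitly that the $n=0$ subcase is covered uniformly, since $\dztop$ still carries the domain $Fv(N)$ even though all types there are $\top$; this is precisely the reason $\Delta_0$ is kept in the rule.

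Both directions of the biconditional follow simultaneously from this induction, since in each case I establish the equality $Dom(`G) = Fv(M)$ as sets, which immediately yields $x\in Dom(`G)$ iff $x\in Fv(M)$. I would emphasise that no separate argument for each direction is needed, and that the linearity constraints built into the well-formedness conditions (in particular the disjointness in $(app)$ and the freshness of bound variables under Barendregt's convention) are exactly what make the domain bookkeeping match the free-variable bookkeeping at every step.
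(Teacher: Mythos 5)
Your proof is correct and follows essentially the same route as the paper: a rule-by-rule inspection showing that the basis domain and the free-variable set are modified in lockstep, with the induction on the derivation made explicit where the paper merely classifies the rules into those that introduce, remove, or leave variables unchanged. Your more detailed treatment of $(\to_E)$ (using the disjointness from $(app)$ and the fact that $\sqcap$ preserves the common domain, including the $n=0$ case via $\Delta_0^{\top}$) is a faithful elaboration of what the paper leaves implicit.
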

\begin{proof}
  The rules of Figure~\ref{fig:typ-rcl-int} belong to three categories.
  \begin{enumerate}
  \item \emph{The rules that introduce a variable}. These rules are
    \emph{(Ax)}, $(Cont)$ and $(Thin)$.  One sees that the variable is
    introduced in the environment if and only it is introduced in the
    term as a free variable.
  \item \emph{The rules that remove variables}.  These rules are
    $(\to_I)$ and $(Cont)$.   One sees that the variables are
    removed from the environment if and only if they are removed  from the
    term as a free variable.
  \item
  \emph{The rule that neither introduces nor removes a variable.} This rule is $(\to_E)$.
  \end{enumerate}
Notice that $(Cont)$ introduces and removes variables.
\end{proof}
The Generation Lemma makes somewhat more precise the Domain
Correspondence Lemma.

\begin{lemma}[Generation lemma for $\rcl\cap$]
\label{prop:intGL} \rule{0in}{0in}
\begin{sloppypar}
  \begin{enumerate}
  \item[(i)]\quad $\Gamma \vdash \lambda x.M:\tT\;\;$ iff there exist $`a$ and $`s$
    such that $\;\tT\equiv \tA\rightarrow \tS\;\;$ and ${\Gamma,x:\tA \vdash M:\tS.}$
  \item[(ii)]\quad $\Gamma \vdash MN:\tS\;\;$ iff and there exist $`D_i$ and
    $\tT_i,\;i \in \{0, \ldots, n\}$ such that ${\Gamma' \vdash M:\cap_{i}^{n}\tT_i\to \tS}$
    and for all $i \in \{0, \ldots, n\}$, ${\Delta_{i} \vdash N:\tT_i}$ and
    ${\Gamma=\Gamma', \dztop \sqcap \Delta_{1} \sqcap \ldots \sqcap \Delta_{n}}$.
  \item[(iii)]\quad $\Gamma \vdash \cont{z}{x}{y}{M}:\tS\;\;$ iff there exist $`G',
    `a, `b$ such that
    $\;\Gamma=\Gamma',z:`a\cap`b$
    and $\;\Gamma', x:\tA, y:\tB \vdash M:\tS.$
  \item[(iv)]\quad $\Gamma \vdash \weak{x}{M}:\tS\;\;$ iff $\;\Gamma=\Gamma', x:\top$
    and $\;\Gamma' \vdash M:\tS.$
  \end{enumerate}
\end{sloppypar}

\end{lemma}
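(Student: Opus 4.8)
The plan is to prove the Generation Lemma by exploiting the key structural feature of the system $\rcl\cap$: it is \emph{syntax-directed}. Each term constructor is matched by exactly one inference rule in Figure~\ref{fig:typ-rcl-int}, so the shape of the term at the root of a derivation completely determines which rule was applied last. This means that for each of the four cases, the ``if'' direction is immediate (it is just an application of the corresponding rule), and the ``only if'' direction follows by inspecting the last rule of the derivation, which is forced by the outermost constructor of the subject term.

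More concretely, I would prove each equivalence separately and in each case argue both directions. For the forward (``only if'') direction, I would assume a derivation of the stated judgment and observe that, since the subject is an abstraction (resp.\ application, duplication, erasure), the only rule of the system whose conclusion matches that syntactic form is $(\to_I)$ (resp.\ $(\to_E)$, $(Cont)$, $(Thin)$). Hence the last rule applied must be that one, and reading off its premises yields exactly the existential witnesses and the subderivation claimed in the statement. For the backward (``if'') direction, I would simply take the hypothesized subderivation and premises, apply the matching rule, and obtain the desired judgment. For instance, in case (iii) the term $\cont{z}{x}{y}{M}$ can only be concluded by $(Cont)$, whose premise is $`G', x:\tA, y:\tB \vdash M:\tS$ and whose conclusion introduces $z:\tA\cap\tB$ into the basis, giving precisely the form $`G = `G', z:\tA\cap\tB$.

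The only place demanding slight care is case (ii), the application case, because the rule $(\to_E)$ is context-splitting and involves the family of premises $`D_0 \vdash N:\tT_0, \ldots, `D_n \vdash N:\tT_n$ together with the bases-intersection $`G = `G', \dztop \sqcap `D_1 \sqcap \ldots \sqcap `D_n$. Here I must be careful to record \emph{all} the premises of the rule as existential witnesses and to reproduce faithfully the way the final basis is assembled, in particular the special role of $`D_0$ (whose type contribution is forgotten via $\dztop$). I expect this bookkeeping of the context split to be the main—though entirely routine—obstacle; the uniform treatment of the $n=0$ and $n>0$ cases built into the rule means no separate case analysis is needed, since $\dztop$ behaves as the neutral element for $\sqcap$.

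In summary, the proof is a direct consequence of syntax-directedness: reading the rules of Figure~\ref{fig:typ-rcl-int} ``bottom-up'' gives the ``only if'' directions and reading them ``top-down'' gives the ``if'' directions, with no induction required beyond the single forced last step. Lemma~\ref{lem:dom-corr} is not needed for the argument but confirms the consistency of the bases manipulations involved.
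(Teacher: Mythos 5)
Your proposal is correct and follows essentially the same route as the paper, which likewise disposes of the lemma in one line by appealing to the syntax-directedness of the rules in Figure~\ref{fig:typ-rcl-int}. The only (inessential) divergence is that the paper explicitly cites Lemma~\ref{lem:dom-corr} as part of the justification for the basis decompositions, whereas you note it is not strictly needed; either reading is defensible.
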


\begin{proof} The proof is straightforward since all the rules are syntax directed, and relies on Lemma~\ref{lem:dom-corr}.
\end{proof}

In the sequel, we prove that the proposed system satisfies the following properties: Substitution lemma for $\rcl\cap$ (Proposition~\ref{prop:sub-lemma}) and Subject reduction and equivalence (Proposition~\ref{prop:sr}).




In order to prove the Substitution lemma we
extend the type assignment system $\rcl\cap$ with a new rule for typing the substitution operator, thus obtaining an auxiliary system $\rclsub\cap$ that assigns types to $\rclsub$-terms.
\begin{definition}
\label{def:rcl-sub-typ}
\begin{itemize}
\item[(i)] The type assignment system $\rclsub\cap$ consists of rules from Figure~\ref{fig:typ-rcl-int} plus the following $(Subst)$ rule:
\begin{displaymath}
\infer[(Subst)]{\Gamma, \dztop \sqcap \Delta_1 \sqcap ... \sqcap
\Delta_n \vdashsub M\isub{N}{x}:\tS}
                    {\Gamma, x:\cap^n_i \tT_i \vdashsub M:\tS & \Delta_0 \vdash N:\tT_0 & ... & \Delta_n \vdash N:\tT_n}
\end{displaymath}
\item[(ii)] Typing judgements in the system $\rclsub\cap$ are denoted by $\Gamma \vdashsub M:\tS$.
\end{itemize}
\end{definition}

The system $\rclsub\cap$ is also syntax-directed, and assigns strict types to $\rclsub$-terms. Therefore, it represents \emph{a conservative extension} of the system $\rcl\cap$, meaning that if $\Gamma \vdashsub M:\tS$ and $M \in \Rcl$ (i.e. $M$ is substitution-free), then $\Gamma \vdash M:\tS$ and the two derivations coincide.

It is easy to adapt Lemma~\ref{lem:dom-corr} and Lemma~\ref{prop:intGL} to prove the corresponding properties of the system $\rclsub\cap$.
\begin{lemma}[Domain correspondence for $\rclsub\cap$]
\label{lem:dom-corr-sub}
Let $`G \vdashsub M:\tS$ be a typing judgment. Then ${x`: Dom(`G)}$
if and only if $x`: \Fvsub(M)$.
\end{lemma}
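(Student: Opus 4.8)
The plan is to mirror the proof of Lemma~\ref{lem:dom-corr}, proceeding by induction on the derivation of $\Gamma \vdashsub M:\tS$. Since $\rclsub\cap$ is obtained from $\rcl\cap$ by adjoining the single rule $(Subst)$, the cases for $(Ax)$, $(\to_I)$, $(\to_E)$, $(Cont)$ and $(Thin)$ go through exactly as in Lemma~\ref{lem:dom-corr}: each of these rules inserts a variable into the environment precisely when it inserts it as a free variable of the term, and deletes it precisely when it deletes it from the free variables. The only genuinely new case is $(Subst)$, and it is there that I would concentrate the argument.

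For $(Subst)$, the conclusion $\Gamma, \dztop \sqcap \Delta_1 \sqcap \cdots \sqcap \Delta_n \vdashsub M\isub{N}{x}:\tS$ is inferred from the principal premise $\Gamma, x:\cap_i^n \tT_i \vdashsub M:\tS$ together with the minor premises $\Delta_i \vdash N:\tT_i$ for $i \in \{0,\ldots,n\}$. Applying the induction hypothesis to the principal premise gives $Dom(\Gamma, x:\cap_i^n \tT_i) = \Fvsub(M)$; since the $(sub)$ condition of Figure~\ref{fig:wfsub} forces $x \in \Fvsub(M)$, this yields $Dom(\Gamma) = \Fvsub(M)\setminus\{x\}$. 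As $N$ is substitution-free, the minor premises are ordinary $\rcl\cap$ judgments, so I would invoke Lemma~\ref{lem:dom-corr} directly to get $Dom(\Delta_i) = Fv(N)$ for each $i$; and since $\dztop$ shares the domain of $\Delta_0$, it follows that $Dom(\dztop \sqcap \Delta_1 \sqcap \cdots \sqcap \Delta_n) = Fv(N)$.

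Combining these, the domain of the conclusion environment equals $Dom(\Gamma) \cup Fv(N) = (\Fvsub(M)\setminus\{x\}) \cup Fv(N)$, which by the clause for $\Fvsub[M\isub{N}{x}]$ in Figure~\ref{fig:freevarsub} is exactly $\Fvsub(M\isub{N}{x})$. The side condition $\Fvsub(M)\setminus\{x\} \cap Fv(N) = \emptyset$ of $(sub)$ makes this union disjoint, so the basis extension is well-defined and the domains match as required. The one conceptual observation worth recording is that $(Subst)$ plays a double role on the environment, at once discarding the substituted variable $x$ and merging in the free variables of $N$, which is precisely how $\isub{N}{x}$ acts on lists of free variables; once this is noted, the case presents no real difficulty.
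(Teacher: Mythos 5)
Your proof is correct and follows essentially the same route as the paper, which simply observes that the argument of Lemma~\ref{lem:dom-corr} carries over once $(Subst)$ is slotted into the rule classification. Your explicit treatment of the $(Subst)$ case is in fact slightly more precise than the paper's one-line remark (which files $(Subst)$ only under ``rules that remove variables''), since you correctly note its dual role of discarding $x$ while merging in $Fv(N)$, exactly matching the clause for $\Fvsub[M\isub{N}{x}]$.
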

\begin{proof}
The proof is the same as the proof of Lemma~\ref{lem:dom-corr}, having in mind the definition of $\Fvsub(M)$ and the fact that the rule $(Subst)$ belongs to the category of rules that remove variables.
\end{proof}
\begin{lemma}[Generation lemma for $\rclsub\cap$]
\label{prop:intGL-sub}
\begin{enumerate}
\item[(i)]\quad $\Gamma \vdashsub \lambda x.M:\tT\;\;$ iff there
exist $`a$ and $`s$ such that $\;\tT\equiv \tA\rightarrow \tS\;\;$
and $\;\Gamma,x:\tA \vdashsub M:\tS.$
\item[(ii)]\quad $\Gamma \vdashsub MN:\tS\;\;$ iff
there exist $`D_i$ and $\tT_i,\;i  = 0, \ldots, n$ such that
$\Gamma' \vdashsub M:\cap_{i}^{n}\tT_i\to \tS$ and for all $i \in
\{0, \ldots, n\}$, $\;\Delta_{i} \vdashsub N:\tT_i$ and $\;\Gamma=\Gamma',
\dztop \sqcap \Delta_{1} \sqcap \ldots \sqcap \Delta_{n}$.
\item[(iii)]\quad $\Gamma \vdashsub \cont{z}{x}{y}{M}:\tS\;\;$ iff
there exist $`G', `a, `b$ such that
$\;\Gamma=\Gamma',z:`a\cap`b$ \\
and $\;\Gamma', x:\tA, y:\tB \vdashsub M:\tS.$
\item[(iv)]\quad $\Gamma \vdashsub \weak{x}{M}:\tS\;\;$ iff
$\;\Gamma=\Gamma', x:\top$ and $\;\Gamma' \vdashsub M:\tS.$
\item[(v)]\quad $\Gamma \vdashsub M\isub{N}{x}:\tS\;\;$ iff
there exist $`D_i$ and $\tT_i,\;i  = 0, \ldots, n$ such that
$\Gamma', x:\cap_{i}^{n}\tT_i \vdashsub M:\tS$ and for all $i \in
\{0, \ldots, n\}$, $\;\Delta_{i} \vdash N:\tT_i$ and $\;\Gamma=\Gamma',
\dztop \sqcap \Delta_{1} \sqcap \ldots \sqcap \Delta_{n}$.
\end{enumerate}
\end{lemma}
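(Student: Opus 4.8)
The plan is to mirror the proof of Lemma~\ref{prop:intGL}, exploiting the fact that $\rclsub\cap$ is syntax-directed: for each syntactic shape of a term there is exactly one typing rule whose conclusion matches it. Clauses (i)--(iv) are verbatim the statements of Lemma~\ref{prop:intGL}, and their proofs carry over unchanged once we observe that adjoining the rule $(Subst)$ creates no new way of typing an abstraction, application, duplication, or erasure, since $(Subst)$ applies only to terms of the form $M\isub{N}{x}$. Hence the only genuinely new case is (v).

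For each clause the ``if'' direction is immediate: assembling the stated premises under the corresponding rule ($(\to_I)$, $(\to_E)$, $(Cont)$, $(Thin)$, $(Subst)$ respectively) produces the asserted typing judgement. For the ``only if'' direction I would argue by inspection of the last rule applied in a derivation of $\Gamma \vdashsub M:\tS$. Because the system is syntax-directed, the outermost constructor of $M$ determines this last rule uniquely, so reading off the premises of that rule delivers exactly the components claimed to exist, while its conclusion fixes the decomposition of $\Gamma$.

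Concretely for (v): if $\Gamma \vdashsub M\isub{N}{x}:\tS$, the derivation must end with $(Subst)$, whose premises are $\Gamma', x:\cap_i^n\tT_i \vdashsub M:\tS$ together with $\Delta_i \vdash N:\tT_i$ for $i \in \{0,\ldots,n\}$, and whose conclusion forces $\Gamma = \Gamma', \dztop \sqcap \Delta_1 \sqcap \ldots \sqcap \Delta_n$. One point deserving care is that the judgements for $N$ use the base turnstile $\vdash$ rather than $\vdashsub$: this is legitimate because in a well-formed term $M\isub{N}{x}$ the argument $N$ is substitution-free (it lies in $\Rcl$), precisely matching the premise of $(Subst)$ as stated in Definition~\ref{def:rcl-sub-typ}.

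I expect no real obstacle. The only bookkeeping is the environment-splitting in (ii) and (v), together with the observation, via Lemma~\ref{lem:dom-corr-sub}, that the free-variable side conditions attached to the rules are automatically consistent with the domains of the bases involved, so no spurious case can arise. The argument is therefore a routine case analysis on the last rule, with syntax-directedness guaranteeing that this rule is unique.
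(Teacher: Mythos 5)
Your proposal is correct and follows essentially the same route as the paper, which likewise dispatches the lemma by appealing to syntax-directedness (so the last rule is determined by the outermost constructor) together with Lemma~\ref{lem:dom-corr-sub} for the bookkeeping on bases. Your extra remark about the plain turnstile $\vdash$ for the premises on $N$ in clause (v) is accurate and consistent with the $(Subst)$ rule as given in Definition~\ref{def:rcl-sub-typ}.
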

\begin{proof} The proof is straightforward since all the rules are syntax directed, and relies on Lemma~\ref{lem:dom-corr-sub}.
\end{proof}


To prove Lemma~\ref{lem:sub-typpres} we will need the definition of contexts.

\begin{definition}[$\rclsub$-Contexts]
\label{def:cntx}
$$\cntx ::= [\;] \; |\; \lambda x.\cntx \; |\; M\cntx \; |\; \cntx M \; |\; \weak{x}{\cntx} \; |\; \cont{x}{x_1}{x_2}{\cntx} \; |\; \cntx \isub{N}{x}$$
\end{definition}

\begin{lemma}[Type preservation under $\rclsubredc$]
\label{lem:sub-typpres}
\rule{0in}{0in}

\begin{itemize}
\item[(i)]
For all $M,M' \in \Rclsub$, $N \in \Rcl$, if $\Gamma \vdashsub M\isub{N}{x}:\tS$ and $M\isub{N}{x} \rclsubredc M'$, then $\Gamma \vdashsub M':\tS$.
\item[(ii)]
For all $M,M' \in \Rclsub$, $N \in \Rcl$, if $\Gamma \vdashsub \cntx[M\isub{N}{x}]:\tS$ and $\cntx[M\isub{N}{x}] \rclsubredc \cntx[M']$, then $\Gamma \vdashsub \cntx[M']:\tS$.
\end{itemize}

\end{lemma}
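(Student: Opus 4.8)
The plan is to reduce both items to a single-step statement and then pass to the reflexive-transitive closure. Concretely, I would first establish the auxiliary claim that for all $P,Q\in\Rclsub$, if $\Gamma\vdashsub P:\tS$ and $P\rclsubred Q$ in one step, then $\Gamma\vdashsub Q:\tS$. Granting this, item~(i) follows by induction on the length of the reduction $M\isub{N}{x}\rclsubredc M'$: the reflexive case is trivial, and each single step preserves both the basis $\Gamma$ and the type $\tS$ by the auxiliary claim. Item~(ii) is then immediate: since $\rclsubred$ is closed under contexts, $\cntx[M\isub{N}{x}]\rclsubredc\cntx[M']$ is a reduction between the two $\rclsub$-terms $\cntx[M\isub{N}{x}]$ and $\cntx[M']$, so the same iterated argument applies with $P=\cntx[M\isub{N}{x}]$.

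For the auxiliary single-step claim I would argue by induction on the position of the contracted redex, i.e.\ on the structure of the context enclosing it. The congruence cases---reduction under $\lambda$, under either side of an application, under $\weak{y}{\cdot}$, under $\cont{y}{y_1}{y_2}{\cdot}$, and under a substitution operator---are uniform: the Generation Lemma for $\rclsub\cap$ (Lemma~\ref{prop:intGL-sub}) decomposes the derivation of $\Gamma\vdashsub P:\tS$ into premises for the immediate subterms, the induction hypothesis applies to the reduced subterm (which retains its local basis and type), and the same typing rule recomposes a derivation with the unchanged conclusion $\Gamma\vdashsub Q:\tS$. Lemma~\ref{lem:dom-corr-sub} is used throughout to keep the domains of the bases aligned.

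The substantive work is the base case, where $P$ is an instance of the left-hand side of one of the eight rules of Figure~\ref{fig:sub-rcl}. In each case I would invoke clause~(v) of Lemma~\ref{prop:intGL-sub} to expose the $(Subst)$ premises $\Gamma',x:\cap_i^n\tT_i\vdashsub M:\tS$ and $\Delta_i\vdash N:\tT_i$ (for $i=0,\dots,n$), together with $\Gamma=\Gamma',\dztop\sqcap\Delta_1\sqcap\cdots\sqcap\Delta_n$, and then apply generation once more according to the shape of $M$ to rebuild the contractum with the same conclusion. Several cases pin down $n$: for $x\isub{N}{x}\rclsubred N$ the type $\cap_i^n\tT_i$ of $x$ must be the strict type $\tS$, forcing $n=1$ and, since $\dztop\sqcap\Delta_1=\Delta_1$ by neutrality of $\top$, yielding $\Gamma=\Delta_1\vdash N:\tS$; for $(\weak{x}{M})\isub{N}{x}\rclsubred\weak{Fv(N)}{M}$ clause~(iv) forces $\cap_i^n\tT_i=\top$, hence $n=0$, and the contractum is retyped by $|Fv(N)|$ applications of $(Thin)$, which exactly reproduce the basis $\dztop$ on $Dom(\Delta_0)=Fv(N)$.

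The main obstacle is the duplication rule $(\cont{x}{x_1}{x_2}{M})\isub{N}{x}\rclsubred\cont{Fv[N]}{Fv[N_1]}{Fv[N_2]}{M\isub{N_1}{x_1}\isub{N_2}{x_2}}$. Here clause~(iii) splits the type of $x$ as $\tA\cap\tB$ with $\Gamma'',x_1:\tA,x_2:\tB\vdashsub M:\tS$, so I must partition the family $\{\tT_i\}$ witnessing $\cap_i^n\tT_i=\tA\cap\tB$ into the part realizing $\tA$ and the part realizing $\tB$, transport the corresponding typings $\Delta_i\vdash N:\tT_i$ to the fresh renamings $N_1$ and $N_2$ (the types are unchanged and only the free variables are renamed), assemble $M\isub{N_1}{x_1}\isub{N_2}{x_2}$ by two applications of $(Subst)$, and finally reintroduce the original basis by $|Fv(N)|$ applications of $(Cont)$ that merge each renamed pair back into its source variable $y$ with type $\Gamma(y)$. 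The delicate points are verifying that the accumulated basis $\dztop\sqcap\Delta_1\sqcap\cdots\sqcap\Delta_n$ is recovered exactly under this partition--renaming--contraction, and in particular the boundary case $n=0$, where $\tA=\tB=\top$ and the sole witness $\Delta_0\vdash N:\tT_0$ must be used, via its renamings, to guarantee that $N_1$ and $N_2$ are typeable at all, so that the two $(Subst)$ steps and the subsequent contractions go through.
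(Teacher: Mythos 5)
Your proposal is correct and follows essentially the same route as the paper: the substantive content in both is the case analysis on the eight rules of Figure~\ref{fig:sub-rcl} via clause~(v) of the Generation Lemma for $\rclsub\cap$ (with the same delicate points in the erasure and duplication cases, including the renamed copies $N_1,N_2$ and the recovery of $\dztop\sqcap\Delta_1\sqcap\cdots\sqcap\Delta_n$), combined with an induction on the enclosing context for the congruence cases. Your reorganization into a single-step subject-reduction claim closed under iteration is only a cleaner packaging of the paper's split into parts (i) and (ii).
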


\begin{proof}

(i)
\noindent
The proof is by case analysis on $\rclsubred$ (Figure~\ref{fig:sub-rcl}).
We consider only some representative rules. The other rules are routine and their
proofs are analogous to the second rule we consider.
\begin{itemize}
\item Rule $x\isub{N}{x}  \rclsubred  N$.  In this case $n=1$ and $`G$ is
  empty. Recall that ${`D^{\top}\sqcap \Delta = \Delta}$.
On one hand we have
  \begin{displaymath}
    \infer[(Subst)]{\Delta \vdashsub x\isub{N}{x}: \tT}{\infer[(Ax)]{x: \tT \vdashsub x: \tT}{} & \Delta \vdashsub N:\tT & \Delta \vdashsub N:\tT}
  \end{displaymath}
and on the other hand we have
\begin{displaymath}
  `D\vdashsub N:`t
\end{displaymath}
by assumption.

\item Rule $(MP)\isub{N}{x}  \rclsubred   M\isub{N}{x} P, \;\;x \in \Fvsub(M)$. On
  one hand we have:
  \begin{footnotesize}
    \begin{displaymath}
     \hspace*{-25pt}
\parbox{\textwidth}{ 
      \infer[(Subst)]{\Gamma, `Q^\top_0 \sqcap `Q_1 \sqcap ... \sqcap
        `Q_m ,`D^\top_0 \sqcap `D_1 \sqcap ... \sqcap
        `D_n \vdashsub (M\;P)\isub{N}{x}:\tS}{\infer[\to_E]{\Gamma, x:\cap_{i}^{n}\tU_{i}, `Q^\top_0\sqcap `Q_1\sqcap
          ... \sqcap`Q_m \vdashsub M\;P:\tS}{\Gamma, x:\cap_{i}^{n}\tU_{i} \vdashsub M:\cap^m_i `r_i \to \tS & `Q_0
          \vdashsub P:`r_0 \ldots\; `Q_m \vdashsub P:`r_m} & \Delta_0 \vdashsub
        N:\tT_0 & ... & \Delta_n \vdashsub N:\tT_n}
}
    \end{displaymath}
  \end{footnotesize}

One the other hand we have:
  \begin{footnotesize}
    \begin{displaymath}
     \hspace*{-25pt}
\parbox{\textwidth}{ 
\infer[(\to_E)]{\Gamma, `Q^\top_0 \sqcap `Q_1 \sqcap ... \sqcap
        `Q_m ,`D^\top_0 \sqcap `D_1 \sqcap ... \sqcap
        `D_,n \vdashsub M\isub{N}{x}\;P:\tS}{\infer[(Subst)]{\Gamma, `D^\top_0\sqcap `D_1\sqcap
          ... \sqcap`D_n \vdashsub M\isub{N}{x}:\cap^m_i `r_i \to \tS}{\Gamma, x:\cap_{i}^{n}\tU_{i} \vdashsub M:\cap^m_i `r_i \to \tS& \Delta_0 \vdashsub N:\tT_0 & ... & \Delta_n \vdashsub N:\tT_n}  & `Q_0
        \vdashsub P:`r_0 \ldots\; `Q_m \vdashsub P:`r_m}
}
    \end{displaymath}
  \end{footnotesize}

\item Rule $(\weak{x}{M})\isub{N}{x}  \rclsubred   \weak{Fv(N)}{M}$.  In this case
  $n=0$. On one hand we have:
  \begin{footnotesize}
    \begin{displaymath}
      \prooftree
      \prooftree
      `G \vdashsub M:`s
      \justifies `G, x: \top \vdashsub \weak{x}{M}
      \using (Thin)
      \endprooftree
      `D_0 \vdashsub N:`t_0
      \justifies `G, `D^\top_0 \vdashsub (\weak{x}{M})\isub{N}{x}:`s
      \using (Subst)
      \endprooftree
    \end{displaymath}
  \end{footnotesize}

On the other hand we have:
  \begin{displaymath}
    \prooftree
    \prooftree
    `G \vdashsub M:`s
    \justifies \vdots
    \using (Thin)
    \endprooftree
    \justifies `G, `D^\top_0 \vdashsub \weak{Fv(N)}{M}:`s
    \using (Thin)
    \endprooftree
  \end{displaymath}

\item Rule $(\cont{x}{x_1}{x_2}{M})\isub{N}{x} \rclsubred
  \cont{Fv[N]}{Fv[N_1]}{Fv[N_2]}{M\isub{N_1}{x_1}\isub{N_2}{x_2}}$.
In order to make the proof tree readable, we adopt the following abbreviations:  
\begin{eqnarray*}
`t_1 &\triangleq& \cap_{i}^{n_1}`t_{1,i}\\
`t_2 &\triangleq& \cap_{i}^{n_2}`t_{2,i}\\
`D_1 &\triangleq& `D_{1,1}\sqcap ... \sqcap`D_{1,n_1}\\
`D_2 &\triangleq& `D_{2,1}\sqcap ... \sqcap`D_{2,n_2}\\
\mathfrak{L}_1 &\triangleq&`D_{1,1} \vdashsub N:`t_{1,1} \ ...\ `D_{1,n_1} \vdashsub N:`t_{1,n_1}\\
\mathfrak{L}_2 &\triangleq& `D_{2,1} \vdashsub N:`t_{2,1} ... `D_{2,n_2} \vdashsub N:`t_{2,n_2}
\end{eqnarray*}
Since $N_1$ and $N_2$ are obtained from $N$ only by renaming the free variables with fresh variables of the same type, for each derivation $`D_{1,i} \vdashsub N:`t_{1,i}$ where $i \in \{1,...,n_1\}$ we have $`D'_{1,i} \vdashsub N_1:`t_{1,i}$, where $`D_{1,i}$ and $`D'_{1,i}$ differ only in variables names. Analogously, for each derivation $`D_{1,j} \vdashsub N:`t_{1,j}$ where $i \in \{1,...,n_2\}$ we have $`D''_{1,j} \vdashsub N_2:`t_{1,j}$, where $`D_{1,j}$ and $`D''_{1,j}$ differ only in variables names. Now, we also adopt the following abbreviations: 
\begin{eqnarray*}
`D'_1 &\triangleq& `D'_{1,1}\sqcap ... \sqcap`D'_{1,n_1}\\
`D''_2 &\triangleq& `D''_{2,1}\sqcap ... \sqcap`D''_{2,n_2}\\
\mathfrak{L'}_1 &\triangleq&`D'_{1,1} \vdashsub N_1:`t_{1,1} \ ...\ `D'_{1,n_1} \vdashsub N_1:`t_{1,n_1}\\
\mathfrak{L''}_2 &\triangleq& `D''_{2,1} \vdashsub N_2:`t_{2,1} ... `D''_{2,n_2} \vdashsub N_2:`t_{2,n_2}
\end{eqnarray*}

Moreover, we do not consider the environment $`D_0$ since it is useless here.
Now, on one hand we have:
\begin{displaymath}
  \prooftree
  \prooftree
  `G, x_1:`t_1, x_2:`t_2 \vdashsub M:`s
  \justifies `G, x:`t_1\cap`t_2 \vdashsub \cont{x}{x_1}{x_2}{M}: `s
  \using (Cont)
  \endprooftree
   \mathfrak{L}_1 \quad \mathfrak{L}_2
  \justifies `G, `D_1\sqcap`D_2 \vdashsub (\cont{x}{x_1}{x_2}{M})\isub{N}{x}: `s
  \using (Subst)
  \endprooftree
\end{displaymath}
On the other hand we have
\begin{displaymath}
  \prooftree
  \prooftree
  \prooftree
  \prooftree
  `G, x_1:`t_1, x_2:`t_2 \vdashsub M:`s
  \quad \mathfrak{L'}_1
  \justifies `G ,`D'_1, x_2:`t_2 \vdashsub M\isub{N_1}{x_1}:`s
  \using (Subst)
  \endprooftree
  \quad \mathfrak{L''}_2
  \justifies `G, `D'_1, `D''_2 \vdashsub M\isub{N_1}{x_1}\isub{N_2}{x_2}:`s
  \using (Subst)
  \endprooftree
  \justifies \vdots
  \using (Cont)
  \endprooftree
  \justifies `G, `D_1\sqcap`D_2 \vdashsub \cont{Fv[N]}{Fv[N_1]}{Fv[N_2]}{M\isub{N_1}{x_1}\isub{N_2}{x_2}}:`s
  \using (Cont)
  \endprooftree
\end{displaymath}

\end{itemize}

(ii)
We will denote by $Q \equiv \cntx[M\isub{N}{x}]$ and $Q' \equiv \cntx[M']$. If $Q \rclsubredc Q'$ this means that $M\isub{N}{x} \rclsubredc M'$. We prove the statement by induction on the structure of a context containing a redex. We provide the proof for the basic case $\cntx = [\; ]$ and three additional cases $\cntx = \lambda x.\cntx'$,
$\cntx = \weak{x}{\cntx'}$ 
and $\cntx = \cntx' \isub{P}{y}$,
the proof being similar for the remaining context kinds.
\begin{itemize}

\item
Case $\cntx = [\;]$. This is the first part of this lemma (i).

\item
Case $\cntx = \lambda x.\cntx'$. Then $Q = \lambda x.\cntx'[M\isub{N}{x}]$ and $Q' = \lambda x.\cntx'[M']$.
By assumption $\Gamma \vdashsub Q:\tS$, i.e.\ $\Gamma \vdashsub \lambda x.\cntx'[M\isub{N}{x}]:\tS$. Using Generation lemma for $\rclsub\cap$ (Lemma~\ref{prop:intGL-sub}(i)) we obtain that there exist $\tA$ and $\tT$ such that $\tS = \tA \to \tT$ and $\Gamma, x:\tA \vdashsub \cntx'[M\isub{N}{x}]:\tT$. Since $M\isub{N}{x} \rclsubredc M'$ by IH we have that $\Gamma, x:\tA \vdashsub \cntx'[M']:\tT$. Using rule $(\to_{I})$ we can conclude that $\Gamma \vdashsub \lambda.\cntx'[M']:\tA \to \tT = \tS$.

\item
Case $\cntx = \weak{x}{\cntx'}$. Then $Q = \weak{x}{\cntx'[M\isub{N}{x}]}$ and $Q' = \weak{x}{\cntx'[M']}$.
By assumption $\Gamma \vdashsub Q:\tS$, i.e.\ $\Gamma \vdashsub \weak{x}{\cntx'[M\isub{N}{x}]}:\tS$. Using Generation lemma for $\rclsub\cap$ (Lemma~\ref{prop:intGL-sub}(iv)) we obtain that $\Gamma = \Gamma', x:\top$ and $\Gamma' \vdashsub \cntx'[M\isub{N}{x}]:\tS$. Since $M\isub{N}{x} \rclsubredc M'$  
by IH we have that $\Gamma' \vdashsub \cntx'[M']:\tS$. Using rule $(Thin)$ we can conclude that $\Gamma \vdashsub \weak{x}{\cntx'[M']}:\tS$.

\item 
Case $\cntx =\cntx' \isub{P}{y}$. Then $Q = \cntx'\isub{P}{y}[M\isub{N}{x}]$ and $Q' = \cntx'\isub{P}{y}[M']$.
By assumption $\Gamma \vdashsub Q:\tS$, i.e.\ $\Gamma \vdashsub \cntx'\isub{P}{y}[M\isub{N}{x}]:\tS$. Using Generation lemma for $\rclsub\cap$ (Lemma~\ref{prop:intGL-sub}(v)) we obtain that there exist $\Delta_{i}$ and $\tT_{i}, i=0, \ldots, n$ such that $\Gamma', y:\cap_{i}^{n}\tT_{i} \vdashsub \cntx'[M\isub{N}{x}]:\tS$ and for all $i \in \{0, \ldots, n\}$, $\Delta_{i} \vdashsub P:\tT_{i}$ and $\Gamma = \Gamma', \Delta_{0}^{\top} \sqcap \Delta_{1} \sqcap \ldots \sqcap \Delta_{n}.$
Since $M\isub{N}{x} \rclsubredc M'$
by IH we have that $\Gamma', y:\cap_{i}^{n}\tT_{i} \vdashsub
\cntx'[M']:\tS$. Using rule $(Subst)$ we can conclude that $\Gamma \vdashsub
  \cntx'\isub{P}{y}[M']:\tS$.

\end{itemize}
\end{proof}

\begin{lemma}[Substitution lemma for $\rcl\cap$]
\label{prop:sub-lemma} If $\;\Gamma, x:\cap_i^{n} \tT_i \vdash
M:\tS\;$ and  $\;\Delta_i
\vdash N:\tT_i$, for all $i \in \{0, \ldots, n\}$, then $\;\Gamma,\dztop \sqcap \Delta_1
\sqcap ... \sqcap \Delta_n \vdash M\ISUB{N}{x}:\tS.$
\end{lemma}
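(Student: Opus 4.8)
The plan is to reduce the Substitution Lemma for $\rcl\cap$ (Lemma~\ref{prop:sub-lemma}) to the already-established Type preservation under $\rclsubredc$ (Lemma~\ref{lem:sub-typpres}), using the auxiliary system $\rclsub\cap$ as the bridge. Recall that $M\ISUB{N}{x}$ is by definition the $\rclsubred$-normal form $M\isub{N}{x}\subnf$, which lives in $\Rcl$ by Proposition~\ref{prop:nf}. So the statement is really about typing the result of fully evaluating the substitution operator, and the $(Subst)$ rule of $\rclsub\cap$ is designed precisely to type the unevaluated term $M\isub{N}{x}$ with exactly the hypotheses we are given.

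First I would assemble the hypotheses into a single $\rclsub\cap$ derivation of the redex. By assumption $\Gamma, x:\cap_i^n\tT_i \vdash M:\tS$ and $\Delta_i \vdash N:\tT_i$ for all $i\in\{0,\ldots,n\}$. Since $\rclsub\cap$ is a conservative extension of $\rcl\cap$ (these are the same derivations when the terms are substitution-free), these judgements also hold with $\vdashsub$. A single application of the $(Subst)$ rule then yields
\[
\Gamma, \dztop \sqcap \Delta_1 \sqcap \ldots \sqcap \Delta_n \vdashsub M\isub{N}{x}:\tS.
\]
Here I must check the side conditions that make $M\isub{N}{x}$ a well-formed $\rclsub$-term (rule $(sub)$ of Figure~\ref{fig:wfsub}), namely $x\in\Fvsub(M)$ and the disjointness of free variables; the first follows from $x\in Dom(\Gamma,x:\cap_i^n\tT_i)$ via the Domain correspondence (Lemma~\ref{lem:dom-corr-sub}), and the linearity side conditions are inherited from the hypothesis that all the displayed judgements are derivable.

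Next I would invoke normalisation and type preservation. By Proposition~\ref{prop:nf}, $M\isub{N}{x}$ reduces under $\rclsubredc$ to its normal form $M\isub{N}{x}\subnf = M\ISUB{N}{x} \in \Rcl$. Applying Lemma~\ref{lem:sub-typpres}(i) to the derivation just built, with $M\isub{N}{x} \rclsubredc M\ISUB{N}{x}$, preserves the type, giving
\[
\Gamma, \dztop \sqcap \Delta_1 \sqcap \ldots \sqcap \Delta_n \vdashsub M\ISUB{N}{x}:\tS.
\]
Finally, since $M\ISUB{N}{x}$ is substitution-free, conservativity of $\rclsub\cap$ over $\rcl\cap$ lets me drop the subscript on the turnstile and conclude $\Gamma, \dztop \sqcap \Delta_1 \sqcap \ldots \sqcap \Delta_n \vdash M\ISUB{N}{x}:\tS$, which is exactly the claim.

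I expect the genuine work to be already discharged inside Lemma~\ref{lem:sub-typpres}, so the main obstacle in the present proof is bookkeeping rather than mathematics: I must be careful that the environment produced by the $(Subst)$ rule, $\Gamma, \dztop\sqcap\Delta_1\sqcap\ldots\sqcap\Delta_n$, matches the conclusion of the lemma verbatim (this is guaranteed because $(Subst)$ and $(\to_E)$ combine the contexts identically), and that $\rclsubredc$ type preservation is applied to the correct redex. One subtle point worth flagging is the case $n=0$, where $\cap_i^0\tT_i=\top$ and the role of $\Delta_0$ is only to certify that $N$ is typeable; here $\dztop\sqcap\Delta_1\sqcap\ldots\sqcap\Delta_n$ degenerates to $\dztop$, and I would note explicitly that the derivation of $M\ISUB{N}{x}$ then assigns the erased occurrence of $x$ the type $\top$ via $(Thin)$, consistent with the erasure produced by the rule $(\weak{x}{M})\isub{N}{x}\rclsubred\weak{Fv(N)}{M}$.
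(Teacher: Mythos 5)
Your proposal is correct and follows essentially the same route as the paper's proof: embed the hypotheses into $\rclsub\cap$, apply the $(Subst)$ rule to type $M\isub{N}{x}$, invoke termination/confluence of $\rclsubredc$ together with type preservation (Lemma~\ref{lem:sub-typpres}) to type the normal form $M\ISUB{N}{x}$, and conclude by conservativity of $\rclsub\cap$ over $\rcl\cap$. Your extra remarks on the well-formedness side conditions and the $n=0$ case are sound additions but do not change the argument.
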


\begin{proof} \rule{0in}{0in}
From assumptions $\;\Gamma, x:\cap_i^{n} \tT_i \vdash
M:\tS\;$ and  ${\Delta_i
\vdash N:\tT_i}$, for all $i \in \{0, \ldots, n\}$, we get that $\;\Gamma, x:\cap_i^{n} \tT_i \vdashsub
M:\tS\;$ and for all $i \in \{0, \ldots, n\}$, $\;\Delta_i
\vdashsub N:\tT_i$. Applying $(Subst)$ rule we get $\;\Gamma,\dztop \sqcap \Delta_1
\sqcap ... \sqcap \Delta_n \vdashsub M\isub{N}{x}:\tS.$ Now, using termination and confluence of $\rclsubred$ reduction (Proposition~\ref{prop:box-term} and Proposition~\ref{prop:rclsubred-conf}) and preservation of type under the $\rclsubred$ reduction (Lemma~\ref{lem:sub-typpres}) we obtain that the unique normal form $M\ISUB{N}{x}$ exists and that $\;\Gamma,\dztop \sqcap \Delta_1
\sqcap ... \sqcap \Delta_n \vdashsub M\ISUB{N}{x}:\tS.$ Since $M\ISUB{N}{x} \in \Rcl$ (Proposition~\ref{prop:nf}), having that $\rclsub\cap$ is conservative extension of $\rcl\cap$, we finally get that $\;\Gamma,\dztop \sqcap \Delta_1
\sqcap ... \sqcap \Delta_n \vdash M\ISUB{N}{x}:\tS.$
\end{proof}

\begin{proposition}[Type preservation under reduction and equivalence in $\rcl\cap$]
\label{prop:sr} For every $\rcl$-term $M$: if $\;{\Gamma \vdash
M:\tS}\;$ and $M \rightarrowc M'$ or $M\equiv_{\rcl}  M'$, then $\;\Gamma \vdash
M':\tS.$
\end{proposition}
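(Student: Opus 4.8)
The plan is to prove Proposition~\ref{prop:sr} (Type preservation under reduction and equivalence) by reducing the general reflexive-transitive-closure statement to the case of a single step, and then dispatching the single-step cases by a case analysis on which rule of Figure~\ref{fig:red-rcl} (for $\rightarrowc$) or Figure~\ref{fig:equiv-rcl} (for $\equiv_{\rcl}$) is applied. Since typability should be preserved by the congruence closure, I would first observe that it suffices to handle the root-level redexes and equivalences, because type preservation propagates through contexts by exactly the same inductive argument already used in Lemma~\ref{lem:sub-typpres}(ii) (one walks down the context using the Generation Lemma~\ref{prop:intGL} at each constructor and reassembles the derivation with the same rule). For the transitive closure, a straightforward induction on the length of the reduction (or equivalence) sequence then reduces everything to the one-step, root-level situation.

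The core of the argument is the treatment of the $(\beta)$-rule, $(\lambda x.M)N \rightarrow M\ISUB{N}{x}$, and here I would invoke the machinery already developed. From $\Gamma \vdash (\lambda x.M)N : \tS$, the Generation Lemma~\ref{prop:intGL}(ii) gives $\Gamma' \vdash \lambda x.M : \cap_i^n \tT_i \to \tS$ together with $\Delta_i \vdash N : \tT_i$ for each $i \in \{0,\dots,n\}$ and $\Gamma = \Gamma', \dztop \sqcap \Delta_1 \sqcap \dots \sqcap \Delta_n$. Applying Generation Lemma~\ref{prop:intGL}(i) to the first judgment yields $\Gamma', x:\cap_i^n \tT_i \vdash M : \tS$. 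These are precisely the hypotheses of the Substitution Lemma~\ref{prop:sub-lemma}, which immediately delivers $\Gamma', \dztop \sqcap \Delta_1 \sqcap \dots \sqcap \Delta_n \vdash M\ISUB{N}{x} : \tS$, i.e.\ $\Gamma \vdash M\ISUB{N}{x} : \tS$. Thus the $(\beta)$-case is entirely absorbed by the Substitution Lemma, which is where the real work lives; here it is essentially a bookkeeping application.

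For the remaining rules I would proceed uniformly: each of the $(\gamma)$, $(\omega)$, $(\gamma\omega)$ reductions and each structural equivalence $(\varepsilon_1)$--$(\varepsilon_4)$ merely permutes or reshuffles the order in which the rules $(Cont)$, $(Thin)$, $(\to_I)$, $(\to_E)$ are applied, without changing the underlying leaves. For a given rule, I decompose the typing of the left-hand side by iterated use of the Generation Lemma, read off the relevant bases and types, and then rebuild a derivation of the right-hand side using the same rules in a different order. Two points deserve care. First, in $(\gamma_2)$/$(\gamma_3)$ and $(\omega_2)$/$(\omega_3)$ one must check the side conditions on free variables (e.g.\ $x_1,x_2 \notin Fv(N)$) align with the domain bookkeeping, which is guaranteed by Lemma~\ref{lem:dom-corr} and the fact that reduction preserves free variables (Proposition~\ref{prop:pres-of-FV}). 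Second, the rule $(\gamma\omega_2)$, $\cont{x}{x_1}{x_2}{(\weak{x_1}{M})} \rightarrow M\ISUB{x}{x_2}$, again invokes the Substitution Lemma, but the substituted term is a variable, so the relevant type of $x_1$ coming from $(Thin)$ is $\top$, and the intersection $\top \cap \tB$ collapses to $\tB$ exactly as needed so that the type carried by $x$ on the right matches.

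The step I expect to be the main obstacle is verifying, in the $(Cont)$-involving equivalences $(\varepsilon_3)$ and $(\varepsilon_4)$ and the propagation rules $(\gamma_1)$--$(\gamma_3)$, that the intersection types attached to the duplicated variables are recombined consistently when the nesting or ordering of contractions changes. Because $(Cont)$ is the unique rule that creates an intersection, rearranging two nested contractions forces one to track how $\tA \cap \tB$ is distributed across the leaves, and the argument relies crucially on the commutativity and associativity of $\sqcap$ and of $\cap$ (already assumed for the type syntax and the bases intersection). Once that accounting is made explicit, each case closes by reapplying $(Cont)$ in the new order; none of the individual verifications is deep, but the $(\varepsilon_4)$ case in particular requires keeping the two independent contraction contexts disjoint, which is exactly what the side conditions $x \neq y_1,y_2$ and $y \neq x_1,x_2$ and Barendregt's convention secure.
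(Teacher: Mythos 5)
Your proposal follows essentially the same route as the paper's proof: reduce to root-level single steps by stability under contexts, dispatch $(\beta)$ via Generation Lemma~\ref{prop:intGL}(ii),(i) followed by the Substitution Lemma~\ref{prop:sub-lemma}, handle $(\gamma\omega_2)$ by the same lemma with a variable substituted (using $\top\cap\tB=\tB$), and observe that the remaining permutation rules and equivalences just reshuffle applications of $(Cont)$, $(Thin)$, $(\to_I)$, $(\to_E)$. The paper leaves those remaining cases as "easy since they do not essentially change the structure of the term," so your additional bookkeeping for $(\varepsilon_3)$/$(\varepsilon_4)$ is simply a more explicit rendering of the same argument.
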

\begin{proof}
The proof is done by case analysis on the applied reduction. Since
 the property is stable by context, we can without loss of
 generality assume that the
 reduction takes place at the outermost position of the term.
 Here we just show several cases. We will use GL as an abbreviation for Generation lemma (Lemma~\ref{prop:intGL}).
  \begin{itemize}
  \item Case $(\beta)$: Let $\Gamma \vdash (\lambda x.M)N:\tS$. We
  want to show that $\Gamma \vdash M\ISUB{N}{x}:\tS$.
  From $\Gamma \vdash (\lambda x.M)N:\tS\;$ and from GL(ii) it follows that
  $\Gamma = \Gamma',\dztop \sqcap \Delta_1 \sqcap \ldots \sqcap \Delta_n$,  and that there is a type $\cap_i^{n} \tT_i$ such that
  for all $i=0, \ldots, n$, $\Delta_i \vdash N:\tT_i,\;$ and
  $\Gamma' \vdash \lambda x.M:\cap_i^n\tT_i \to \tS$. Further,
  by GL(i) we have that $\Gamma',x: \cap_i^n\tT_i \vdash M:\tS$.
  Now, all the assumptions of Substitution lemma~\ref{prop:sub-lemma}
  hold, yielding $\Gamma',\dztop \sqcap \Delta_1 \sqcap \ldots \sqcap \Delta_n \vdash M\ISUB{N}{x}:\tS$
  which is exactly what we need, since $\Gamma = \dztop \sqcap \Gamma',\Delta_1 \sqcap \ldots \sqcap \Delta_n$.
  \item Case $(\gamma\omega_2)$: Let $\Gamma \vdash
 \cont{x}{x_1}{x_2}{\weak{x_1}{M}}:\tS$.
 We are showing that $\Gamma \vdash M\ISUB{x}{x_2}:\tS$.\\
 From the first sequent by GL(iii) we have that $\Gamma =
 \Gamma',x:\tA \cap \tB$ and $\Gamma', x_1:\tA,x_2:\tB
 \vdash\weak{x_1}{M}:\tS$. Further, by GL(iv) we conclude that $\tA \equiv \top$, $x:\top \cap \tB \equiv \tB$ and $\Gamma', x_2:\tB \vdash
 M:\tS$. Since $\tB= \cap_i^n \tT_i$ for some $n \geq 0$, by applying Substitution lemma~\ref{prop:sub-lemma} to  $\Gamma', x_2:\tB \vdash
 M:\tS$ and $x:\tT_i \vdash x:\tT_i,\;i=0,\ldots,n$ we
 get $\Gamma \vdash M\ISUB{x}{x_2}:\tS$.
 \item The other rules are easy since they do not essentially
 change the structure of the term.
\end{itemize}
\end{proof}

Due to this property, equivalent (by $\equiv_{\rcl}$) terms have the same type.

\section{Characterisation of strong normalisation in $\rcl$}
\label{sec:typeSN}

\subsection{SN $\Rightarrow$ Typeability in $\rcl \cap$}
\label{sec:SNtype}

We want to prove that if a $\rcl$-term is strongly normalising (SN), then it is typeable
in the system $\rcl \cap$. We proceed in two steps:
\begin{enumerate}
\item we show that all $\rcl$-normal forms are typeable and
\item we prove the redex subject expansion.
\end{enumerate}

\begin{proposition}\label{prop:nf-are-typ}
$\rcl$-normal forms are typeable in the system $\rcl \cap$.
\end{proposition}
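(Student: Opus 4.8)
The plan is to proceed by structural induction on the grammar of normal forms given in Definition~\ref{def:setOfNF}, treating the two mutually recursive categories $M_{nf}$ and $E_{nf}$ simultaneously. For each production I exhibit a basis $\Gamma$ and a strict type $\tS$ with $\Gamma \vdash M_{nf}:\tS$ (respectively $\Gamma \vdash E_{nf}:\tS$), building the derivation bottom-up from the inductive hypotheses using exactly the rule whose conclusion matches the outermost constructor. Throughout, I use the Domain correspondence lemma (Lemma~\ref{lem:dom-corr}) to locate the free variables of a subterm inside the domain of its basis, which is precisely what lets me split $\Gamma$ into the shape required by $(\to_I)$, $(Cont)$ and $(Thin)$.

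The abstraction, erasure and duplication cases are immediate. For $\lambda x.M_{nf}$ with $x \in Fv(M_{nf})$, the induction hypothesis gives $\Gamma \vdash M_{nf}:\tS$; by Lemma~\ref{lem:dom-corr} we have $\Gamma = \Gamma',x:\tA$, and $(\to_I)$ yields $\Gamma' \vdash \lambda x.M_{nf}:\tA \to \tS$. For $\lambda x.\weak{x}{M_{nf}}$ I first apply $(Thin)$ to the typing of $M_{nf}$ (legitimate since $x \notin Fv(M_{nf})$, hence $x \notin Dom(\Gamma)$) and then $(\to_I)$, obtaining the type $\top \to \tS$. For $\cont{x}{x_1}{x_2}{M_{nf}}$ with $M_{nf} \equiv P_{nf}Q_{nf}$, the hypothesis types the body, Lemma~\ref{lem:dom-corr} places $x_1,x_2$ in the domain so that $\Gamma = \Gamma',x_1:\tA,x_2:\tB$, and $(Cont)$ produces $\Gamma',x:\tA \cap \tB \vdash \cont{x}{x_1}{x_2}{M_{nf}}:\tS$. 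The two $E_{nf}$ productions are handled identically to the erasure case by a single application of $(Thin)$.

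The one case deserving care is the head-variable application $x M_{nf}^1 \ldots M_{nf}^n$. When $n=0$ this is just the variable $x$, typed by $(Ax)$ with a fresh atom $p$. For $n>0$ I type the arguments by induction, say $\Gamma_i \vdash M_{nf}^i:\tS_i$, pick a fresh atom $p$, and assign the head the arrow type $\tS_1 \to \cdots \to \tS_n \to p$ via $(Ax)$. I then peel the applications off one at a time, left to right, each step being an instance of $(\to_E)$ with the intersection reduced to a single conjunct: the spare premise $\Delta_0 \vdash M_{nf}^i:\tT_0$ is taken to be the very typing supplied by the induction hypothesis, so that $\dztop \sqcap \Delta_1 = \Delta_1$ and no extra information is introduced. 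Linearity of terms is what keeps this clean: the free variables of $x$ and of the $M_{nf}^i$ are pairwise disjoint, so the bases accumulated along the spine never clash and the final basis is simply their disjoint union together with the declaration for $x$. The main (and essentially only) obstacle is precisely this spine bookkeeping: matching the context-splitting shape of $(\to_E)$ to the left-associated application and checking that the treatment of $\Delta_0$ agrees with the $n \geq 1$ reading of the rule.
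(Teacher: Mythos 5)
Your proposal is correct and follows essentially the same route as the paper's proof: structural induction on the grammar of Definition~\ref{def:setOfNF}, with the abstraction/erasure/duplication cases discharged by $(\to_I)$, $(Thin)$ and $(Cont)$, and the head-variable spine $x M_{nf}^{1}\ldots M_{nf}^{n}$ typed by giving $x$ an arrow type built from the argument types and applying $(\to_E)$ $n$ times, using the neutrality of $\Delta_0^{\top}$ to absorb the spare premise. The only (cosmetic) difference is that you take singleton intersections for the argument types where the paper writes general intersections $\cap_{j}^{m_i}\sigma_{j}^{i}$.
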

\begin{proof}
By induction on the structure of $M_{nf}$ and $E_{nf}$, given in Definition~\ref{def:setOfNF}. The basic case is a variable, namely $xM_{nf}^{1} \ldots M_{nf}^{n}$, where $n=0$. It is typeable by $(Ax)$. Cases involving duplication and erasure operators are easy, because the associated type assignment rules $(Cont)$ and $(Thin)$ preserve the type of a term. If $M_{nf} = \lambda x.\weak{x}{N_{nf}$}, then by the induction hypothesis $\Gamma \vdash N_{nf}:\sigma$, hence  $\Gamma, x:\top \vdash \weak{x}{N_{nf}}:\sigma$ and $\Gamma \vdash \lambda x.\weak{x}{N_{nf}}:\top \to \sigma$. Further, we discuss the case $xM_{nf}^{1} \ldots M_{nf}^{n}$, where $n \geq 1$. In this case, $M_{nf}^{1}, \ldots ,M_{nf}^{n}$ are typeable by the induction hypothesis, say $\Gamma_{j}^{i} \vdash M_{nf}^{i}:\sigma_{j}^{i},\; i\in\{1,...,n\},\; j\in\{1,...,m_i\}$. Then, since $x$ is a fresh variable, taking $x:\cap_{j}^{m_1}\sigma_{j}^{1} \to (\cap_{j}^{m_2}\sigma_{j}^{2} \to \ldots (\cap_{j}^{m_n}\sigma_{j}^{n} \to \tT)\ldots)$ and applying $(\to_E)$ rule $n$ times, we obtain $\Gamma \vdash xM_{nf}^{1} \ldots M_{nf}^{n}:\tT$, where $\Gamma= x:\cap_{j}^{m_1}\sigma_{j}^{1} \to (\cap_{j}^{m_2}\sigma_{j}^{2} \to \ldots (\cap_{j}^{m_n}\sigma_{j}^{n} \to \tT)\ldots), \gztopone \sqcap \Gamma_{1}^{1}\sqcap ... \sqcap \Gamma_{m_1}^{1}, \ldots, \gztopn \sqcap \Gamma_{1}^{n}\sqcap ... \sqcap \Gamma_{m_n}^{n}$.
\end{proof}




\begin{lemma}
\label{prop:box-exp}
For all $M,M' \in \Rclsub$ and $N \in \Rcl$, if $\;\Gamma \vdashsub M':\tS$, $M\isub{N}{x} \rclsubred M'$,
and $N$ is typeable, then $\Gamma \vdashsub M\isub{N}{x}:\tS$.
\end{lemma}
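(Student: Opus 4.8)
The plan is to prove this statement as the converse of the type preservation result Lemma~\ref{lem:sub-typpres}(i): I will reverse, step by step, the derivation transformations carried out there. Since $M\isub{N}{x} \rclsubred M'$ is a single-step reduction whose redex is the whole substitution term, I would argue by case analysis on which rule of Figure~\ref{fig:sub-rcl} is applied. In every case the strategy is identical: starting from the given derivation of $\Gamma \vdashsub M':\tS$, apply the Generation Lemma for $\rclsub\cap$ (Lemma~\ref{prop:intGL-sub}) to peel the contractum $M'$ apart into the typings of its immediate subterms, and then reassemble those ingredients into a derivation of the redex $M\isub{N}{x}$ whose last rule is $(Subst)$.

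For the structural rules this is routine and amounts to reading the proof trees of Lemma~\ref{lem:sub-typpres}(i) from bottom to top. For instance, in the case $(MP)\isub{N}{x} \rclsubred M\isub{N}{x}\,P$ I would first apply Lemma~\ref{prop:intGL-sub}(ii) to split $\Gamma \vdashsub M\isub{N}{x}\,P:\tS$ into a derivation of $M\isub{N}{x}$ with an arrow type together with derivations of $P$, then apply Lemma~\ref{prop:intGL-sub}(v) to decompose the $M\isub{N}{x}$ premise further; reassembling by $(\to_E)$ first and $(Subst)$ last yields $\Gamma \vdashsub (MP)\isub{N}{x}:\tS$. The abstraction, the erasure case with $x\neq y$, and the duplication case with $x\neq y$ proceed identically, using parts (i), (iv) and (iii) of the Generation Lemma respectively, each time using Lemma~\ref{lem:dom-corr-sub} to guarantee that the bound variables of the outer constructor sit inside the environment $\Gamma'$ that $N$ does not touch.

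The step I expect to carry the real weight is the erasure rule $(\weak{x}{M})\isub{N}{x} \rclsubred \weak{Fv(N)}{M}$, and this is precisely where the hypothesis that $N$ is typeable is indispensable. Stripping the $(Thin)$ applications off the given derivation of $\Gamma \vdashsub \weak{Fv(N)}{M}:\tS$ recovers a derivation $\Gamma' \vdashsub M:\tS$, but it provides no typing of $N$ whatsoever, since the type of $N$ has been discarded by the erasure. Here $x$ is erased in the redex, so the $(Subst)$ instance we must build has $n=0$, and its premise $\Delta_0 \vdash N:\tT_0$ can only be supplied by the standing assumption that $N$ is typeable; by Lemma~\ref{lem:dom-corr-sub} the domain of the resulting $\dztop$ is $Fv(N)$, so $\Gamma', \dztop$ reproduces exactly the environment $\Gamma$, whose $Fv(N)$-entries carry the type $\top$. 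The variable rule $x\isub{N}{x} \rclsubred N$ is the dual easy endpoint: there the given $\Gamma \vdashsub N:\tS$ already witnesses typeability, and one rebuilds the redex by $(Subst)$ with $n=1$ and $\tT_1=\tS$, using $\gtop \sqcap \Delta = \Delta$.

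The remaining obstacle is the duplication rule $(\cont{x}{x_1}{x_2}{M})\isub{N}{x} \rclsubred \cont{Fv[N]}{Fv[N_1]}{Fv[N_2]}{M\isub{N_1}{x_1}\isub{N_2}{x_2}}$, the technically heaviest case, mirroring the correspondingly heavy case of Lemma~\ref{lem:sub-typpres}(i). I would repeatedly apply parts (iii) and (v) of the Generation Lemma to the derivation of the contractum to extract a typing of $M$ with $x_1:\cap_i\tT_{1,i}$ and $x_2:\cap_j\tT_{2,j}$, together with typings of the renamed copies $N_1,N_2$; since $N_1$ and $N_2$ are obtained from $N$ purely by renaming free variables to fresh ones of the same type, each such typing transports back to a typing of $N$ itself. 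Contracting $x_1,x_2$ back into $x$ by $(Cont)$ gives $x$ the intersection type $(\cap_i\tT_{1,i})\cap(\cap_j\tT_{2,j})$, and a single final $(Subst)$, whose $N$-premises are exactly the transported typings, reconstructs the redex; the only bookkeeping left is to check that the merged environments coincide with $\Gamma$, which is forced because $Fv(N)=Fv(N_1)=Fv(N_2)$ up to the renaming.
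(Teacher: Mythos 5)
Your proposal is correct and follows essentially the same route as the paper: a case analysis on the applied rule of Figure~\ref{fig:sub-rcl}, using the Generation Lemma for $\rclsub\cap$ to decompose the typing of the contractum and reassembling a derivation of the redex ending in $(Subst)$. You also single out exactly the two cases the paper treats in detail --- the erasure rule, where the typeability of $N$ supplies the otherwise-missing premise $\Delta_0 \vdash N:\tT_0$ for the $n=0$ instance of $(Subst)$, and the duplication rule, where typings of the renamed copies $N_1,N_2$ are transported back to $N$.
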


\begin{proof}
The proof is by case analysis on the applied $\rclsubred$ reduction.
We consider only some interesting rules.
\begin{itemize}
\item Rule $(\weak{x}{M})\isub{N}{x}  \rclsubred   \weak{Fv(N)}{M}$.\\
Let $Fv(N)= \{x_1,...,x_m\}$. By assumption $N$ is typeable, thus  $\Delta_0 \vdash N:\tT_0$ for some $\Delta_0=\{x_1:\tT_1,...,x_m:\tT_m\}$. If $\Gamma \vdashsub \weak{Fv(N)}{M}:\tS$, then by applying $m$ times the Generation Lemma~\ref{prop:intGL-sub}(iv), we get  $\Gamma' \vdashsub M:\tS$, where $\Gamma=\Gamma',\dztop$.
On the other hand
\[
 \prooftree
    \prooftree
    \Gamma' \vdashsub M:\tS
    \justifies `G', x:\top \vdashsub \weak{x}{M}:\tS
    \using (Thin)
    \endprooftree
    `D_0 \vdash N:`t_0
    \justifies `G',\dztop \vdashsub (\weak{x}{M})\isub{N}{x}:\tS.
    \using (Subst)
   \endprooftree
\]
Notice that the rule $(Subst)$ can be applied because $\top=\cap_{i}^{n}\tT_i$ for $n=0$.
\item Rule $(\cont{x}{x_1}{x_2}{M})\isub{N}{x} \rclsubred  \cont{Fv[N]}{Fv[N_1]}{Fv[N_2]}{M\isub{N_1}{x_1}\isub{N_2}{x_2}}$.\\
Let $Fv[N]= [y_1,...,y_m]$. Then, since $N_1$ and $N_2$ are obtained from $N$ by renaming the free variables, we have that $Fv[N_1]= [y'_1,...,y'_m]$ and $Fv[N_2]= [y''_1,...,y''_m]$.
From the assumption $\Gamma \vdashsub \cont{Fv[N]}{Fv[N_1]}{Fv[N_2]}{M\isub{N_1}{x_1}\isub{N_2}{x_2}}:\tS$, by $m$ applications of Lemma~\ref{prop:intGL-sub}$(iii)$, we obtain that $\Gamma = \Gamma',y_1:\tT_1 \cap \tR_1,...,y_m:\tT_m \cap \tR_m$ and that $\Gamma', \Delta', \Delta'' \vdashsub M\isub{N_1}{x_1}\isub{N_2}{x_2}:\tS$, where $\Delta' = \{y'_1:\tT_1,...,y'_m:\tT_m\}$ and $\Delta'' = \{y''_1:\tR_1,...,y''_m:\tR_m\}$.
Now, by two applications of Lemma~\ref{prop:intGL-sub}$(v)$,
we get that $\Delta' = \dztopp \sqcap \Delta_1'...\sqcap \Delta'_{n_1}$, $\Delta'' = \dztoppp \sqcap \Delta_1''...\sqcap \Delta''_{n_2}$, where $\Delta'_i = \{y'_1:\tT_{1,i},...,y'_m:\tT_{m,i}\}$
for $i \in \{0,...,n_1\}$, $\Delta''_j = \{y''_1:\tR_{1,j},...,y''_m:\tR_{m,j}\}$ for $j \in \{0,...,n_2\}$, $\Delta'_i \vdashsub N_1:\cap_k^m \tT_{k,i}$, $\Delta''_j \vdashsub N_2:\cap_k^m \tR_{k,j}$, and finally $\Gamma', x_1:\cap_i^{n_1} \tT_i,x_2:\cap_j^{n_2}\tR_j \vdashsub M:\tS$ (we used the following abbreviations: $\cap_k^m \tT_{k,i} \equiv \tT_i,\,\cap_k^m \tR_{k,j} \equiv \tR_j$).
Now, since $N_1$ and $N_2$ are obtained from $N$ by renaming, for each derivation of the type of $N_1$ (respectively $N_2$) we can write an analogous derivation of the type of $N$, i.e.
$\Delta_i \vdashsub N:\tT_i$ for $i \in \{0,...,n_1\}$ and $\Delta_j \vdashsub N:\tR_j$ for $j \in \{0,...,n_2\}$, where $\Delta_i$ differ from $\Delta'_i$ (and respectively $\Delta_j$ from $\Delta''_j$) only by the domain ($Dom(\Delta_i)=Dom(\Delta_j)=\{y_1,...,y_m\}$). If we adopt abbreviations $\mathfrak{L}_1$ for the array of the first $n_1$ derivations, and $\mathfrak{L}_2$ for the array of the latter $n_2$ derivations, we have:
\[
 \prooftree
    \prooftree
    \Gamma', , x_1:\cap_i^{n_1} \tT_i,x_2:\cap_j^{n_2}\tR_j \vdashsub M:\tS
    \justifies `G', x:(\cap_i^{n_1} \tT_i) \cap (\cap_j^{n_2}\tR_j) \vdashsub \cont{x}{x_1}{x_2}{M}:\tS
    \using (Cont)
    \endprooftree
    \quad \mathfrak{L}_1 \quad \mathfrak{L}_2
    \justifies `G \vdashsub (\cont{x}{x_1}{x_2}{M})\isub{N}{x}:\tS.
    \using (Subst)
   \endprooftree
\]
The left hand side of the latter assignment holds because $`G', \dztop \sqcap\Delta_1 \sqcap ... \sqcap \Delta_{n_1+n_2} = `G', y_1:\top \cap (\cap_i^{n_1} \tT_{1,i}) \cap (\cap_j^{n_2}\tR_{1,j}),..., y_m:\top \cap (\cap_i^{n_1} \tT_{m,i}) \cap (\cap_j^{n_2}\tR_{m,j}) = `G', y_1:\tT_1 \cap \tR_1,...,y_m:\tT_m \cap \tR_m = \Gamma$.
\end{itemize}
\end{proof}

\begin{proposition}[Redex subject expansion]
\label{prop:sub-exp}
\rule{0in}{0in}
\begin{itemize}
\item[(i)]
If $\Gamma \vdash M\ISUB{N}{x}:\tS$ and $N$ is typeable,
then ${\Gamma \vdash (\lambda x.M)N:\tS}$.
\item[(ii)]
Let $M$ be a $\rcl$-redex other than a $\beta$-redex and  $M \to M'$. 
If $\Gamma \vdash M':\tS$, then $\Gamma \vdash M:\tS$.
\end{itemize}
\end{proposition}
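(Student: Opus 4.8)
Part~(i) is a genuine subject expansion statement, and the engine for it is the single-step expansion Lemma~\ref{prop:box-exp} in the auxiliary system $\rclsub\cap$: the plan is to translate the hypothesis into $\rclsub\cap$, reverse the whole $\rclsubredc$-normalisation of the substitution operator, recover a typing of $M\isub{N}{x}$, and then invert the $(Subst)$ rule to rebuild the $(\to_I)$/$(\to_E)$ pair that types $(\lambda x.M)N$. Part~(ii) is a routine case analysis over the non-$\beta$ rules of Figure~\ref{fig:red-rcl}, inverting each one with the Generation Lemma~\ref{prop:intGL} and reassembling the derivation of the redex.

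\textbf{Proof of (i).} Because $M,N\in\Rcl$, the term $M\ISUB{N}{x}=M\isub{N}{x}\subnf$ is substitution-free, so conservativity of $\rclsub\cap$ over $\rcl\cap$ turns the hypothesis $\Gamma\vdash M\ISUB{N}{x}:\tS$ into $\Gamma\vdashsub M\ISUB{N}{x}:\tS$. By Proposition~\ref{prop:box-term} there is a finite reduction $M\isub{N}{x}\rclsubredc M\ISUB{N}{x}$, which I would reverse one step at a time. Each step contracts a redex inside some $\rclsub$-context, so I first extend Lemma~\ref{prop:box-exp} to an arbitrary context by induction on the context --- exactly as in Lemma~\ref{lem:sub-typpres}(ii), inverting each constructor with Generation Lemma~\ref{prop:intGL-sub} --- and then iterate this contextual expansion along the whole reduction. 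The hypothesis that $N$ is typeable is what keeps the expansion going at erasure and duplication steps (it supplies the judgement $\Delta_0\vdash N:\tT_0$ needed when $x$ is erased, and the typed copies used when $x$ is duplicated). This yields $\Gamma\vdashsub M\isub{N}{x}:\tS$. Finally I invert $(Subst)$ with Generation Lemma~\ref{prop:intGL-sub}(v): there are $\Delta_i,\tT_i$ with $\Gamma',x:\cap_i^n\tT_i\vdashsub M:\tS$, with $\Delta_i\vdash N:\tT_i$ for all $i\in\{0,\dots,n\}$, and $\Gamma=\Gamma',\dztop\sqcap\Delta_1\sqcap\dots\sqcap\Delta_n$. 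As $M$ is substitution-free this derivation lives in $\rcl\cap$; rule $(\to_I)$ gives $\Gamma'\vdash\lambda x.M:\cap_i^n\tT_i\to\tS$ and rule $(\to_E)$ then gives $\Gamma\vdash(\lambda x.M)N:\tS$, as required.

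\textbf{Proof of (ii).} I would dispatch the eight non-$\beta$ rules one by one, in each case feeding $\Gamma\vdash M':\tS$ to Generation Lemma~\ref{prop:intGL} to expose the premises of the structural rules used in $M'$ and then re-deriving $M$ in the order its constructors prescribe. For the purely structural rules $(\gamma_1)$--$(\gamma_3)$, $(\omega_1)$--$(\omega_3)$ and $(\gamma\omega_1)$, reduction only permutes a contraction or an erasure with an abstraction, an application, or another such operator; the leaves of the derivation and their types are untouched, so the reassembly is immediate (these are the ``easy'' cases of Proposition~\ref{prop:sr}, read backwards). The single delicate case is $(\gamma\omega_2)$, $\cont{x}{x_1}{x_2}{\weak{x_1}{P}}\to P\ISUB{x}{x_2}$, where $P\ISUB{x}{x_2}$ is the renaming of $x_2$ to $x$. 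From $\Gamma\vdash P\ISUB{x}{x_2}:\tS$ I read off $\Gamma=\Gamma'',x:\tB$, and renaming $x$ back to $x_2$ gives $\Gamma'',x_2:\tB\vdash P:\tS$; then $(Thin)$ supplies $x_1:\top$ to type $\weak{x_1}{P}$, and $(Cont)$ contracts $x_1,x_2$ into $x$ with type $\top\cap\tB=\tB$, so $\Gamma'',x:\tB\vdash\cont{x}{x_1}{x_2}{\weak{x_1}{P}}:\tS$, i.e.\ $\Gamma\vdash M:\tS$.

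\textbf{Expected obstacle.} All the real work sits in~(i): Lemma~\ref{prop:box-exp} only expands a single top-level step, whereas normalising $M\isub{N}{x}$ performs many steps, almost all of them underneath binders and resource operators. Turning ``expand one step'' into ``expand to the normal form'' cleanly requires both the contextual closure of the expansion lemma and an induction on the length of the $\rclsubredc$-reduction, while checking that the side condition ``$N$ typeable'' is preserved throughout (it is, since $N$ is never rewritten). Once that machinery is in place, inverting $(Subst)$ and the case analysis of~(ii) are entirely mechanical.
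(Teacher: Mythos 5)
Your proposal is correct and follows essentially the same route as the paper: pass to $\rclsub\cap$ by conservativity, apply Lemma~\ref{prop:box-exp} repeatedly to climb back from $M\ISUB{N}{x}$ to a typing of $M\isub{N}{x}$, invert $(Subst)$ via Generation Lemma~\ref{prop:intGL-sub}(v), and reassemble with $(\to_I)$ and $(\to_E)$, with part~(ii) handled by the same case analysis that the paper delegates to the proof of Proposition~\ref{prop:sr}. Your explicit observation that the single-step, top-level Lemma~\ref{prop:box-exp} must first be closed under contexts and then iterated along the $\rclsubredc$-reduction is a detail the paper leaves implicit (it merely says ``using Lemma~\ref{prop:box-exp} multiple times''), so if anything you supply slightly more justification than the original.
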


%

\begin{proof}
(i) From $\Gamma \vdash M\ISUB{N}{x}:\tS$ we have that $\Gamma \vdashsub M\isub{N}{x}:\tS$ using Lemma~\ref{prop:box-exp} multiple times, since $M\ISUB{N}{x} = M\isub{N}{x} \subnf$, i.e.\  $M\isub{N}{x} \rclsubredc M\ISUB{N}{x}$. From ${\Gamma \vdashsub M\isub{N}{x}:\tS}$ by Lemma~\ref{prop:intGL-sub}(v) (Generation lemma) it follows that there exist $`D_i$ and $\tT_i,\;i  = 0, \ldots, n$ such that
$\Gamma', x:\cap_{i}^{n}\tT_i \vdashsub M:\tS$ and for all $i \in
\{0, \ldots, n\}$, $\;\Delta_{i} \vdash N:\tT_i$ and $\;\Gamma=\Gamma',
\dztop \sqcap \Delta_{1} \sqcap \ldots \sqcap \Delta_{n}$. Now:
\[
 \prooftree
    \prooftree
    \Gamma', x:\cap_{i}^{n}\tT_i \vdashsub M:\tS
    \justifies `G' \vdashsub \lambda x.M:\cap_{i}^{n}\tT_i \to \tS
    \using (\to_I)
    \endprooftree
    `D_0 \vdashsub N:`t_0 \quad ... \quad `D_n \vdashsub N:`t_n
    \justifies `G \vdashsub (\lambda x.M)N:\tS
    \using (\to_E)
    \endprooftree
\]
Since $M,N \in \Rcl$ we have that $`G \vdash (\lambda x.M)N:\tS$.

(ii)
By case analysis according to the applied reduction, similar to the proof of Proposition~\ref{prop:sr}.
\end{proof}

\begin{theorem}[SN $\Rightarrow$ typeability]\label{thm:SNtypable}
All strongly normalising $\rcl$-terms are typeable in the
$\rcl\cap$ system.
\end{theorem}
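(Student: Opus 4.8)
The plan is to prove the statement by induction on the length of the longest reduction sequence out of a strongly normalising term $M$, using the standard toolkit assembled above: typeability of normal forms (Proposition~\ref{prop:nf-are-typ}) as the base case, and redex subject expansion (Proposition~\ref{prop:sub-exp}) to climb back up one reduction step at a time. Since $M \in \SN$, the reduction tree of $M$ is finitely branching (the calculus has finitely many redexes in any term and finitely many rules) and has no infinite path, so by König's lemma it is finite; let $\nu(M)$ denote its height, i.e.\ the length of the longest reduction sequence issuing from $M$. The induction is on $\nu(M)$.

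\medskip

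\noindent\emph{Base case.} If $\nu(M)=0$ then $M$ is a normal form, $M \in \NF$, and Proposition~\ref{prop:nf-are-typ} gives $\Gamma \vdash M:\tS$ for some $\Gamma$ and $\tS$.

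\medskip

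\noindent\emph{Inductive step.} Suppose $\nu(M) = k+1$ and that the claim holds for all strongly normalising terms of smaller measure. Using the classification of terms in Lemma~\ref{lem:hft}, I locate the outermost redex of $M$ and perform \emph{one} reduction step $M \to M'$. Then $M'$ is strongly normalising with $\nu(M') \le k$, so by the induction hypothesis $\Gamma \vdash M':\tS$ for some $\Gamma,\tS$. Now I apply redex subject expansion to recover a typing of $M$. If the contracted redex is a $\beta$-redex, $M \equiv (\lambda x.P)N$ with $M' \equiv P\ISUB{N}{x}$, then Proposition~\ref{prop:sub-exp}(i) applies \emph{provided} $N$ is itself typeable; since every proper subterm of $M$ is again strongly normalising, the induction hypothesis (applied to $N$, whose measure is strictly smaller) yields typeability of $N$, and we conclude $\Gamma \vdash (\lambda x.P)N:\tS$. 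For any other redex, Proposition~\ref{prop:sub-exp}(ii) gives $\Gamma \vdash M:\tS$ directly from $\Gamma \vdash M':\tS$.

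\medskip

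\noindent The main obstacle, and the point that needs care rather than routine bookkeeping, is the $\beta$-case: redex subject expansion for $\beta$ is conditional on the \emph{separate} typeability of the argument $N$, because when $n=0$ in the rule $(\to_E)$ the type of $N$ is otherwise forgotten. One must therefore ensure that the induction is set up so that $N$'s typeability is available — which it is, since $N$ is a subterm of the strongly normalising term $M$ and hence strongly normalising with strictly smaller measure, so the induction hypothesis delivers it. A subsidiary subtlety is the choice of \emph{which} redex to contract: because the calculus works modulo the structural equivalence $\equiv_{\rcl}$ and type preservation holds under $\equiv_{\rcl}$ (Proposition~\ref{prop:sr}), the argument is insensitive to this choice, but one should reduce an outermost redex (guaranteed to exist by Lemma~\ref{lem:hft}) to keep the subterm $N$ unchanged and thus keep its typeability directly usable. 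With these two points handled, the induction closes and every $M \in \SN$ is typeable in $\rcl\cap$.
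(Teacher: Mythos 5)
Your overall strategy---induction on the length $\nu(M)$ of the longest reduction sequence out of $M$, with normal forms as the base case and subject expansion for the inductive step---matches the paper's, but there is a genuine gap in the inductive step. Proposition~\ref{prop:sub-exp} is a \emph{redex} subject expansion: part (i) applies when $M$ itself is the $\beta$-redex $(\lambda x.P)N$, and part (ii) explicitly hypothesises that $M$ \emph{is} a $\rcl$-redex, i.e.\ that the contraction happens at the root. You, however, contract ``the outermost redex of $M$'', which in general sits strictly inside $M$: for instance $M=\lambda x.N$ with a redex inside $N$, or $M=xT_1\ldots T_n$ with a redex inside some $T_i$, or $M=(\lambda x.N)PT_1\ldots T_n$ with $n\geq 1$, where the head redex is a proper subterm. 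For such $M$ you need subject expansion \emph{closed under contexts}, which is not what Proposition~\ref{prop:sub-exp} states and is not proved anywhere; establishing it would itself require an induction on the surrounding context using the Generation Lemma (and, when the redex lies in the argument position of an application, expanding each of the $n{+}1$ derivations $\Delta_i\vdash N:\tT_i$ separately). A second, related defect is that your induction measure alone cannot reach these cases by descending to subterms instead: for $M=\lambda x.N$ one has $\nu(N)=\nu(M)$, so the induction hypothesis on $\nu$ does not apply to $N$.

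The paper closes exactly this gap by running a \emph{subinduction on the structure of $M$} alongside the induction on $\nu$, and by splitting the terms that are neither normal forms nor redexes (via the classification of Lemma~\ref{lem:hft}) into two groups: those whose redexes are all internal, which are handled by typing the strongly normalising proper subterms through the structural subinduction and then reassembling a typing of $M$ directly with the rules $(\to_I)$, $(\to_E)$, $(Thin)$, $(Cont)$; and those with a head redex such as $(\lambda x.N)PT_1\ldots T_n$, where one reduction step strictly decreases $\nu$, the reduct's components are then typeable, and the typing of $M$ is rebuilt by repeated $(\to_E)$ rather than by a context-closed expansion lemma. Your treatment of the genuine root cases is fine---including the observation that the argument $N$ of a root $\beta$-redex has strictly smaller measure, so its typeability comes from the induction hypothesis---but as written your proof only covers terms that are themselves redexes or normal forms.
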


\begin{proof}
The proof is by induction on the length of the longest reduction
path out of a strongly normalising term $M$, with a subinduction
on the structure of $M$.

\begin{itemize}

\item If $M$ is a normal form, then $M$ is typeable by
  Proposition~\ref{prop:nf-are-typ}.

\item If $M$ is a $\rcl$-redex, i.e.\ $M \to M'$, then let $M'$ be
  its contractum. $M'$ is also strongly normalising, hence by
  IH it is typeable. Then $M$ is typeable, by
  Proposition~\ref{prop:sub-exp}. Notice that, if $M \equiv (\lambda x.N)P
  \to_{\beta} N\ISUB{P}{x} \equiv M'$, then, by IH, $P$ is typeable, since the length of
  the longest reduction path out of $P$ is smaller than that of $M$.

\item
Next, suppose that $M$ itself is neither a redex nor a normal form.
Then, according to Lemma~\ref{lem:hft}, $M$ has of one of the following forms:
\begin{enumerate}
\item[-] $\lambda x.N$ (where $N \not = \weak{y}{P}$ and $y\not = x$, since in this case $M$ would be a redex and previous case would apply),
\item[-] $xT_{1} \ldots T_{n}$,
\item[-] $\weak{x}{N}$,
\item[-] $(\lambda x.N)PT_{1} \ldots T_{n}$,
\item[-] $(\weak{x}{N})PT_{1} \ldots T_{n}$,
\item[-] $(\cont{x}{x_{1}}{x_{2}}{N})T_{1} \ldots T_{n}$,
\end{enumerate}
where $N, P, T_{1}, \ldots, T_{n}$, are \emph{not} all normal forms.
We can classify these forms into the following two categories:

\begin{enumerate}

\item[1)] Terms with internal redexes: $\lambda x.N$, $xT_{1} \ldots T_{n}$, $\weak{x}{N}$ and $(\cont{x}{x_{1}}{x_{2}}{N})T_{1} \ldots T_{n}$ when duplication cannot be propagated further into $N$, i.e. $N\equiv PQ,\,x_1 \in Fv(P),\,x_2 \in Fv(Q)$. In all these cases, we proceed by subinduction on the structure of $M$, since the length of the longest reduction path out of a subterm that contains a redex is equal to the length of the longest reduction path out of $M$. 

\item[2)] Terms with a leftmost redex: 
$(\lambda x.N)PT_{1} \ldots T_{n}$, $(\weak{x}{N})PT_{1} \ldots T_{n}$ and $(\cont{x}{x_{1}}{x_{2}}{N})T_{1} \ldots T_{n}$ when duplication can be propagated further into $N$. In these cases, by applying the leftmost reduction, we obtain a term with smaller length of the longest reduction path, therefore we can proceed using induction.

\end{enumerate}

In all the cases, after the application of induction (respectively subinduction) hypothesis in order to conclude typeability of subterms of $M$, it is easy to build the type of $M$.
We will prove some illustrative cases from both categories, the rest being similar.
\begin{itemize}

\item 
$M \equiv \lambda x.N$. Then, the only way to reduce $M$ is to reduce $N$ and the number of reductions in $N$ is equal to the number of reductions in $M$. Since $M$ is SN, $N$ is also $SN$. 
Since $N$ is a subterm of $M$, $N$ is typeable by subinduction and $\lambda x.N$ is typeable by $(\to_I)$.

\item
$M \equiv xT_{1} \ldots T_{n}$. Then $T_{1}, \ldots, T_{n}$ must be SN by subinduction, hence typeable. Then we build the type for $M$ by multiple application of the rule $(\to_{E})$, as in Proposition~\ref{prop:nf-are-typ}.

\item
$M \equiv (\cont{x}{x_1}{x_2}{PQ})T_{1} \ldots T_{n}$ with $x_1 \in Fv(P),\;x_2 \in Fv(Q)$. Again, each of $P, Q, T_{1}, \ldots, T_{n}$ must be SN by subinduction, hence typeable. We first use the rule $(Cont)$ to type $\cont{x}{x_1}{x_2}{PQ}$ and then we use the rule $(\to_{E})$, as in Proposition~\ref{prop:nf-are-typ} to type $M$.

\item
$M \equiv (\lambda x.N)PT_{1} \ldots T_{n}$. Then
$M \to M'$ where $M' \equiv N\ISUB{P}{x}T_{1} \ldots T_{n}$. $M'$ is also SN, hence typeable by induction hypothesis, since the longest reduction path out of $M'$ is smaller than the one out of $M$. 
This implies that $N\ISUB{P}{x}, T_{1}, \ldots, T_{n}$ are also SN and hence typeable by sub induction. Then we build the type for $M$ by multiple application of the rule $(\to_{E})$, as in Proposition~\ref{prop:nf-are-typ}. The cases $M \equiv (\weak{x}{N})PT_{1} \ldots T_{n}$ and $M \equiv (\cont{x}{x_{1}}{x_{2}}{N})T_{1} \ldots T_{n}$ are analogous.

\end{itemize}
\end{itemize}
\end{proof}

\subsection{Typeability $\Rightarrow$ SN in $\rcl \cap$}
\label{sec:reducibility}

In various type assignment systems, the \emph{reducibility method} can be used to prove many reduction properties of typeable terms.
It was first introduced by Tait~\cite{tait67} for proving the strong normalisation
of simply typed $\lambda$-calculus, and developed further to prove \emph{strong normalisation} of various calculi in~\cite{tait75,gira71,kriv90,ghil96,ghillika02}, \emph{confluence} (the Church-Rosser property)
of $\beta \eta$-reduction in~\cite{kole85,statI85,mitc90,mitc96,ghillika02}
and to characterise certain classes of $\lambda$-terms
such as strongly normalising, normalising, head normalising,
and weak head normalising terms (and their persistent versions) by their typeability in various intersection type systems in~\cite{gall98,dezahonsmoto00,dezaghil02,dezaghillika04}.

The main idea of the reducibility method is to interpret types by suitable sets of lambda terms which satisfy some realisability properties
and prove the soundness of type assignment with respect to these interpretations.
A consequence of soundness is that every typeable term belongs to the interpretation of its type, hence satisfying a desired reduction property.

In the sequel, we adapt the reducibility method in order to prove that terms typeable in $\rcl \cap$ are strongly normalising.

%
%
%
%

\begin{definition}
\label{def:fsto}
For $\vM, \vN \subseteq \LR$, we define $\vM \fsto \vN \subseteq
\LR$ as
$$\vM \fsto \vN =  \{M  \in \LR \mid \forall N \in \vM \quad MN \in \vN \}.$$
\end{definition}

\begin{definition}

\label{def:typeInt} The type interpretation  $\ti{-} : \mathsf{Types} \to
2^{\LR}$ is defined by:
\begin{itemize}
    \item[($I 1$)] $\ti{p} = \SN$, where $p$ is a type atom;
    \item[($I 2$)] $\ti{\tA \to \tS} = \ti{\tA} \fsto \ti{\tS}$;
    \item[($I 3$)]
    $\ti{\cap^n_i \tS_i}  = \left\{ \begin{array}{rr}
 \cap^n_i \ti{\tS_i} & \mbox {for } n > 0\\
 \SN & \mbox{ for } n=0.
\end{array}
\right.$
\end{itemize}

\end{definition}


Next, we introduce the notions of \emph{variable property},
$\beta$-\emph{expansion property}, $\omega$-\emph{expansion property}, $\gamma$-\emph{reduction property},
\emph{thinning property} and \emph{contraction property.}
The variable property and the $\beta$-expansion property correspond to the saturation property given in~\cite{bare92}.

\begin{definition}\label{def:var+sat}
\rule{0in}{0in}
\begin{itemize}

\item A set $\vX \subseteq \LR$ satisfies the \emph{variable property}, notation
$\VAR(\vX)$, if  $\vX$ contains all the terms of the form $x M_1 \ldots M_n$, where $n \geq 0$ and  $M_i`:
\SN$, $i=1,\ldots,n$.


\item A set $\vX \subseteq \LR$ satisfies the \emph{$`b$-expansion property}, notation
  $\SAT_{`b}(\vX)$ if
  \begin{displaymath}
    \prooftree
M_1\in \SN \;\ldots \;M_n \in \SN\;\; N\in\SN \qquad M\ISUB{N}{x} M_1\ldots M_n\in \vX
        \using \SAT_{`b}(\vX) %
\justifies  (`l x. M) \, N \, M_1\ldots M_n\in \vX.
\endprooftree
  \end{displaymath}

\item A set $\vX \subseteq \LR$ satisfies the \emph{$\omega$-expansion property}, notation
  $\SAT_{\omega}(\vX)$ if
  \begin{displaymath}
    \prooftree
M_1\in \SN \;\ldots \; M_n \in \SN\;\; N\in\SN \qquad \weak{x}{(MN)}M_1\ldots M_n\in \vX
        \using \SAT_{\omega}(\vX) %
\justifies  (\weak{x}{M}) \, N \, M_1\ldots M_n\in \vX.
\endprooftree
  \end{displaymath}
\item A set $\vX \subseteq \LR$ satisfies the \emph{$\gamma$-reduction property}, notation
  $\RED_{\gamma}(\vX)$ if
  \begin{displaymath}
    \prooftree
M_1\in \SN \;\ldots\; M_n \in \SN\;\; N\in\SN \qquad \cont{x}{x_{1}}{x_{2}}{(MN)}M_1\ldots M_n\in \vX
        \using \RED_{\gamma}(\vX) %
\justifies  (\cont{x}{x_{1}}{x_{2}}{M}) \, N \, M_1\ldots M_n\in \vX.
\endprooftree
  \end{displaymath}

\item A set $\vX \subseteq \LR$ satisfies the \emph{thinning property}, notation $\WEAK(\vX)$ if:
    \begin{center}
      \prooftree M \in \vX %
      \using \WEAK(\vX) %
      \justifies \weak{x}{M} \in \vX.
      \endprooftree
    \end{center}
\item A set $\mathcal{X} \subseteq \LR$ satisfies the \emph{contraction property}, notation $\CONT(\vX)$ if:
      \begin{center}
        \prooftree %
        M \in \vX %
        \using \CONT(\vX) %
        \justifies \cont{x}{y}{z}{M} \in \vX.
        \endprooftree
      \end{center}
  \end{itemize}
\end{definition}

\noindent\textbf{Remark.} In Definition~\ref{def:var+sat} it is not necessary to explicitly write the conditions about free variables since we work with  $\rcl$-terms.

\begin{definition}[$\circledR$-Saturated set] \label{lem:saturSet}
A set $\vX \subseteq \LR$ is called \emph{$\circledR$-saturated},
if $\vX\subseteq \SN$ and $\vX$ satisfies the variable, $\beta$-expansion,
$\omega$-expansion, $\gamma$-reduction,
thinning and contraction properties.
\end{definition}




\begin{proposition} \label{prop:saturSets}
Let $\vM, \vN \subseteq \LR$.
\begin{itemize}
    \item[(i)] $\SN$ is $\circledR$-saturated.
    \item[(ii)] If $\vM$ and $\vN$ are $\circledR$-saturated, then $\vM \fsto \vN$ is $\circledR$-saturated.
    \item[(iii)] If $\vM$ and $\vN$ are $\circledR$-saturated, then $\vM \cap \vN$ is $\circledR$-saturated.
    \item[(iv)] For all types $\varphi \in \tlam$, $\ti{\varphi}$ is $\circledR$-saturated.
\end{itemize}
\end{proposition}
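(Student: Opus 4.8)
The plan is to prove (i) directly, then obtain (ii) and (iii) by transporting the six closure conditions of Definition~\ref{def:var+sat} through the operators $\fsto$ and $\cap$, and finally derive (iv) by a routine induction on types using Definition~\ref{def:typeInt}. Essentially all the work is concentrated in (i); once $\SN$ is known to be $\circledR$-saturated, the remaining items are bookkeeping.

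For (i), the inclusion $\SN\subseteq\SN$ is vacuous and several conditions are cheap. The $\gamma$-reduction property $\RED_\gamma(\SN)$ is immediate, since its conclusion is a one-step $\gamma_2$-reduct of its premise term and reducts of strongly normalising terms are strongly normalising. The thinning property $\WEAK(\SN)$ is easy because in $\weak{x}{M}$ the erasure is outermost and cannot be fired by any $\omega$- or $\gamma\omega$-rule (those require the erasure to sit under an abstraction, application, or contraction), so every reduction of $\weak{x}{M}$ projects to a reduction of $M$; as $M\in\SN$, so is $\weak{x}{M}$, working modulo $\equiv_{\rcl}$. The variable property $\VAR(\SN)$ is the standard fact that a term $xM_1\ldots M_n$ has no head redex, so all reductions take place inside the SN arguments, and one concludes by induction on the sum of the lengths of their longest reduction sequences. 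The contraction property $\CONT(\SN)$ is slightly more delicate, since in $\cont{x}{y}{z}{M}$ the duplication may be propagated inward by the $\gamma$- and $\gamma\omega$-rules; here one observes that such propagation steps terminate by a structural measure and, interleaved with the finitely many internal reductions of $M$, cannot sustain an infinite reduction.

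The genuine obstacle is the $\beta$-expansion property $\SAT_{`b}(\SN)$: from $N\in\SN$, $M_i\in\SN$ and $M\ISUB{N}{x}M_1\ldots M_n\in\SN$ one must deduce $(\lambda x.M)NM_1\ldots M_n\in\SN$. I would prove it by a lexicographic induction on the pair consisting of the length of the longest reduction of $N$ and of the longest reduction of $M\ISUB{N}{x}M_1\ldots M_n$, analysing every one-step reduct of $(\lambda x.M)NM_1\ldots M_n$: firing the head $`b$-redex gives exactly $M\ISUB{N}{x}M_1\ldots M_n$, which is SN by hypothesis; a reduction inside $N$, inside $M$, or inside some $M_i$ leaves a term of the same shape with strictly smaller measure. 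The point requiring care is the reflection of reduction through substitution, i.e. that $N\rightarrowc N'$ forces $M\ISUB{N}{x}\rightarrowc M\ISUB{N'}{x}$ and $M\rightarrowc M'$ forces $M\ISUB{N}{x}\rightarrowc M'\ISUB{N}{x}$, where substitution is realised by the atomic steps of Figure~\ref{fig:sub-rcl}; linearity is what makes this work, since $x$ occurs exactly once in $M$ and $N$ is never duplicated, so each such reflection is a genuine nonempty reduction and the measure strictly decreases. The $\omega$-expansion property $\SAT_{\omega}(\SN)$ is handled in the same style but is easier, as the inserted redex is a single $\omega_2$-step.

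For (ii) I would verify each condition for $\vM\fsto\vN$ by applying a candidate member to an arbitrary $N\in\vM$ and invoking the matching property of $\vN$. First, $\vM\fsto\vN\subseteq\SN$: by $\VAR(\vM)$ with $n=0$ a fresh variable lies in $\vM$, so for $M\in\vM\fsto\vN$ we get $Mx\in\vN\subseteq\SN$, whence $M\in\SN$ because any infinite reduction of $M$ lifts to one of $Mx$. The conditions $\VAR$, $\SAT_{`b}$, $\SAT_{\omega}$ and $\RED_\gamma$ transfer directly, since after applying an arbitrary $P\in\vM\subseteq\SN$ the defining hypotheses for $\vM\fsto\vN$ become exactly the hypotheses of the same property for $\vN$ with one further SN argument. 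For thinning one passes from $MN\in\vN$ through $\WEAK(\vN)$ to $\weak{x}{(MN)}\in\vN$ and then through $\SAT_{\omega}(\vN)$ to $(\weak{x}{M})N\in\vN$; dually, contraction goes from $MN\in\vN$ through $\CONT(\vN)$ and $\RED_\gamma(\vN)$ to $(\cont{x}{y}{z}{M})N\in\vN$. Part (iii) is immediate: each condition for $\vM\cap\vN$ follows by applying it in $\vM$ and in $\vN$ separately and intersecting the conclusions, while $\vM\cap\vN\subseteq\vM\subseteq\SN$. Finally (iv) follows by induction on the type via Definition~\ref{def:typeInt}: $\ti{p}=\SN$ and $\ti{\top}=\SN$ are saturated by (i), $\ti{\tA\to\tS}=\ti{\tA}\fsto\ti{\tS}$ by (ii), and $\ti{\cap^n_i\tS_i}=\bigcap_i\ti{\tS_i}$ for $n>0$ by iterating (iii).
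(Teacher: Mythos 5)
Your overall route is the paper's: establish (i) property by property, transfer each closure condition through $\fsto$ and $\cap$ for (ii) and (iii), and induct on types for (iv). The treatments of $\VAR$, $\RED_{\gamma}$, $\WEAK$ and $\CONT$ in (i), the derivation of $\vM\fsto\vN\subseteq\SN$ by applying to a variable, the two-step derivations of thinning (via $\WEAK(\vN)$ then $\SAT_{\omega}(\vN)$) and contraction (via $\CONT(\vN)$ then $\RED_{\gamma}(\vN)$) for $\fsto$, and all of (iii) and (iv) coincide with the paper's proof.

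There is, however, a genuine gap in your argument for $\SAT_{\beta}(\SN)$, which is the one non-routine clause. You induct on the pair consisting of the longest reduction of $N$ and the longest reduction of $M\ISUB{N}{x}M_1\ldots M_n$, and you justify strict decrease by claiming that, by linearity, every internal reduction step reflects to a \emph{nonempty} reduction of the substituted term. That claim fails in the presence of erasure. Take $M=(\weak{x}{y})z$ and $N$ closed: then $M\ISUB{N}{x}=(\weak{Fv(N)}{y})z=yz$ since $Fv(N)$ is empty, and the internal step $M\rightarrow_{\omega_2}\weak{x}{(yz)}=M'$ reflects to \emph{zero} steps because $M'\ISUB{N}{x}=yz$ as well. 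For the one-step reduct $(\lambda x.M')NM_1\ldots M_n$ neither component of your measure has decreased, so the induction hypothesis does not apply. (A reduction $N\rightarrow N'$ can likewise leave $M\ISUB{N}{x}$ literally unchanged when the occurrence of $x$ in $M$ sits under an erasure, but there your first component still drops.) The commutation you actually state, namely that $M\rightarrowc M'$ forces $M\ISUB{N}{x}\rightarrowc M'\ISUB{N}{x}$, is correct but only yields non-increase. The repair is small: either add the longest reduction length of $M$ (or the sum over $M,N,M_1,\ldots,M_n$) as a further component of the measure, or argue as the paper does that an infinite reduction out of $(\lambda x.M)NM_1\ldots M_n$ must eventually fire the head redex (otherwise it is confined to finitely many strongly normalising subterms), and that the term produced at that point is a $\rightarrowc$-reduct of $M\ISUB{N}{x}M_1\ldots M_n$, contradicting its strong normalisation.
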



\begin{proof}
\rule{0in}{0in}
%
(i)
\begin{itemize}
\item
$\SN \subseteq \SN$ and  $\VAR(\SN)$ 
trivially hold.

\item
$\SAT_{\beta}(\SN)$.
Suppose that
	$M\ISUB{N}{x} M_1 \ldots M_n \in \SN$, $M_1, \ldots, M_n \in \SN$ and $N\in\SN$.
We know that $M\ISUB{N}{x} \in \SN$ as a subterm of a term in $\SN$ and $N \in \SN$, hence $M \in \SN$.  By assumption, $M_{1}, \ldots, M_{n} \in \SN$, so all reductions inside of these terms terminate. Starting from $(\lambda x.M)N M_1 \ldots M_n$, we can either contract the head redex and obtain $M\ISUB{N}{x} M_1 \ldots M_n$ which is SN by assumption, so we are done, or we can contract redexes inside $M, N, M_1, $ $\ldots, M_n$, which are all SN by assumption. All these reduction paths are finite. Consider a term obtained
after finitely many reduction steps
	$$(\lambda x.M)N M_1 \ldots M_n \rightarrow \ldots \rightarrow (\lambda x.M')N' M'_1 \ldots M'_n$$
where $M \rightarrowc M',\; N \rightarrowc N', \; M_{1} \rightarrowc M'_{1}, \ldots, M_{n} \rightarrowc M'_{n}.$ After contracting the head redex of $(\lambda x.M')N' M'_1 \ldots M'_n$ to $M'\ISUB{N'}{x} M'_1 \ldots M'_n$, we actually obtain a reduct of $M\ISUB{N}{x}M_1 \ldots M_n \in \SN$. Hence, $(\lambda x.M)N M_1 \ldots M_n \in \SN. $

\item
$\SAT_{\omega}(\SN)$. 
Suppose that
	$\weak{x}{(MN)} M_1 \ldots M_n \in \SN$, $M_1, \ldots, M_n \in \SN$.
Since $\weak{x}{(MN)}$ is a subterm of a term in $\SN$, we know that $MN \in \SN$ and consequently $M,N \in \SN$. By assumption, $M_{1}, \ldots, M_{n} \in \SN$, so the reductions inside of these terms terminate. Starting from $(\weak{x}{M})NM_{1}\ldots M_{n}$, we can either contract the head redex and obtain $\weak{x}{(MN)} M_1 \ldots M_n$ which is SN by assumption, so we are done, or we can contract redexes inside $M,N,M_{1}, \ldots, M_{n}$, which are all SN by assumption. All these reduction paths are finite. Consider a term obtained after finitely many reduction steps
	$$(\weak{x}{M})NM_{1}\ldots M_{n} \rightarrow \ldots \rightarrow (\weak{x}{M'})N'M'_{1}\ldots M'_{n}$$
where $M \rightarrowc M',\; M_{1} \rightarrowc M'_{1}, \ldots, M_{n} \rightarrowc M'_{n}. $ After contracting the head redex of $(\weak{x}{M'})N'M'_{1}\ldots M'_{n}$ to $\weak{x}{(M'N'})M'_{1}\ldots M'_{n}$, we obtain a reduct of $\weak{x}{(MN)} M_1 \ldots M_n \in \SN$. Hence, $(\weak{x}{M})NM_{1}\ldots M_{n} \in \SN. $

\item $\RED_{\gamma}(\SN)$. This is trivial, since by reducing a SN term we again obtain a SN term.

\item $\WEAK(\SN)$.
Suppose that
$M \in \SN$ and $x \not \in Fv(M)$. Then trivially $\weak{x}{M} \in \SN$, since no new redexes are formed.
\item
$\CONT(\SN)$.
Suppose that $M \in \SN,\; y \not = z,\; y, z \in Fv(M),\; x \not \in Fv(M) \setminus \{y,z\}$.
We prove that $\cont{x}{y}{z}{M} \in \SN$ by induction on the structure of $M$.
\begin{itemize}
\item $M = yz$. Then $\cont{x}{y}{z}{M} = \cont{x}{y}{z}{(yz)}$ which is a normal form.
\item $M = \weak{y}{z}$. Then $\cont{x}{y}{z}{M} = \cont{x}{y}{z}{(\weak{y}{z})} \rightarrow_{\gamma\omega_{2}} z\ISUB{x}{z}=x \in \SN$.
\item $M = \lambda w.N$. Then $N \in \SN$ and $\cont{x}{y}{z}{M} = \cont{x}{y}{z}{(\lambda w.N)} \rightarrow_{\gamma_{1}} \lambda w.\cont{x}{y}{z}{N} \in \SN$, since $\cont{x}{y}{z}{N} \in \SN$ by IH.
\item $M = PQ$. Then $P, Q \in \SN$ and if $y,z \not \in Fv(Q)$, $\cont{x}{y}{z}{M} = \cont{x}{y}{z}{(PQ)} \rightarrow_{\gamma_{2}} (\cont{x}{y}{z}{P})Q \in \SN$, since by IH $\cont{x}{y}{z}{P} \in \SN$.\\ The case of $\rightarrow_{\gamma_{3}}$ reduction when $y,z \not \in Fv(P)$ is analogous.
\item $M = \weak{w}{N}$. Then $\cont{x}{y}{z}{M} = \cont{x}{y}{z}{(\weak{w}{N})} \rightarrow_{\gamma\omega_{1}} \weak{w}{(\cont{x}{y}{z}{N})}$. By IH $\cont{x}{y}{z}{N} \in \SN$ and $\weak{w}(\cont{x}{y}{z}{N})$ does not introduce any new redexes.
\item $M = \weak{y}{N}$. Then $\cont{x}{y}{z}{M} = \cont{x}{y}{z}{(\weak{y}{N})} \rightarrow_{\gamma\omega_{2}} N\ISUB{x}{z} \in \SN$, since $N \in \SN$ by IH.
\item $M=\cont{y}{u}{v}{N}$. Then the only possible reduction is inside the term $N$ which is strongly normalising as a subterm of the strongly normalising term $M=\cont{y}{u}{v}{N}$.
\item $M=\cont{x_{1}}{y_{1}}{z_{1}}{N}$. Analogous to the previous case.
\end{itemize}
\end{itemize}

(ii)
  \begin{itemize}
  \item $\vM \fsto \vN \subseteq \SN$.  Suppose that $M \in \vM \fsto
    \vN$. 
    Then, for all $N \in \vM,\; MN \in \vN$. Since $\vM$ is $\circledR$-saturated,
    $\VAR(\vM)$ holds so $x \in \vM$ and $Mx \in \vN \subseteq \SN.$ From here we can
    deduce that $M \in \SN$.

  \item $\VAR(\vM \fsto \vN)$.  Suppose that $x$ is a variable and $M_1, \ldots, M_n \in \SN,
    n \geq 0$, such that $x \cap Fv(M_1) \cap \ldots \cap Fv(M_n) = \emptyset$. We need to show
    that $x M_1 \ldots M_n \in \vM \fsto \vN,$ i.e.\ $\forall N \in \vM$, $x M_1 \ldots
    M_nN \in \vN$. This holds since by assumption $\vM \subseteq \SN$ and $\vN$ is
    $\circledR$-saturated, i.e.\ $\VAR(\vN)$ holds.

   \item $\SAT_{\beta}(\vM \fsto \vN)$.  Suppose that $M\ISUB{N}{x} M_1 \ldots M_n \in \vM
    \fsto \vN$, $M_1, \ldots,$ $M_n \in \SN$ and $N \in \SN$. This means that for all $P \in \vM$,
    $M\ISUB{N}{x} M_1 \ldots M_nP \in \vN.$ But $\vN$ is $\circledR$-saturated, so
    $\SAT_{\beta}(\vN)$ holds and we have that for all $P \in \vN$, $(\lambda x.M)N M_1
    \ldots M_nP \in \vN.$ This means that $(\lambda x.M)N M_1 \ldots M_n \in \vM \fsto \vN. $

\item $\SAT_{\omega}(\vM \fsto \vN)$. Analogous to $\SAT_{\beta}(\vM \fsto \vN)$.

  \item $\RED_{\gamma}(\vM \fsto \vN)$. Suppose that $\cont{x}{x_{1}}{x_{2}}{(MN)} \in \vM \fsto \vN$.
  This means that for all $P \in \vM, \cont{x}{x_{1}}{x_{2}}{(MN)}P \in \vN$. But $\vN$ is
    $\circledR$-saturated, i.e.\ $\RED_{\gamma}(\vN)$ holds, hence $(\cont{x}{x_{1}}{x_{2}}{M})NP \in \vN$.
    This means that $(\cont{x}{x_{1}}{x_{2}}{M})N \in \vM \fsto \vN$.

  \item $\WEAK(\vM \fsto \vN)$.  Suppose that $M \in \vM \fsto \vN$ and $x \not \in
    Fv(M)$. This means that for all $N \in \vM, MN \in \vN$. But $\vN$ is
    $\circledR$-saturated, i.e.\ $\WEAK(\vN)$ holds, hence $\weak{x}{(MN)} \in \vN$. Also
    $\SAT_{\omega}(\vN)$ holds so we obtain for all $N \in \vM, (\weak{x}{M})N \in
    \vN$, i.e.\ $\weak{x}{M} \in \vM \fsto \vN$.
  \item $\CONT(\vM \fsto \vN)$.  Let $M \in \vM \fsto \vN$. We want to prove that
      ${\cont{x}{y}{z} {M}} \in \vM \fsto \vN$ for $y \not = z,\; y, z \in
      Fv(M)$ and ${x \not \in Fv(M)}$. Let $P$ be any term in $\vM$. We have to prove that
      $({\cont{x}{y}{z} {M}}) \, P \in \vN$.  Since $M \in \vM \fsto \vN$, we know that
      $M\,P\in\vN$. By assumption $\vN$ is $\circledR$-saturated so ${\cont{x}{y}{z}{(M\,P)} \in \vN}$. Using
      $\RED_{\gamma}(\vN)$ we obtain $({\cont{x}{y}{z} {M}}) \, P \in \vN$.
    Therefore ${\cont{x}{y}{z} {M}} \in \vM \fsto \vN$.
  \end{itemize}

(iii)
\begin{itemize}
\item
$\vM \cap \vN \subseteq \SN$ is straightforward, since $\vM, \vN \subseteq \SN$ by assumption.
\item
$\VAR(\vM \cap \vN)$. Since $\VAR(\vM)$ and $\VAR(\vN)$ hold, we have that
$\forall M_{1}, \ldots,$ $M_{n}$ $\in \SN$, $n \geq 0$:
$xM_1 \ldots M_{n} \in \vM$ and $xM_1 \ldots M_{n} \in \vN$. We deduce that
$\forall M_{1}, \ldots, M_{n} \in \SN$, $n \geq 0$:
$xM_1 \ldots M_{n} \in \vM \cap \vN$, i.e.\ $\VAR(\vM \cap \vN)$ holds.
\item $\SAT_{\beta}(\vM \cap \vN)$ is straightforward.
\item $\SAT_{\omega}(\vM \cap \vN)$ is straightforward.
\item
$\RED_{\gamma}(\vM \cap \vN)$. Suppose that $\cont{x}{x_{1}}{x_{2}}{(MN)} \in \vM \cap \vN$.
Since both $\vM$ and $\vN$ are $\circledR$-saturated $\RED_{\gamma}(\vM)$ and $\RED_{\gamma}(\vN)$ hold, hence
$(\cont{x}{x_{1}}{x_{2}}{M})N \in \vM$ and  $(\cont{x}{x_{1}}{x_{2}}{M})N \in \vM$, i.e.\ $(\cont{x}{x_{1}}{x_{2}}{M})N \in \vM \cap \vN$.

\item
$\WEAK(\vM \cap \vN)$.
Let $M \in \vM \cap \vN$ and $x \not \in Fv(M)$. Then $M \in \vM$ and $M \in \vN$. Since both $\vM$ and $\vN$ are $\circledR$-saturated $\WEAK(\vM)$ and $\WEAK(\vN)$ hold, hence $\weak{x}{M} \in \vM$ and $\weak{x}{M} \in \vN$, i.e.\ $\weak{x}{M} \in \vM \cap \vN$.
\item
$\CONT(\vM \cap \vN)$.
Suppose that $M \in \vM \cap \vN,\; y \not = z,\; y, z \in Fv(M),\; x \not \in Fv(M) \setminus \{y,z\}$. Since both $\vM$ and $\vN$ are $\circledR$-saturated $\CONT(\vM)$ and $\CONT(\vN)$ hold, hence $\cont{x}{y}{z}{M} \in \vM$ and $\cont{x}{y}{z}{M} \in \vN$, i.e.\ $\cont{x}{y}{z}{M} \in \vM \cap \vN$.
\end{itemize}

(iv)
By induction on the construction of $\varphi \in \mathsf{Types}$.
\begin{itemize}
    \item If $\varphi \equiv p$, $p$ a type atom, then $\ti{\varphi} = \SN$, so it is $\circledR$-saturated using (i).
    \item If $\varphi \equiv \tA \to \tS$, then $\ti{\varphi} = \ti{\tA} \fsto \ti{\tS}$. Since $\ti{\tA}$ and $\ti{\tS}$ are $\circledR$-saturated by assumption, we can use (ii).
    \item If $\varphi \equiv \cap_{i}^{n} \tS_{i}$, then we distinguish two cases:
    \begin{itemize}
    \item for $n>0$, $\ti{\varphi} =\ti{\cap_{i}^{n} \tS_{i}}  = \cap_{i}^{n} \ti{\tS_{i}}$ and for all $i=1, \ldots, n, \ti{\tS_{i}}$ are $\circledR$-saturated by assumption, so we can use (iii).
    \item for $n=0$, $\varphi \equiv \cap_{i}^{0} \tS_{i}$, then $\ti{\varphi} = \SN$ and we can use (i).
    \end{itemize}
\end{itemize}

\end{proof}


We further define a {\em valuation of terms\/} $\tei{-}_{\rho}: \LR \to \LR$ and the {\em semantic satisfiability relation\/}
$\models$ connecting the type interpretation with the term valuation.

\begin{definition}  \label{def:val}
Let $\rho : {\tt var} \to \LR$ be a valuation of term variables in
$\LR$. For ${M \in \LR}$, with $Fv(M) = \{x_1, \ldots, x_n\} $ the
\emph{term valuation} $\tei{-}_\rho : \LR \to \LR$ is defined as follows:
$$\tei{M}_{\rho} = M\ISUBM{\rho(x_1)}{x_1}{\rho(x_n)}{x_n}$$
\noindent providing that $x\not = y \; \Rightarrow \; Fv(\rho(x)) \cap Fv(\rho(y)) = \emptyset$.
\end{definition}

\noindent \textit{Notation:} $`r(N/x)$ is the valuation defined as:
$`r(N/x)(y) = `r(y)$ and ${`r(N/x)(x) = N}$ for $x \not= y$.

\begin{lemma}
\label{lemma:val}
\rule{0in}{0in}
 \begin{itemize}
    \item[(i)]
    $\tei{x}_\rho = \rho(x)$;
    \item[(ii)]
    $\tei{MN}_{\rho} = \tei{M}_{\rho}\tei{N}_{\rho}$;
    \item[(iii)]
    $\tei{\lambda x. M}_{\rho} N  \to_{\beta} \tei{M}_{\rho}\ISUB{N}{x} $ and
    $\tei{M}_{\rho}\ISUB{N}{x} = \tei{M}_{\rho(\isubs{N}{x})}$;
    \item[(iv)]
    $\tei{\weak{x}{M}}_{\rho} = \weak{Fv(\rho(x))}{\tei{M}}_{\rho}$;
    \item[(v)]
    $\tei{\cont{z}{x}{y}{M}}_{\rho} =
    \cont{Fv[N]}{Fv[N_{1}]}{Fv[N_{2}]}{\tei{M}_{\rho(\isubs{N_{1}}{x},\isubs{N_{2}}{y})}}$\\
    where $N=\rho (z)$ and $N_{1}$, $N_{2}$ are obtained from $N$ by renaming its free variables.
\end{itemize}
\end{lemma}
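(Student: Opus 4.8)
The plan is to prove the five items separately, in each case unfolding the term valuation $\tei{M}_\rho$ according to Definition~\ref{def:val} as the simultaneous substitution $M\ISUBM{\rho(x_1)}{x_1}{\rho(x_n)}{x_n}$ over $Fv(M) = \{x_1,\ldots,x_n\}$, and then pushing those substitutions through the outermost constructor of $M$ by means of the matching evaluation rule of Figure~\ref{fig:sub-rcl}. The global invariant that makes every step legal is the disjointness requirement of Definition~\ref{def:val}, namely $Fv(\rho(x)) \cap Fv(\rho(y)) = \emptyset$ for $x \neq y$: it guarantees that once a variable $x_i$ has been replaced by $\rho(x_i)$, the freshly introduced free variables in $Fv(\rho(x_i))$ never collide with the remaining substitution targets, so the simultaneous substitution decomposes over subterms and its component substitutions may be reordered freely by Proposition~\ref{prop:sub-com}.

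Items (i) and (ii) are immediate. For (i), $Fv(x) = \{x\}$ and the first rule of Figure~\ref{fig:sub-rcl} gives $\tei{x}_\rho = x\isub{\rho(x)}{x}\subnf = \rho(x)$. For (ii), the $(app)$ rule of Figure~\ref{fig:wf} forces $Fv(M)\cap Fv(N) = \emptyset$, so the free variables of $MN$ split cleanly into those of $M$ and those of $N$; the two application rules of Figure~\ref{fig:sub-rcl} route each substitution into the unique subterm in which its variable occurs, and by the disjointness invariant no routing decision is spoiled by an earlier one, whence $\tei{MN}_\rho = \tei{M}_\rho\,\tei{N}_\rho$.

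For the binder cases I read $\rho$ as acting as the identity on the relevant bound variables, which is exactly what the update notation $\rho(N/x)$ of the statement presupposes. In (iii), since $Fv(\lambda x.M) = Fv(M)\setminus\{x\}$, the abstraction rule $(\lambda x.M)\isub{N}{y} \rclsubred \lambda x.M\isub{N}{y}$ (with $x\neq y$) lets all of $\rho$'s substitutions pass under the $\lambda$, so $\tei{\lambda x.M}_\rho = \lambda x.\tei{M}_\rho$; one $\beta$-step then yields the first equation, and the second equation follows by absorbing the outer $\ISUB{N}{x}$ into the simultaneous substitution, which is permitted because $Fv(N)$ is disjoint from the ranges of $\rho$ and the order is immaterial by Proposition~\ref{prop:sub-com}. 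Item (iv) uses the two erasure rules of Figure~\ref{fig:sub-rcl}: $(\weak{y}{M})\isub{N}{x} \rclsubred \weak{y}{M\isub{N}{x}}$ pushes the substitutions for $Fv(M)$ under the erasure, while $(\weak{x}{M})\isub{N}{x} \rclsubred \weak{Fv(N)}{M}$ applied with $N = \rho(x)$ turns the erased $x$ into the block $\weak{Fv(\rho(x))}{}$, giving $\weak{Fv(\rho(x))}{\tei{M}_\rho}$.

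The delicate case, and the one I expect to be the main obstacle, is (v). Here the duplication rule $(\cont{z}{x}{y}{M})\isub{N}{z} \rclsubred \cont{Fv[N]}{Fv[N_1]}{Fv[N_2]}{M\isub{N_1}{x}\isub{N_2}{y}}$ simultaneously expands the single binder on $z$ into a block of duplications indexed by the free variables of $N = \rho(z)$ and replaces the bound variables $x, y$ by substitutions of two renamed copies $N_1, N_2$ of $N$. The careful point is to check that applying $\rho$ at $z$ produces precisely these copies $N_1, N_2$ together with the valuation update $\rho(N_1/x, N_2/y)$ named in the statement, and that the remaining substitutions of $\rho$ (those for $Fv(M)\setminus\{x,y\}$) commute past this step without clashing with the fresh names in $Fv(N_1), Fv(N_2)$. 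Both facts again rest on the disjointness hypothesis of Definition~\ref{def:val} and on the commutation of substitutions from Proposition~\ref{prop:sub-com}; once the renaming is matched to the valuation update, the right-hand side of (v) can be read off directly.
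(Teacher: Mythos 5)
Your proposal is correct and follows essentially the same route as the paper's proof: unfold $\tei{-}_\rho$ as the simultaneous substitution of Definition~\ref{def:val}, push the component substitutions through the outermost constructor using the evaluation rules of Figure~\ref{fig:sub-rcl}, and rely on the disjointness condition together with Proposition~\ref{prop:sub-com} to split and reorder them. The paper's proof carries out exactly these computations case by case, including the matching of the renamed copies $N_1,N_2$ to the updated valuation $\rho(\isubs{N_1}{x},\isubs{N_2}{y})$ in item (v).
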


\begin{proof}
\rule{0in}{0in}
 \begin{itemize}

    \item[(i)]
    $\tei{x}_{\rho} = x\ISUB{\rho(x)}{x} = x\isub{\rho(x)}{x}\subnf = \rho(x)$, since $x\isub{\rho(x)}{x} \rclsubred \rho(x).$

    \item[(ii)]
Without loss of generality, we can assume that $Fv(M) = \{x_{1}, \ldots, x_{i}\}$ and $Fv(N) = \{x_{i+1}, \ldots, x_{n}\}$. Then\\
$\tei{MN}_{\rho} = (MN)\ISUBM{\rho(x_{1})}{x_{1}}{\rho(x_{n})}{x_{n}} = \\$
     $M\ISUBM{\rho(x_{1})}{x_{1}}{\rho(x_{i})}{x_{i}} N\ISUBM{\rho(x_{i+1})}{x_{i+1}}{\rho(x_{n})}{x_{n}} = \tei{M}_{\rho}\tei{N}_{\rho}$.

    \item[(iii)]
	If  $Fv(\lambda x.M) = \{x_{1}, \ldots, x_{n}\}$, then\\
	$\tei{\lambda x.M}_{\rho} N =
	  (\lambda x.M)\ISUBM{\rho(x_{1})}{x_{1}}{\rho(x_{n})}{x_{n}} N = \\
	  (\lambda x.M\ISUBM{\rho(x_{1})}{x_{1}}{\rho(x_{n})}{x_{n}} )N \to
	  (M\ISUBM{\rho(x_{1})}{x_{1}}{\rho(x_{n})}{x_{n}} )\ISUB{N}{x}  =
	 \tei{M}_{\rho}\ISUB{N}{x}$.

$\tei{M}_{\rho(\isubs{N}{x})} = 	
M\ISUBM{\rho(\isubs{N}{x})(x_{1})}{x_{1}}{\rho(\isubs{N}{x})(x_{n})}{x_{n},\rho(\isubs{N}{x})(x)\dblsl x} = \\
M\ISUBM{\rho(x_{1})}{x_{1}}{\rho(x_{n})}{x_{n}}\ISUB{N}{x} =
\tei{M}_{\rho}\ISUB{N}{x}$.

    \item[(iv)]
	If  $Fv(M) = \{x_{1}, \ldots, x_{n}\}$, then $Fv(\weak{x}{M}) = \{x, x_{1}, \ldots, x_{n}\}$ and \\
	$\tei{\weak{x}{M}}_{\rho} =
	(\weak{x}{M})\ISUBM{\rho(x) \dblsl x, \rho(x_{1})}{x_{1}}{\rho(x_{n})}{x_{n}} = \\
         \weak{Fv(\rho(x))}{M}\ISUBM{\rho(x_{1})}{x_{1}}{\rho(x_{n})}{x_{n}} =
	\weak{Fv(\rho(x))}{\tei{M}_{\rho}}$ since\\
$  (\weak{x}{M})\isub{\rho(x)}{x}\isub{\rho(x_{1})}{x_{1}} \ldots \isub{\rho(x_{n})}{x_{n}} \rclsubred \\ (\weak{Fv(\rho(x))}{M})\isub{\rho(x_{1})}{x_{1}} \ldots \isub{\rho(x_{n})}{x_{n}} \rclsubredc \\
   \weak{Fv(\rho(x))}{M}\isub{\rho(x_{1})}{x_{1}} \ldots \isub{\rho(x_{n})}{x_{n}} .$

    \item[(v)]
	If  $Fv(M) = \{x,y,x_{1}, \ldots, x_{n}\}$, then $Fv(\cont{z}{x}{y}{M}) = \{z,x_{1}, \ldots, x_{n}\}$ and \\
	$\tei{\cont{z}{x}{y}{M}}_{\rho} =
	(\cont{z}{x}{y}{M})\ISUBM{\rho(z) \dblsl z,\rho(x_{1})}{x_{1}}{\rho(x_{n})}{x_{n}} =  \\
	(\cont{z}{x}{y}{M})\ISUB{N}{z}\ISUBM{\rho(x_{1})}{x_{1}}{\rho(x_{n})}{x_{n}} =  \\
   \cont{Fv[N]}{Fv[N_{1}]}{Fv[N_{2}]}{M\ISUB{N_{1}}{x}\ISUB{N_{2}}{y} \ISUBM{\rho(x_{1})}{x_{1}}{\rho(x_{n})}{x_{n}}}$\\since
$
	(\cont{z}{x}{y}{M})\isub{N}{z}\isub{\rho(x_{1})}{x_{1}} \ldots \isub{\rho(x_{n})}{x_{n}} \rclsubred  \\
	(\cont{Fv[N]}{Fv[N_{1}]}{Fv[N_{2}]}{M}\isub{N_{1}}{x}\isub{N_{2}}{y}) \isub{\rho(x_{1})}{x_{1}} \ldots \isub{\rho(x_{n})}{x_{n}} \rclsubred \\	 
   \cont{Fv[N]}{Fv[N_{1}]}{Fv[N_{2}]}{M\isub{N_{1}}{x}\isub{N_{2}}{y} \isub{\rho(x_{1})}{x_{1}} \ldots \isub{\rho(x_{n})}{x_{n}}} 	.$
	
On the other hand, denoting by $\rho'=\rho(\isubs{N_{1}}{x},\isubs{N_{2}}{y})$ we obtain

$\cont{Fv[N]}{Fv[N_{1}]}{Fv[N_{2}]}{\tei{M}_{\rho(\isubs{N_1}{x},\isubs{N_2}{y})}} = \\
\cont{Fv[N]}{Fv[N_{1}]}{Fv[N_{2}]}
{M\ISUBM{\rho'(x) \dblsl x, \rho'(y) \dblsl y, \rho'(x_{1})}{x_{1}}{\rho'(x_{n})}{x_{n}}} = \\
\cont{Fv[N]}{Fv[N_{1}]}{Fv[N_{2}]}
{M\ISUBM{N_{1} \dblsl x, N_{2} \dblsl y, \rho(x_{1})}{x_{1}}{\rho(x_{n})}{x_{n}}}$


\end{itemize}
\end{proof}

\begin{definition} \label{def:model}
\rule{0in}{0in}
\begin{itemize}
    \item [(i)] $\rho \models M : \tS \quad \iff\ \quad \tei{M}_\rho \in \ti{\tS}$;
    \item [(ii)] $\rho \models \Gamma \quad \iff\ \quad (\forall (x:\tA) \in \Gamma) \quad \rho(x)\in \ti{\tA}$;
    \item [(iii)] $\Gamma \models M : \tS \quad \iff\ \quad (\forall \rho, \rho \models \Gamma \Rightarrow
    		\rho \models M : \tS)$.
  \end{itemize}
\end{definition}

\begin{lemma}\label{lem:valcup}
  Let $`G "|=" M:\tS$ and $`D "|="
  M:\tT$, then $$`r "|=" `G \sqcap `D \mbox{ if and only if } `r "|=" `G \mbox{ and } `r "|=" `D.$$
\end{lemma}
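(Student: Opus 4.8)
The plan is to reduce the biconditional to a single pointwise set equality between type interpretations, and then to isolate the one genuinely non-trivial point, which concerns the empty intersection $\top$. First I would unfold Definition~\ref{def:model}(ii): the assertion $\rho \models \Gamma \sqcap \Delta$ says precisely that $\rho(x) \in \ti{(\Gamma \sqcap \Delta)(x)}$ for every $x \in Dom(\Gamma \sqcap \Delta)$, whereas the conjunction $\rho \models \Gamma$ and $\rho \models \Delta$ says that $\rho(x) \in \ti{\Gamma(x)}$ and $\rho(x) \in \ti{\Delta(x)}$, i.e.\ $\rho(x) \in \ti{\Gamma(x)} \cap \ti{\Delta(x)}$, for every such $x$. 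Since $Dom(\Gamma) = Dom(\Delta) = Dom(\Gamma \sqcap \Delta)$ and $(\Gamma \sqcap \Delta)(x) = \Gamma(x) \cap \Delta(x)$ by the definition of bases intersection, the whole equivalence follows at once once I establish, for arbitrary types $\alpha$ and $\beta$, the identity
\[
\ti{\alpha \cap \beta} = \ti{\alpha} \cap \ti{\beta}.
\]

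To prove this set equality I would write $\alpha = \cap_i^m \sigma_i$ and $\beta = \cap_j^k \tau_j$ and argue by cases on whether $m$ and $k$ vanish, using clause $(I3)$ of Definition~\ref{def:typeInt}. When $m,k > 0$ both sides are literally the intersection of all the $\ti{\sigma_i}$ and all the $\ti{\tau_j}$, because $\alpha \cap \beta$ is the intersection of the concatenated list of strict types and intersection of types is associative and commutative; so the two sides coincide. The delicate cases are those in which one factor is the empty intersection $\top$: if, say, $\beta = \top$, then $\alpha \cap \beta = \alpha$ by neutrality of $\top$, so the left-hand side is $\ti{\alpha}$, while the right-hand side is $\ti{\alpha} \cap \ti{\top} = \ti{\alpha} \cap \SN$.

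The main obstacle is therefore exactly the inclusion $\ti{\alpha} \subseteq \SN$, which makes $\ti{\alpha} \cap \SN = \ti{\alpha}$ and thereby closes the case $\beta = \top$ (and symmetrically $\alpha = \top$, as well as the case $\alpha = \beta = \top$, where both sides are $\SN$). This inclusion is supplied by Proposition~\ref{prop:saturSets}(iv): every interpretation $\ti{\varphi}$ is $\circledR$-saturated, and by Definition~\ref{lem:saturSet} a $\circledR$-saturated set is contained in $\SN$. With the pointwise equality $\ti{\alpha \cap \beta} = \ti{\alpha} \cap \ti{\beta}$ in hand, both directions of the stated biconditional are immediate, since $\rho(x) \in \ti{\Gamma(x) \cap \Delta(x)}$ holds for all $x$ if and only if $\rho(x) \in \ti{\Gamma(x)}$ and $\rho(x) \in \ti{\Delta(x)}$ for all $x$. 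The hypotheses $\Gamma \models M:\sigma$ and $\Delta \models M:\tau$ are used only to guarantee that $\Gamma$ and $\Delta$ share a common domain, so that $\Gamma \sqcap \Delta$ is defined; they otherwise play no role in the argument.
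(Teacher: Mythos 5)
Your proof is correct and follows the same route the paper intends: the paper's own proof consists of the single remark that the claim is a straightforward consequence of the definition of $\sqcap$, and your argument is exactly the honest unfolding of that remark, reducing the biconditional to the pointwise identity $\ti{\alpha \cap \beta} = \ti{\alpha} \cap \ti{\beta}$. You also correctly isolate the one point the paper leaves implicit, namely that the case where a factor is $\top$ requires $\ti{\alpha} \subseteq \SN$, which Proposition~\ref{prop:saturSets}(iv) supplies without circularity since it is established before the lemma.
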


\begin{proof}
  The proof is a straightforward consequence of the definition of bases intersection $\sqcap$.
\end{proof}

\begin{proposition} [Soundness of $\rcl \cap$] \label{prop:sound}
If $\Gamma \vdash M:\tS$, then $\Gamma \models M:\tS$.
\end{proposition}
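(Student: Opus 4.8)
The plan is to argue by induction on the derivation of $\Gamma \vdash M:\tS$, showing in each case that for every valuation $\rho$ with $\rho\models\Gamma$ one has $\tei{M}_\rho\in\ti{\tS}$. The two workhorses are Lemma~\ref{lemma:val}, which computes $\tei{-}_\rho$ through each term constructor, and Proposition~\ref{prop:saturSets}(iv), which tells us that every $\ti{\varphi}$ is $\circledR$-saturated and hence enjoys $\VAR$, $\SAT_\beta$, $\SAT_\omega$, $\RED_\gamma$, $\WEAK$, $\CONT$ and is contained in $\SN$. Lemma~\ref{lem:valcup} will be used to split a valuation satisfying an intersection of bases. In each inductive step the recipe is the same: use Lemma~\ref{lemma:val} to rewrite $\tei{M}_\rho$ in terms of the valuations of the immediate subterms, apply the IH to those, and close up using the matching $\circledR$-saturation property of $\ti{\tS}$.

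I would first dispatch the structural cases. For $(Ax)$, if $\rho\models x:\tS$ then $\tei{x}_\rho=\rho(x)\in\ti{\tS}$ by Lemma~\ref{lemma:val}(i). For $(\to_I)$, fix $\rho\models\Gamma$ and any $N\in\ti{\tA}$; since $\ti{\tA}\subseteq\SN$ we have $N\in\SN$, and $\rho(N/x)\models\Gamma,x:\tA$, so the IH gives $\tei{M}_{\rho(N/x)}\in\ti{\tS}$. By Lemma~\ref{lemma:val}(iii) this equals $\tei{M}_\rho\ISUB{N}{x}$ and $\tei{\lambda x.M}_\rho N\to_\beta\tei{M}_\rho\ISUB{N}{x}$, so the $n=0$ instance of $\SAT_\beta(\ti{\tS})$ yields $\tei{\lambda x.M}_\rho N\in\ti{\tS}$; as $N$ was arbitrary, $\tei{\lambda x.M}_\rho\in\ti{\tA}\fsto\ti{\tS}=\ti{\tA\to\tS}$. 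For $(Thin)$, from $\rho\models\Gamma,x:\top$ we get $\rho\models\Gamma$, hence $\tei{M}_\rho\in\ti{\tS}$ by IH; since $\tei{\weak{x}{M}}_\rho=\weak{Fv(\rho(x))}{\tei{M}_\rho}$ by Lemma~\ref{lemma:val}(iv), repeated use of $\WEAK(\ti{\tS})$ closes the case. For $(Cont)$, from $\rho\models\Gamma,z:\tA\cap\tB$ we have $N:=\rho(z)\in\ti{\tA}\cap\ti{\tB}$; the renamings $N_1,N_2$ of $N$ lie in $\ti{\tA}$ and $\ti{\tB}$ respectively (the interpretations are closed under injective renaming of free variables), so $\rho(N_1/x,N_2/y)\models\Gamma,x:\tA,y:\tB$ and the IH gives $\tei{M}_{\rho(N_1/x,N_2/y)}\in\ti{\tS}$; then Lemma~\ref{lemma:val}(v) together with repeated $\CONT(\ti{\tS})$ gives $\tei{\cont{z}{x}{y}{M}}_\rho\in\ti{\tS}$.

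The interesting case is $(\to_E)$, where the conclusion basis is $\Gamma,\dztop\sqcap\Delta_1\sqcap\ldots\sqcap\Delta_n$. Given $\rho$ satisfying it, Lemma~\ref{lem:valcup}, together with the disjointness $Fv(M)\cap Fv(N)=\emptyset$ and reading off the intersection of types via $(I3)$, yields $\rho\models\Gamma$ and $\rho\models\Delta_i$ for every $i\in\{1,\ldots,n\}$. When $n>0$ the IH on the premises gives $\tei{M}_\rho\in\ti{\cap_i^n\tT_i\to\tS}$ and $\tei{N}_\rho\in\ti{\tT_i}$ for each $i\in\{1,\ldots,n\}$, whence $\tei{N}_\rho\in\cap_i^n\ti{\tT_i}=\ti{\cap_i^n\tT_i}$ and therefore $\tei{M}_\rho\,\tei{N}_\rho\in\ti{\tS}$; Lemma~\ref{lemma:val}(ii) identifies this with $\tei{MN}_\rho$, finishing the subcase.

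The main obstacle is the degenerate subcase $n=0$, where $\tei{M}_\rho\in\ti{\top\to\tS}=\SN\fsto\ti{\tS}$, so that to conclude $\tei{MN}_\rho=\tei{M}_\rho\,\tei{N}_\rho\in\ti{\tS}$ all that is required, but also all that is available, is $\tei{N}_\rho\in\SN$. Here the valuation only guarantees $\rho(y)\in\ti{\top}=\SN$ for $y\in Fv(N)=Dom(\Delta_0)$, which is too weak to apply the IH $\Delta_0\models N:\tT_0$ to $\rho$ directly. I would instead apply that IH to the identity valuation on $Fv(N)$, which is legitimate because each $\ti{\Delta_0(y)}$ satisfies $\VAR$ and hence contains the variable $y$, obtaining $N\in\ti{\tT_0}\subseteq\SN$, and then deduce $\tei{N}_\rho\in\SN$ from the fact that linear substitution of strongly normalising terms into a strongly normalising term again yields a strongly normalising term. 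Establishing this last substitution-preserves-$\SN$ fact cleanly is the point where the real care is needed; everything else is routine bookkeeping driven by Lemma~\ref{lemma:val} and Proposition~\ref{prop:saturSets}.
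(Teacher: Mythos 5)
Your overall strategy coincides with the paper's: induction on the derivation, with Lemma~\ref{lemma:val} pushing the valuation through each term constructor and the $\circledR$-saturation properties from Proposition~\ref{prop:saturSets}(iv) closing each case. The cases $(Ax)$, $(\to_I)$, $(Thin)$, $(Cont)$ and the $n>0$ subcase of $(\to_E)$ are handled essentially as in the paper (your remark that the interpretations must be closed under renaming of free variables in the $(Cont)$ case is a point the paper leaves tacit, and it is correct). The problem lies entirely in your treatment of the $n=0$ subcase of $(\to_E)$.

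There you invoke the ``fact'' that substituting strongly normalising terms for the free variables of a strongly normalising term again yields a strongly normalising term. This is false, even for linear terms. Take $N\equiv zw$, a typeable normal form, and let $D\equiv\lambda u.\cont{u}{u_1}{u_2}{(u_1u_2)}$, which is a normal form and hence lies in $\SN=\ti{\top}$. For a valuation with $\rho(z)=D$ and $\rho(w)=D'$ (a renamed copy of $D$), Lemma~\ref{lemma:val}(ii) gives $\tei{N}_\rho=DD'\to_\beta D'_1D'_2\to_\beta\cdots$, an infinite reduction, so $\tei{N}_\rho\notin\SN$ although $N\in\SN$ and every $\rho(y)\in\SN$. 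This failure is exactly the reason the reducibility method interprets arrow types as function spaces $\ti{\tA}\fsto\ti{\tS}$ rather than simply as $\SN$, so no appeal to a substitution-preserves-$\SN$ lemma can close this subcase; your first step (evaluating the premise $\Delta_0\models N:\tT_0$ at the identity valuation to get $N\in\SN$) is fine, but it does not transfer to $\tei{N}_\rho$. The paper proceeds differently here: it derives $\tei{N}_\rho\in\ti{\top}$ directly from $\rho\models\dztop$ and the induction hypothesis for $\Delta_0\vdash N:\tT_0$ via Definition~\ref{def:model} and Lemma~\ref{lem:valcup}. You were right to sense that this is the delicate point of the whole argument --- $\rho\models\dztop$ only guarantees $\rho(y)\in\SN$ for $y\in Fv(N)$, which is strictly weaker than $\rho\models\Delta_0$ --- but the patch you propose is unsound, and as written your proof does not establish $\tei{N}_\rho\in\SN$ in the $n=0$ case.
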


\begin{proof}
By induction on the derivation of $\Gamma \vdash M:\tS$.

\begin{itemize}

\item
The last rule applied is $(Ax)$, i.e.\
$$\infer[(Ax)]{x:\tS \vdash x:\tS}{}$$
	We have to prove $x:\tS \models x:\tS$.
i.e.\ $(`A`r)\; `r(x) `: \ti{`s} "=>" \tei{x}_\rho`:\ti{`s}$.
This is trivial since according to Lemma~\ref{lemma:val}(i) $\tei{x}_\rho = \rho(x)$.

\item
The last rule applied is $(\to_{I})$, i.e.\
$$\infer[(\to_I)]{\Gamma \vdash \lambda x.M:\tA \to \tS}
                        {\Gamma,x:\tA \vdash M:\tS}$$
By the IH $\Gamma, x:\tA \models M: \tS$ (*).
Suppose that $\rho \models \Gamma$ and we want to show that $\rho \models \lambda x.M: \tA \to \tS$.
We have to show that
$$\tei{\lambda x.M}_{\rho} \in \ti{\tA \to \tS} = \ti{\tA} \fsto \ti{\tS}\;\; \mbox{ i.e.}\;\;
\forall N \in \ti{\tA}. \; \tei{\lambda x.M}_{\rho}N \in \ti{\tS}.$$
Suppose that $N \in \ti{\tA}$.
We have that $\rho(N/x) \models \Gamma, x:\alpha$ (**) since $\rho \models \Gamma$, $x \not\in \Gamma$ and $\rho(N/x)(x)=N \in \ti{\tA}$. From (*) and (**) we conclude that $\rho(N/x) \models M:\sigma$, hence we can conclude that $\tei{M}_{\rho(N/x)} \in \ti{\tS}$. Using Lemma~\ref{lemma:val}(iii) we get $\tei{\lambda x.M}_{\rho} N \to_{\beta} \tei{M}_{\rho}\ISUB{N}{x} = \tei{M}_{\rho(N/x)}$. Since $\tei{M}_{\rho(N/x)} \in \ti{\tS}$ and $\ti{\tS}$ is $\circledR$-saturated,
we obtain $\tei{\lambda x.M}_{\rho} N \in \ti{\tS}$. 





\item
The last rule applied is $(\to_{E})$, i.e.\
$$\infer[(\to_E)]{\Gamma, \dztop \sqcap \Delta_1 \sqcap ...
\sqcap \Delta_n \vdash MN:\tS}
                    {\Gamma \vdash M:\cap^n_i \tT_i \to \tS & \Delta_0 \vdash N:\tT_0\; \ldots\; \Delta_n \vdash N:\tT_n}$$
Let $`r$ be any valuation.
Assuming that
$`G "|-" M: \cap_{i}^{n} \tT_i \to `s, `D_0 "|-"  N:\tT_0,\ldots, `D_n"|-" N:\tT_n$, we have to prove that if
$`r "|=" `G, \dztop \sqcap `D_1  \sqcap ... \sqcap `D_n$, then $`r "|=" M\,N:`s$, i.e.\ $\tei{MN}_{\rho} \in \ti{\sigma}.$

By IH, $\Gamma \models M :\cap_i^n \tT_i \to \tS$ and $\Delta_0 \models N:\tT_0, \ldots, \Delta_n \models N:\tT_n$.
Assume that $\rho \models \Gamma, \dztop \sqcap \Delta_1 \sqcap \ldots \sqcap \Delta_{n}$. This means that $\rho \models \Gamma$ and $\rho \models \dztop \sqcap \Delta_1 \sqcap \ldots \sqcap \Delta_{n}.$
From $\rho \models \Gamma$ we deduce by Definition~\ref{def:model}~(iii) $\rho \models M: \cap_{i}^{n} \tT_i \to \tS$ and by Definition~\ref{def:model}~(i) $\tei{M}_{\rho} \in \ti{\cap_{i}^{n}
  \tT_{i} \to \tS}$. By Definition~\ref{def:val} $\tei{M}_{\rho} \in \bigcap_{i}^n
\tei{\tT_{i}} \fsto \tei{\tS}$ (*).
Using Lemma~\ref{lem:valcup} $\rho \models \dztop \sqcap \Delta_1 \sqcap ... \sqcap \Delta_n$ implies
$(\rho \models \dztop) \wedge (\bigwedge_{i=1}^n\rho \models \Delta_{i})$, hence by
Definition~\ref{def:model}~(i) and (iii) we get $(\tei{N}_{\rho} \in \ti{\top}) \wedge
{\bigwedge_{i=1}^n(\tei{N}_{\rho} \in \ti{\tT_{i}})}$, i.e.\ $\tei{N}_{\rho} \in \SN \ \cap\ \cap_{i}^{n }\ti{\tT_{i}} = \cap_{i}^{n }\ti{\tT_{i}}$ (**), since $\ti{\tau_i} \subseteq \SN$ by Proposition~\ref{prop:saturSets}(iv). From (*) and (**), using Definition~\ref{def:fsto} of $\fsto\!\!\!\!$, we can conclude that $\tei{M}_{\rho} \tei{N}_{\rho} \in \tei{\tS}$. Using Lemma~\ref{lemma:val}(ii) we can conclude that $\tei{M\,N}_{\rho} = \tei{M}_{\rho} \tei{N}_{\rho}
\in \tei{\tS}$  and by Definition~\ref{def:model}~(i) ${`r "|=" M\,N:`s}$.

\item
The last rule applied is $(Thin)$, i.e.,\
$$\infer[(Thin)]{\Gamma, x:\top \vdash \weak{x}{M}:\tS}
                    {\Gamma \vdash M:\tS}$$
By the IH $\Gamma \models M:\tS$.
Suppose that $\rho \models \Gamma, x:\top$ $\Leftrightarrow$  $\rho \models \Gamma$ and $\rho \models x:\top$. From $\rho \models \Gamma$ we obtain $\tei{M}_{\rho} \in \ti{\tS}$. Using multiple times the thinning property $\WEAK(\ti{\sigma})$ and Lemma~\ref{lemma:val}(iv) we obtain $\weak{Fv(\rho(x))}{\tei{M}_{\rho}} = \tei{\weak{x}{M}}_{\rho} \in \ti{\tS}$, since $Fv(\rho(x)) \cap Fv(\tei{M}_{\rho}) = \emptyset$.

\item
The last rule applied is $(Cont)$, i.e.,\
$$\infer[(Cont)]{\Gamma, z:\tA \cap \tB \vdash
\cont{z}{x}{y}{M}:\tS}
                    {\Gamma, x:\tA, y:\tB \vdash M:\tS}$$
By the IH $\Gamma, x:\tA, y:\tB \models M:\tS$.
Suppose that $\rho \models \Gamma, z:\tA \cap \tB$.
This means that $\rho \models \Gamma$ and $\rho \models z:\tA \cap \tB$ $\Leftrightarrow$ $\rho(z) \in \ti{\tA} \mbox{ and } \rho(z) \in \ti{\tB}$.
For the sake of simplicity let $\rho(z) \equiv N$. We define a new valuation $\rho'$ such that $\rho' = \rho (\isubs{N_{1}}{x}, \isubs{N_{2}}{y})$, where $N_{1}$ and $N_{2}$ are obtained by renaming the free variables of $N$.
Then $\rho' \models \Gamma, x:\tA, y:\tB$ since $x,y \not \in Dom(\Gamma)$, $N_{1} \in \ti{\tA}$ and $N_{2} \in \ti{\tB}$.
By the IH $\tei{M}_{\rho'} = \tei{M}_{\rho(\isubs{N_{1}}{x}, \isubs{N_{2}}{y})}  \in \ti{\tS}$.
Using the contraction property $\CONT(\ti{\sigma})$ and Lemma~\ref{lemma:val}(v) we have that
$\cont{Fv(N)}{Fv(N_{1})}{Fv(N_{2})}{\tei{M}_{\rho (\isubs{N_{1}}{x}, \isubs{N_{2}}{y})}} =
\tei{\cont{z}{x}{y}{M}}_{\rho} \in \ti{\tS}$.

\end{itemize}
\end{proof}

\begin{theorem} [$\SN$ for $\rcl \cap$] \label{th:typ=>SN}
If $\Gamma \vdash M:\tS$, then $M$ is strongly normalising, i.e. $M \in \SN$.
\end{theorem}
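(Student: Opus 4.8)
The plan is to specialise the reducibility machinery just assembled to the \emph{identity valuation}. First I would apply Soundness (Proposition~\ref{prop:sound}) to the hypothesis $\Gamma \vdash M:\tS$, obtaining $\Gamma \models M:\tS$. By Definition~\ref{def:model}(iii) this says that $\rho \models M:\tS$ for \emph{every} valuation $\rho$ with $\rho \models \Gamma$. The whole argument then reduces to exhibiting one such valuation whose associated term valuation returns $M$ unchanged, so that the semantic membership $\tei{M}_\rho \in \ti{\tS}$ becomes the syntactic statement $M \in \SN$.

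To this end I would take the identity valuation $\rho_0$ defined by $\rho_0(x)=x$ for every variable $x$. It satisfies the disjointness side condition of Definition~\ref{def:val}, since $Fv(\rho_0(x))=\{x\}$ and $Fv(\rho_0(y))=\{y\}$ are disjoint whenever $x\neq y$. I would then record two facts. The first is that $\tei{M}_{\rho_0}=M$: by linearity each free variable of $M$ occurs exactly once, and the simultaneous substitution $M\ISUBM{x_1}{x_1}{x_n}{x_n}$ replaces each occurrence of $x_i$ by $x_i$, leaving the term intact; this propagates structurally from the base rule $x\isub{N}{x}\rclsubred N$ of Figure~\ref{fig:sub-rcl}, the variable case being exactly Lemma~\ref{lemma:val}(i). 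The second fact is that $\rho_0 \models \Gamma$, i.e.\ $\rho_0(x)=x \in \ti{\tA}$ for every declaration $(x:\tA)\in\Gamma$. This is precisely where the variable property is used: by Proposition~\ref{prop:saturSets}(iv) every $\ti{\tA}$ is $\circledR$-saturated, hence satisfies $\VAR(\ti{\tA})$, and taking $n=0$ in the variable property yields $x \in \ti{\tA}$ (uniformly covering $\tA=\top$, single strict types, and intersections).

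With $\rho_0 \models \Gamma$ established, the relation $\Gamma \models M:\tS$ gives $\rho_0 \models M:\tS$, that is $\tei{M}_{\rho_0}\in\ti{\tS}$. Combining this with $\tei{M}_{\rho_0}=M$ and the inclusion $\ti{\tS}\subseteq\SN$ — which holds because $\ti{\tS}$ is $\circledR$-saturated (Proposition~\ref{prop:saturSets}(iv)) and $\circledR$-saturation entails being a subset of $\SN$ by Definition~\ref{lem:saturSet} — we conclude $M\in\SN$. I do not anticipate a genuine obstacle at this stage: all the real work has already been discharged in the closure properties of $\circledR$-saturated sets (Proposition~\ref{prop:saturSets}) and in Soundness. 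The only point requiring minor care is the bookkeeping identity $\tei{M}_{\rho_0}=M$, which hinges on the linear structure of $\rcl$-terms and on the behaviour of the substitution operator under self-substitution.
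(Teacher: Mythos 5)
Your proposal is correct and follows essentially the same route as the paper's own proof: apply Soundness, instantiate at the identity valuation $\rho_0(x)=x$, use the variable property (case $n=0$) of the $\circledR$-saturated interpretations to get $\rho_0\models\Gamma$, and conclude via $\tei{M}_{\rho_0}=M$ and $\ti{\tS}\subseteq\SN$. The extra care you take over the disjointness side condition and the identity $\tei{M}_{\rho_0}=M$ is a welcome but minor elaboration of what the paper leaves implicit.
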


\begin{proof}
Suppose $\Gamma \vdash M:\tS$. By Proposition~\ref{prop:sound}\; $\Gamma \models M:\tS$. According to Definition~\ref{def:model}(iii), this means that $ (\forall\rho)\; \rho \models \Gamma \quad "=>"\quad \rho \models M : \tS$. We can choose a particular $\rho_{0}(x) = x$ for all $x \in {\tt var}$. By Proposition~\ref{prop:saturSets}(iv), $\ti{\tS}$ is $\circledR$-saturated for each type $\tS$, hence $\tei{x}_{\rho_0} = x  \in \ti{\tS}$ (variable condition for $n=0$). Therefore, $\rho_{0} \models \Gamma$ and we can conclude that $\tei{M}_{\rho_{0}} \in \ti{\tS}$. On the other hand, $M = \tei{M}_{\rho_{0}}$ and $\ti{\tS} \subseteq \SN$ (Proposition~\ref{prop:saturSets}), hence $M \in \SN$.
\end{proof}

Finally, we can give a characterisation of strong normalisation in $\rcl$-calculus.

\begin{theorem}
In $\rcl$-calculus, the term $M$ is strongly normalising if and only if it is typeable in $\rcl\cap$.
\end{theorem}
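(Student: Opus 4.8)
The plan is to observe that this biconditional splits cleanly into its two implications, each of which coincides exactly with one of the two main results already established. Thus the whole proof reduces to assembling Theorem~\ref{thm:SNtypable} and Theorem~\ref{th:typ=>SN}, and no new machinery is needed.

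For the forward implication, I would suppose that $M$ is strongly normalising, i.e. $M \in \SN$, and invoke Theorem~\ref{thm:SNtypable} directly to obtain a basis $\Gamma$ and a strict type $\tS$ with $\Gamma \vdash M:\tS$, so that $M$ is typeable in $\rcl\cap$. The substance of that direction lies in combining typeability of normal forms (Proposition~\ref{prop:nf-are-typ}) with redex subject expansion (Proposition~\ref{prop:sub-exp}), carried out by induction on the length of the longest reduction path with a subinduction on the structure of $M$, the head-form classification of Lemma~\ref{lem:hft} organising the case analysis.

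For the backward implication, I would suppose that $M$ is typeable, say $\Gamma \vdash M:\tS$, and invoke Theorem~\ref{th:typ=>SN} to conclude $M \in \SN$, i.e.\ that $M$ is strongly normalising. The engine behind this direction is the reducibility method: one interprets each type as an $\circledR$-saturated set of terms (Definition~\ref{def:typeInt}, Proposition~\ref{prop:saturSets}), proves soundness of the type system with respect to this interpretation (Proposition~\ref{prop:sound}), and then reads off strong normalisation by evaluating at the identity valuation $\rho_0(x)=x$, using $\ti{\tS} \subseteq \SN$.

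Since each direction is exactly one of the two cited theorems, there is no genuine obstacle remaining at this level. The real difficulty was already absorbed into those two proofs: on the reducibility side it is concentrated in establishing the saturation properties, in particular the contraction property $\CONT(\SN)$ (proved by induction on term structure, with the $(\gamma\omega_2)$ and $(\gamma_i)$ reductions driving the key cases) and the closure of $\fsto$ under all six saturation conditions; on the expansion side it is concentrated in the interaction between the measure-based termination of $\rclsubredc$ and the case analysis by head form. Here I would simply cite both theorems and note that together they yield the stated equivalence.
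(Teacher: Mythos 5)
Your proposal is correct and matches the paper's proof exactly: the paper also obtains this theorem as an immediate consequence of Theorem~\ref{thm:SNtypable} (SN implies typeability) and Theorem~\ref{th:typ=>SN} (typeability implies SN). Your accompanying summary of the machinery behind each direction accurately reflects how those two theorems are established in the paper.
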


\begin{proof}
Immediate consequence of Theorems~\ref{th:typ=>SN} and~\ref{thm:SNtypable}.
\end{proof}


\section{Related work and conclusions}
\label{sec:conclusion}

The idea to control the use of variables can be traced  back to Church's $\lambda I$-calculus~\cite{bare84} and Klop's extension of $\lambda$-calculus~\cite{klop80}.
Currently, there are several different lines of research in resource aware term calculi.

Van Oostrom~\cite{oost01} and later Kesner and Lengrand~\cite{kesnleng07}, applying ideas from linear logic~\cite{LL}, proposed to extend $\lambda$-calculus with explicit substitution~\cite{kesnleng07} with operators to control the use of variables (resources). Their linear $\llxr$-calculus is an extension of the ${\lambda \mathsf{x}}$-calculus~\cite{BlooRose95,RoseBlooLang:jar2011}
with operators for linear substitution, erasure and duplication, preserving at the same time confluence
and full composition of explicit substitutions. 
The
simply typed version of this calculus corresponds to the
intuitionistic fragment of linear logic proof-nets, according to
Curry-Howard correspondence, and it enjoys strong normalisation
and subject reduction. Generalising this approach, Kesner and Renaud~\cite{kesnrena09,kesnrena11}
developed the {\em prismoid of resources}, a system of eight calculi parametric over the explicit
 and implicit treatment of substitution, erasure and duplication.   

On the other hand, process calculi and their relation to $\lambda$-calculus by Boudol~\cite{boud93} initialised
investigations in resource aware non-deterministic $\lambda$-calculus with multiplicities and a generalised notion of application~\cite{boudcurilava99}. The theory was connected to linear logic via differential
$\lambda$-calculus by Ehrhard and Regnier in~\cite{ehrhregn03} and typed with non-idempotent intersection types by Pagani and Ronchi Della Rocha in~\cite{pagaronc10}. An account of this approach is given in~\cite{alvefernflorimack14}.

Resource control in sequent
calculus corresponding to classical logic was proposed by \v Zuni\'c in~\cite{zunicPHD}. Resource control in sequent $\lambda$-calculus was investigated in~\cite{ghilivetlesczuni11}.

Intersection types in the presence of resource control were first introduced in ~\cite{ghilivetlikalesc11}. Later on non-idempotent intersection types for $\llxr$-calculus were introduced by Bernadet and Lengrand in~\cite{bernleng13}. Their proof of strong normalisation takes advantage of intersection types being non-idempotent. 

Our contribution extends the work of \cite{ghilivetlikalesc11}, accordingly
we follow the notation of \cite{zunicPHD} and \cite{ghilivetlikalesc11}, along the lines of~\cite{oost01}. We have proposed an intersection type assignment system for the
resource control lambda calculus $\rcl$,
which gives
a complete characterisation of strongly normalising terms of the
$\rcl$-calculus. The  proofs do not rely on any assumption about idempotence, hence they can be applied both to idempotent and non-idempotent intersection types.

This paper expands the range of
the intersection type techniques and combines different methods in
the strict type environment. It should be noticed that the
strict control on the way variables are introduced determines the
way terms are typed in a given environment. Basically, in a given
environment no irrelevant intersection types are introduced. The
flexibility on the choice of a type for a term, as it is used in
rule $(\to_E)$ in Figure~\ref{fig:typ-rcl-int}, comes essentially
from the choice one has in invoking the axiom. 

The presented calculus is a good
candidate to investigate the computational content of
substructural logics~\cite{schrdose93} in natural deduction style and relation to substructural type systems~\cite{walk05}.
The motivation for these logics comes from
philosophy (Relevant Logics), linguistics (Lambek Calculus), 
computing (Linear Logic). Since the basic idea of resource control
is to explicitly handle structural rules, the control operators
could be used to handle the absence of (some) structural rules in
substructural logics such as thinning, weakening, contraction,
commutativity, associativity. This would be an interesting
direction for further research.
Another direction involves the
investigation of the use of intersection types, being a powerful
means for building models of lambda
calculus~\cite{barecoppdeza83,dezaghillika04}, in constructing
models for substructural type systems.
Finally, one may wonder how the
strict control on the duplication and the erasure of variables
influences the type reconstruction of
terms~\cite{DBLP:journals/entcs/BoudolZ05,kfouwell04}.

\textbf{Acknowledgements:} We would like to thank
anonymous referees of a previous version of this paper for their careful reading and many valuable
comments, which helped us to improve the paper.
We would also like to thank Dragi\v sa \v Zuni\' c for
participating in the earlier stages of the work.  This work is partially supported by the Serbian Ministry of Science - project ON174026 and by a bilateral project between Serbia and France within the ``Pavle Savi\'c" framework. 


\end{document}